\newdimen\bibindent
\renewenvironment{thebibliography}[1]
  {\par\bigskip 
   \section*{\refname} 
   \@mkboth{\MakeUppercase{\refname}}{\MakeUppercase{\refname}}
   \list{\@biblabel{\arabic{enumi}}}%
        {\settowidth\labelwidth{\@biblabel{#1}}%
         \leftmargin\labelwidth
         \advance\leftmargin\labelsep
         \advance\leftmargin\bibindent
         \itemindent -\bibindent
         \listparindent \itemindent
         \parsep \z@
         \usecounter{enumi}%
         \let\p@enumi\@empty
         }%
     \renewcommand\newblock{\hskip .11em \@plus.33em \@minus.07em}%
     \sloppy\clubpenalty4000\widowpenalty4000%
     \frenchspacing\footnotesize}
  {\def\@noitemerr
     {\@latex@warning{Empty `thebibliography' environment}}%
   \endlist}
\g@addto@macro\bfseries{\boldmath}
\tikzstyle{shaded}=[fill=red!10!blue!20!gray!30!white]
\tikzstyle{unshaded}=[fill=white]
\tikzstyle{empty box}=[circle, draw, thick, fill=white, opaque, inner sep=2mm]
\tikzstyle{annular}=[scale=.7, inner sep=1mm, baseline]
\tikzstyle{rectangular}=[scale=.75, inner sep=1mm, baseline=-.1cm]
\newcommand{\sline}[5]{
\draw[color=white,line width = 4pt] (#1,#2) -- (#3,#4);
\draw[color=#5,ultra thick] (#1,#2) -- (#3,#4);
}
\newcommand{\aline}[5]{
\draw[color=#5,ultra thick] (#1,#2) -- (#3,#4);
}
\newcommand{\lmatr}[1]{\ensuremath{\left[\begin{array}{#1}}}
\newcommand{\rmatr}{\ensuremath{\end{array}\right]}}
\newcommand{\lmatrp}[1]{\ensuremath{\left(\begin{array}{#1}}}
\newcommand{\rmatrp}{\ensuremath{\end{array}\right)}}
\newcommand{\lmatrd}[1]{\ensuremath{\left|\begin{array}{#1}}}
\newcommand{\rmatrd}{\ensuremath{\end{array}\right|}}
\newcommand{\df}{\ensuremath{\mathrel{\mathop:}=}}
\newcommand{\id}{\mathrm{id}}
\newcommand{\diag}{\mathrm{diag}}
\newcommand{\supp}{\mathrm{supp}}
\theoremstyle{definition}
\newtheorem{thm}{Theorem} [section]
\newtheorem{fact}[thm]{Fact}
\newtheorem{lem}[thm]{Lemma}
\newtheorem{rem}[thm]{Remark}
\newcommand{\lrsquiggly}{\xymatrix{{}\ar@{<~>}[r]&{}}}
\newcommand{\rsquiggly}{\xymatrix{{}\ar@{~>}[r]&{}}}
\newcommand{\ModB}{\ensuremath{\mathrm{Mod}_{\mathcal{B}}}}
\newcommand{\TLJcat}{\ensuremath{\mathcal{T}\!\mathcal{L}\mathcal{J}}}
\numberwithin{equation}{section}
\begin{document}

\title{Realizing the braided Temperley--Lieb--Jones\\ C*-tensor categories as 
Hilbert C*-modules}

\author{Andreas N\ae s Aaserud and David E.\ Evans}

\maketitle

\begin{abstract}
We associate to each Temperley--Lieb--Jones C*-tensor category 
$\TLJcat(\delta)$ with parameter $\delta$ in the discrete range 
$\{2\cos(\pi/(k+2))\,:\,k=1,2,\ldots\}\cup\{2\}$ a certain C*-algebra 
$\mathcal{B}$ of compact operators. We use the unitary braiding on 
$\TLJcat(\delta)$ to equip the category $\ModB$ of (right) Hilbert 
$\mathcal{B}$-modules with the structure of a braided C*-tensor category. We 
show that $\TLJcat(\delta)$ is equivalent, as a braided C*-tensor category, to 
the full subcategory $\ModB^f$ of $\ModB$ whose objects are those modules which 
admit a finite orthonormal basis. Finally, we indicate how these considerations 
generalize to arbitrary finitely generated rigid braided C*-tensor categories.
\end{abstract}

\section{Introduction}\label{sec:Intro}

In the present paper, the authors recast the Temperley--Lieb--Jones C*-tensor 
category $\TLJcat(\delta)$ with parameter $\delta$ in Jones' discrete range 
$\{2\cos(\pi/(k+2))\,:\,k = 1,2,\ldots\}\cup \{2\}$ (cf.\ \cite{J}) as a 
C*-tensor category of (right) Hilbert C*-modules, drawing inspiration from the 
work of Erlijman--Wenzl \cite{EW}, Hartglass--Penneys \cite{HP1} and Yuan 
\cite{Yuan}, among others.

\subsection{Background}

Temperley--Lieb algebras first appeared in the work of Temperley and Lieb 
\cite{TL} on Potts and ice-type models in statistical mechanics, in which they 
were defined in terms of generators and relations. These relations reappeared 
in the work of Jones \cite{J}, in which (quotients of) Temperley--Lieb algebras 
manifested as subalgebras of higher relative commutants of (von Neumann) 
subfactors (see also \cite{GHJ}). A description of the Temperley--Lieb algebras 
in terms of what are now known as Temperley--Lieb diagrams first appeared in 
the work of Kauffman \cite{Kau} (see also \cite{JR}), who was studying a knot 
invariant introduced by Jones \cite{J85}.
Later, it was realized that a diagrammatic description could be given for 
tensor categories (cf.\ e.g.\ \cite{Tur}) and standard invariants of subfactors 
(cf.\ Jones' introduction of subfactor planar algebras \cite{J2} based on the 
work of Popa \cite{Po}). In particular, diagrammatic Temperley--Lieb--Jones 
C*-tensor categories were considered (cf.\ e.g.\ \cite{Yam}, \cite{Coo}, 
\cite{EP12}), which can be viewed as arising from the Temperley--Lieb--Jones 
factor planar algebras (cf.\ \cite{J2}; see also \cite{MPS}, \cite{BHP}). When 
the parameter $\delta$ is confined to $\{2\cos(\pi/(k+2))\,:\,k=1,2,\ldots\}$, 
the associated Temperley--Lieb--Jones C*-tensor categories $\TLJcat(\delta)$ 
are known to describe (up to equivalence) categories that have appeared in 
various contexts, including
\begin{itemize}
    \item representations of affine Lie algebras and vertex operator algebras 
    arising from $\mathrm{SU}(2)$ Wess--Zumino--Witten models at finite levels 
    $k=1,2,\ldots$ in 2D conformal field theory (cf.\ e.g.\ \cite{HL} and the 
    references therein);
    \item representations of the loop group $L\mathrm{SU}(2)$ at finite levels $k=1,2,\ldots$ (cf.\ \cite{PS}, \cite{Wa2});
    \item representations of quantum $\mathrm{SU}(2)$ at certain roots of unity (cf.\ \cite{We2}).
\end{itemize}
We refer the reader to \cite{He} for an overview and further references. It should also be mentioned that $\TLJcat(\delta)$ can be recovered as the C*-tensor category of $M$-bimodules arising from certain subfactors $(N\subset M)$ (cf.\ \cite{Xu}; see also Remark 8.2 in \cite{PV}). A special feature of the C*-tensor categories $\TLJcat(\delta)$ with $\delta\in \{2\cos(\pi/(k+2))\,:\,k = 1,2,\ldots\}\cup \{2\}$ is the presence of a unitary braiding (cf.\ e.g.\ \cite{MPS}), which we will use extensively in the present paper.

\subsection{Motivation}

Ultimately, our goal of describing $\TLJcat(\delta)$ in terms of Hilbert C*-modules is motivated by a connection with $K$-theory (cf.\ e.g.\ \cite{Bl}, \cite{RLL}, \cite{HiR}), namely the theorem of Freed, Hopkins and Teleman (cf.\ \cite{FHT1,FHT2,FHT3}) describing the fusion ring of the category of level $k$ representations of the loop group $L\mathrm{SU}(2)$ in terms of twisted equivariant $K$-theory.
Related to this, we observed in \cite{AE} that the $K_0$-group of certain 
approximately finite-dimensional (AF) C*-algebras has a ring structure that is 
closely related to the fusion ring of $\TLJcat(\delta)$. For example, the 
$K_0$-group of the inductive limit $\mathrm{TLJ}_\infty(\delta) = \lim_{n} 
\mathrm{TLJ}_n(\delta)$ of Temperley--Lieb--Jones C*-algebras, whose Bratteli 
diagram is given in \cite{J}, is a localization of the fusion ring of 
$\TLJcat(\delta)$. The present paper is a result of our efforts to lift such a 
ring structure in $K_0$-theory to a tensor product structure on an underlying 
category of modules. We found it natural to use the framework of Hilbert 
C*-modules, which generalize both Hilbert spaces and vector bundles and find 
uses in diverse areas of mathematics, including $K$-theory, Kasparov's 
$K\!K$-theory, and quantum groups (cf.\ e.g.\ \cite{Lance}, \cite{Bl}).

\subsection{Related work}

Given a (small) rigid C*-tensor category $\mathcal{C}$, Yuan in \cite{Yuan} constructed a unital C*-algebra $\mathcal{A}$ and a fully faithful monoidal *-functor from $\mathcal{C}$ into the category ${}_{\mathcal{A}}\mathrm{Mod}_{\mathcal{A}}$ of finite type Hilbert C*-bimodules over $\mathcal{A}$, the tensor product in ${}_{\mathcal{A}}\mathrm{Mod}_{\mathcal{A}}$ being given by interior tensor product. A variant of Yuan's construction yields a fully faithful monoidal *-functor from $\TLJcat(\delta)$ into ${}_{\mathcal{A}}\mathrm{Mod}_{\mathcal{A}}$, where $\mathcal{A}$ is the unital AF-algebra whose Bratteli diagram arises from the fusion graph of $f^{(0)}\oplus f^{(1)}$ (in the notation of section \ref{sec:JW}). For example, when $\delta = 2\cos(\pi/5)$, this diagram is
\begin{center}
\begin{tikzpicture}[scale=0.75]
      \draw (0,4) -- (0,0);
      \draw (1,3) -- (1,0);
      \draw (2,2) -- (2,0);
      \draw (3,1) -- (3,0);

      \draw (0,4) -- (3,1);
      \draw (0,3) -- (3,0);
      \draw (0,2) -- (2,0);
      \draw (0,1) -- (1,0);

      \draw (0,2) -- (1,3);
      \draw (0,1) -- (1,2);
      \draw (0,0) -- (2,2);
      \draw (1,0) -- (2,1);
      \draw (2,0) -- (3,1);

      \draw[dotted] (1,0) -- (0,-1);
      \draw[dotted] (1,0) -- (2,-1);
      \draw[dotted] (3,0) -- (2,-1);
      \draw[dotted] (0,0) -- (1,-1) -- (2,0) -- (3,-1);
      \draw[dotted] (0,0) -- (0,-1);
      \draw[dotted] (1,0) -- (1,-1);
      \draw[dotted] (2,0) -- (2,-1);
      \draw[dotted] (3,0) -- (3,-1);

      \node() at (0,4) {$\bullet$};
      \node() at (0,3) {$\bullet$};
      \node() at (0,2) {$\bullet$};
      \node() at (0,1) {$\bullet$};
      \node() at (0,0) {$\bullet$};

      \node() at (1,3) {$\bullet$};
      \node() at (1,2) {$\bullet$};
      \node() at (1,1) {$\bullet$};
      \node() at (1,0) {$\bullet$};

      \node() at (2,2) {$\bullet$};
      \node() at (2,1) {$\bullet$};
      \node() at (2,0) {$\bullet$};

      \node() at (3,1) {$\bullet$};
      \node() at (3,0) {$\bullet$};
    \end{tikzpicture}
\end{center}
In the present paper, we make use of Yuan's formalism in defining certain Hilbert spaces and bounded operators. In turn, Yuan was influenced by earlier realizations of C*-tensor categories in terms of bimodules over von Neumann algebras (for which we refer to the citations in \cite{Yuan}).

On the other hand, based on the work of Guionnet, Jones and 
Shlyakhtenko \cite{GJS}, Hartglass and Penneys in \cite{HP1}  
associated 
a C*-algebra 
$\mathcal{B}$ along with a Hilbert C*-bimodule $\mathcal{X}$ over 
$\mathcal{B}$ to an arbitrary factor planar algebra $\mathcal{P}_\bullet$. They 
then fed this bimodule into a construction due to Pimsner (cf.\ 
\cite{Pi}) in order to associate Cuntz and Toeplitz type algebras to planar 
algebras. When $\mathcal{P}_\bullet$ is the Temperley--Lieb--Jones planar 
algebra 
with parameter $\delta$, $K_0(\mathcal{B})$ is isomorphic to the fusion ring of 
$\TLJcat(\delta)$. This led us to consider modules over a variant of the 
C*-algebra $\mathcal{B}$.

It should also be mentioned that the tensor product that is defined in the 
present paper is related to a tensor product of modules over Temperley--Lieb 
algebras with varying numbers of strands that was introduced in \cite{RS, RS2} 
and studied further in \cite{GV}, \cite{B}, \cite{GS}. Moreover, the definition 
of the modified version of the C*-algebra $\mathcal{B}$ of Hartglass and 
Penneys that we use is influenced by the notion of dilute Temperley--Lieb 
algebras, which originated in \cite{Grimm}, \cite{BS}.

\subsection{Structure of the paper}

Section \ref{sec:Intro} is this introduction. In section \ref{sec:Prelim}, we 
cover well-known preliminary material on Hilbert space operators, Hilbert 
C*-modules, C*-tensor categories and the Temperley--Lieb--Jones C*-tensor 
categories $\TLJcat(\delta)$ with $\delta\in \{2\cos(\pi/(k+2))\,:\,k = 
1,2,\ldots\}\cup \{2\}$.

Our contribution starts in section \ref{sec:BandPhi}. Using the formalism of 
Yuan and the notion of dilute Temperley--Lieb diagrams (as presented in 
\cite{BS}), we construct a variant of the C*-algebra $\mathcal{B}$ of Hartglass 
and Penneys (section \ref{sec:B}). Next, we explain a way to associate 
operators in $\mathcal{B}$ and its strong closure to certain infinite diagrams 
(section \ref{sec:diagram}). Using an idea of Erlijman and Wenzl (cf.\ 
\cite{EW}), we then harness the unitary braiding on $\TLJcat(\delta)$ to define 
a *-homomorphism $\Phi\colon\mathcal{B}\otimes\mathcal{B}\to\mathcal{B}$ by 
superposition of diagrams (section \ref{sec:Phi}) and observe that the product 
on $K_0(\mathcal{B})$ induced by $\Phi$ recaptures the product in the fusion 
ring of $\TLJcat(\delta)$ (Remark \ref{rem:Kring}).

In section \ref{sec:Monoidal}, we first use $\Phi$ as well as interior and exterior tensor products of Hilbert C*-modules to define a tensor product of Hilbert $\mathcal{B}$-modules (section \ref{sec:Tensor}).
We next use this tensor product to equip the category $\ModB$ of Hilbert $\mathcal{B}$-modules with the structure of a C*-tensor category (section \ref{sec:ModB}) and supply it with a unitary braiding (section \ref{sec:Braiding}).

In section \ref{sec:Section5}, we first define a *-functor $F$ from $\TLJcat(\delta)$ into $\ModB$ and show that it is monoidal and braided (section \ref{sec:Functor}). In section \ref{sec:Equiv}, we then use $F$ to prove Theorem \ref{thm}, which states that $\TLJcat(\delta)$ is equivalent, as a braided C*-tensor category, to the full subcategory $\ModB^f$ of $\ModB$ whose objects are those modules which admit a finite orthonormal basis (and which is introduced in section \ref{sec:ModBf}). Thereafter, we note that the tensor category $\ModB$ ``categorifies'' the ring $K_0(\mathcal{B})$ (Remark \ref{rem:categorification}) and indicate how one can prove a version of Theorem \ref{thm} for arbitrary finitely generated rigid braided C*-tensor categories (Remark \ref{rem:gen}).

Finally, in section \ref{sec:Conclusion}, we pose some questions concerning representability of C*-tensor categories on Hilbert C*-modules and realizability of the representation category of the Virasoro algebra.

\section{Preliminaries}\label{sec:Prelim}

\subsection{Operators on Hilbert space}

In this paper, we consider operators on a complex Hilbert space $H$. We denote by $\mathbb{B}(H)$ the space of all bounded linear operators on $H$, which comes equipped with a plethora of topologies. In this paper, we will restrict attention to the norm topology, which is induced by the operator norm, and the strong operator topology, which is the topology of pointwise convergence in the norm on $H$, that is, $a_n\to a$ strongly if and only if $\|a_n(\xi)-a(\xi)\|\to 0$ for all $\xi\in H$. We will need the following standard fact.

\begin{fact}\label{lem:SOT}
Let $(a_n)_{n=0}^\infty$ be a bounded sequence in $\mathbb{B}(H)$ such that $a_na_m^* = a_n^*a_m = 0$ whenever $n\neq m$. Then
    $
        \sum_{n\geq 0}a_n
    $
    and
    $
        \sum_{n\geq 0}a_n^*
    $
    converge strongly in $\mathbb{B}(H)$.
\end{fact}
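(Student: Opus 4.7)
The plan is to show that the partial sums $S_N = \sum_{n=0}^N a_n$ form a Cauchy net in the strong operator topology and converge to a bounded operator. The orthogonality hypotheses say precisely that the ranges of the $a_n$ are mutually orthogonal (from $a_n^* a_m = 0$) and that the initial spaces, that is, the ranges of the $a_n^*$, are mutually orthogonal (from $a_n a_m^* = 0$).

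For any $\xi \in H$ and $N < M$, the orthogonality of the ranges gives
\begin{equation*}
    \|(S_M - S_N)\xi\|^2 = \Big\|\sum_{n=N+1}^M a_n\xi\Big\|^2 = \sum_{n=N+1}^M \|a_n \xi\|^2,
\end{equation*}
since the cross terms $\langle a_n\xi, a_m\xi\rangle = \langle \xi, a_n^* a_m\xi\rangle$ vanish for $n\neq m$. Hence it suffices to prove that $\sum_n \|a_n\xi\|^2 < \infty$ for every $\xi$. For this, let $Q_n$ denote the orthogonal projection onto the closure of the range of $a_n^*$; since $a_n$ vanishes on $(\Ran a_n^*)^\perp = \ker a_n$, we have $a_n = a_n Q_n$, and therefore $\|a_n \xi\| \leq \|a_n\|\cdot\|Q_n\xi\| \leq C\|Q_n\xi\|$ where $C \df \sup_n\|a_n\| < \infty$. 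Because the ranges of the $a_n^*$ are mutually orthogonal, the $Q_n$ are pairwise orthogonal projections and $\sum_n \|Q_n\xi\|^2 \leq \|\xi\|^2$. Combining these bounds yields
\begin{equation*}
    \sum_n \|a_n\xi\|^2 \leq C^2 \sum_n \|Q_n\xi\|^2 \leq C^2\|\xi\|^2 < \infty.
\end{equation*}

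Therefore $(S_N\xi)$ is Cauchy in $H$, so $S\xi \df \lim_N S_N\xi$ exists; linearity of $S$ is immediate, and the estimate above shows $\|S\xi\|^2 = \sum_n \|a_n\xi\|^2 \leq C^2\|\xi\|^2$, so $S\in \mathbb{B}(H)$ with $\|S\|\leq C$, proving strong convergence of $\sum_{n\geq 0} a_n$. The claim for $\sum_{n\geq 0} a_n^*$ follows by applying the same argument to the sequence $(a_n^*)$, whose orthogonality hypotheses $a_n^*(a_m^*)^* = a_n^* a_m = 0$ and $(a_n^*)^* a_m^* = a_n a_m^* = 0$ for $n\neq m$ are provided by the assumptions. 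The only genuinely substantive point is the passage from the abstract orthogonality relations to the summability estimate, which is handled cleanly by introducing the initial-space projections $Q_n$; the rest is bookkeeping.
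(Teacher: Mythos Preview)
Your proof is correct. The paper itself does not prove this statement; it is recorded there as a ``standard fact'' without argument, so there is no approach in the paper to compare yours against. Your argument via the initial-space projections $Q_n$ is the natural one and is carried out cleanly.
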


The normed space $\mathbb{B}(H)$ is a C*-algebra. It contains the C*-subalgebra $\mathbb{K}(H)$ of compact operators, which is the smallest C*-subalgebra of $\mathbb{B}(H)$ that contains all operators of finite rank. The following standard fact will be useful to us.

\begin{fact}\label{fact:SOTcpt}
    Let $(a_n)_{n=0}^\infty$ be a sequence in $\mathbb{B}(H)$ that converges strongly to some operator $a$. For any compact operator $x\in\mathbb{K}(H)$, we have that $\|a_nx-ax\|\to 0$.
\end{fact}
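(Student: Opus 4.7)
The plan is to reduce the claim to an $\varepsilon/2$ argument that combines (a) the automatic uniform boundedness of a strongly convergent sequence, (b) norm convergence on finite-dimensional subspaces, and (c) density of finite-rank operators in $\mathbb{K}(H)$.

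First, I would replace $a_n$ by $b_n := a_n - a$, reducing to the case of a sequence converging strongly to $0$; the claim then becomes $\|b_n x\|\to 0$ for every $x\in\mathbb{K}(H)$. By the uniform boundedness principle (Banach--Steinhaus), strong convergence of $(b_n)$ forces $M := \sup_n \|b_n\| < \infty$.

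Next, I would prove the claim when $x$ has finite rank. Let $P$ denote the orthogonal projection of $H$ onto the (finite-dimensional) range of $x^*$, so that $x = xP$. On the finite-dimensional subspace $PH$, pointwise convergence of $b_n|_{PH} \to 0$ coincides with uniform convergence on the unit ball (all norms on a finite-dimensional space being equivalent). Concretely, if $\xi_1,\dots,\xi_k$ is an orthonormal basis of $PH$, then for any unit vector $\eta\in PH$ we have $\eta = \sum_i \langle \eta,\xi_i\rangle \xi_i$, hence
\[
\|b_n\eta\| \;\leq\; \sum_{i=1}^k \|b_n \xi_i\| \;\xrightarrow[n\to\infty]{}\; 0,
\]
uniformly in $\eta$. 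Therefore $\|b_n P\|\to 0$, and consequently $\|b_n x\| = \|b_n P x\| \leq \|b_n P\|\,\|x\| \to 0$.

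For the general case, fix $x\in\mathbb{K}(H)$ and $\varepsilon>0$. By definition of $\mathbb{K}(H)$, choose a finite-rank operator $y$ with $\|x-y\| < \varepsilon/(2(M+1))$. Then
\[
\|b_n x\| \;\leq\; \|b_n (x-y)\| + \|b_n y\| \;\leq\; M\cdot \frac{\varepsilon}{2(M+1)} + \|b_n y\|.
\]
By the finite-rank case, $\|b_n y\| < \varepsilon/2$ for all sufficiently large $n$, giving $\|b_n x\| < \varepsilon$ eventually. Since $\varepsilon$ was arbitrary, $\|b_n x\|\to 0$, which is what we needed.

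No step is a genuine obstacle; the only conceptual input is the passage from pointwise to uniform convergence on a finite-dimensional subspace, which is what makes compactness of $x$ essential. The uniform bound $M$ provided by Banach--Steinhaus is what allows the finite-rank approximation to be carried out uniformly in $n$.
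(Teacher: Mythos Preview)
The paper states this as a ``standard fact'' without proof, so there is no argument to compare against; your overall approach is the standard one and is essentially correct.

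There is, however, a slip in the finite-rank step. You take $P$ to be the projection onto the range of $x^*$ and correctly note that $x = xP$, but then write $\|b_n x\| = \|b_n P x\|$, which would require $x = Px$ instead. With your choice of $P$ one only has $b_n x = b_n x P$, which does not yield the desired bound. The fix is simply to let $P$ be the orthogonal projection onto the (finite-dimensional) range of $x$; then $x = Px$, so $b_n x = b_n P x$ and $\|b_n x\| \le \|b_n P\|\,\|x\| \to 0$ exactly as you argue. With this one-word correction the proof is complete.
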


\subsection{Hilbert C*-modules}\label{sec:HilbMod}

A (right) Hilbert C*-module over a C*-algebra $B$ is a (right) $B$-module $M$ equipped with a $B$-valued inner product $\langle\cdot,\cdot\rangle\colon M\times M\to B$ such that $\xi\mapsto \|\langle\xi,\xi\rangle\|^{1/2}$ is a complete norm. The general theory of such modules is laid out very carefully in \cite{Lance}, to which we refer for precise definitions and all the information that the reader will need.

Let us comment on the notation and terminology used in the present paper. We use the symbol $\boxtimes$ for the exterior tensor product of Hilbert C*-modules (so that if $M$ is a Hilbert $A$-module and $N$ is a Hilbert $B$-module then $M\boxtimes N$ is a Hilbert $(A\otimes B)$-module) and the symbol $\otimes_\phi$ for the interior tensor product with respect to a *-homomorphism $\phi$.
By an orthonormal basis for a Hilbert $B$-module $M$, we shall mean a (possibly infinite) family $(\xi_j)_{j\in J}$ of elements in $M$ such that
\begin{itemize}
    \item[(i)] $\langle \xi_i,\xi_j\rangle = 0$ whenever $i\neq j$;
    \item[(ii)] $\langle \xi_j,\xi_j\rangle$ is a projection in $B$ for all $j\in J$;
    \item[(iii)] the Fourier expansion $\eta = \sum_{j\in J} \xi_j\langle \xi_j,\eta\rangle$ is valid for all $\eta\in M$.
\end{itemize}

\subsection{C*-tensor categories}

Below, we recall the notions of C*-tensor categories, semisimplicity, unitary braidings and monoidal *-functors. We refer to \cite{GLR}, \cite{DR}, \cite{LR}, \cite{EW} and \cite{NT} for more information.

\subsubsection{Definition of a C*-tensor category}\label{sec:TensorCat}

A category $\mathcal{C}$ is called a C*-tensor category if the following conditions are satisfied (where $\pi$, $\rho$ and $\nu$ denote arbitrary objects in $\mathcal{C}$):
\begin{itemize}
        \item[(1)] Each morphism set $\mathrm{Hom}(\pi,\rho)$ is a complex Banach space. Moreover, composition is bilinear and $\|fg\|\leq\|f\|\|g\|$ for any pair $(f,g)$ of composable morphisms.
        \item[(2)] There is an antilinear contravariant functor *$\colon \mathcal{C}\to\mathcal{C}$ such that $\pi^* = \pi$ for all objects $\pi$, $f^{**} = f$ for all morphisms $f$, and the C*-identity
        $
            \|f^*f\| = \|f\|^2
        $
        holds for all morphisms $f$. In particular, each endomorphism space $\mathrm{End}(\pi)\df \mathrm{Hom}(\pi,\pi)$ is a unital C*-algebra.
        \item[(3)] For any $f\in \mathrm{Hom}(\pi,\rho)$, the morphism $f^*f$ is a positive element of $\mathrm{End}(\pi)$.
        \item[(4)] There is a bilinear bifunctor $\otimes\colon \mathcal{C}\times\mathcal{C}\to\mathcal{C}$ and natural unitary isomorphisms
        $
            \alpha_{\pi,\rho,\nu}\colon (\pi\otimes\rho)\otimes \nu\to \pi\otimes (\rho\otimes\nu)
        $
        (called associators or associativity constraints) satisfying the pentagon identity (see Definition 2.1.1(iii) of \cite{NT} or equation (\ref{eq:pentagon}) below). [By definition, a (unitary) isomorphism in $\mathcal{C}$ is a morphism $u$ such that $u^*u = \id$ and $uu^* = \id$.]
        \item[(5)] There is a distinguished object $\mathbbm{1}$ in $\mathcal{C}$ (called the tensor unit) and natural unitary isomorphisms
        $
            \Psi_\pi^\ell\colon \mathbbm{1}\otimes\pi\to\pi$ and $\Psi^r_\pi\colon \pi\otimes\mathbbm{1}\to\pi
        $
        (called left and right unit constraints) satisfying the triangle identity (see Definition 2.1.1(iv) of \cite{NT} or equation (\ref{eq:triangle}) below).
        \item[(6)] $(f\otimes g)^* = f^*\otimes g^*$ for all morphisms $f$ and $g$.
        \item[(7)] The category $\mathcal{C}$ has subobjects and finite direct sums (see Definition 2.1.1(vi), (vii) of \cite{NT}).
        \item[(8)] The tensor unit is simple. [An object $\pi$ in $\mathcal{C}$ is said to be simple if $\mathrm{End}(\pi)=\mathbb{C}\id_\pi$.]
\end{itemize}

A C*-tensor category is said to be strict if the associators and unit constraints are identity morphisms.

\subsubsection{Semisimplicity}\label{sec:semisimple}

Briefly speaking, a C*-tensor category $\mathcal{C}$ is said to be semisimple if every object in $\mathcal{C}$ is isomorphic to a finite direct sum of simple objects. We next explain what this means in detail. Pick a set $\mathcal{S}$ of mutually non-isomorphic simple objects such that every simple object in $\mathcal{C}$ is isomorphic to some $s\in\mathcal{S}$. Given an object $\rho$ in $\mathcal{C}$, there exist non-negative integers $N^s$ (with $N^s = 0$ for all but finitely many $s$) such that
$
    \rho\cong \bigoplus_{s\in\mathcal{S}} s^{\oplus N^s},
$
where $s^{\oplus N^s}$ denotes a direct sum of $N^s$ copies of $s$. This means that, for each $s$ with $N^s>0$, there exist $N^s$ morphisms $v_{s,1},\ldots,v_{s,N^s}\in\mathrm{Hom}(s,\rho)$ such that $v_{s,j}^*v_{s,j}=\id_s$ for all $j$ and
$
    \id_\rho = \sum_{s\in\mathcal{S}}\sum_{j=1}^{N^s} v_{s,j}v_{s,j}^*.
$
In fact, $v_{s,1},\ldots,v_{s,N^s}$ form an orthonormal basis for $\mathrm{Hom}(s,\rho)$ equipped with the inner product $\langle\cdot,\cdot\rangle$ given by $\langle \xi,\eta\rangle\, \id_s = \xi^*\eta$ for $\xi,\eta\in\mathrm{Hom}(s,\rho)$. The number $N^s$ is called the multiplicity of $s$ in $\rho$ and is sometimes denoted by $(s,\rho)$. We write $s\prec \rho$ if $(s,\rho)>0$.
Since we mention it in a few places, we also recall that the fusion ring $\mathbb{Z}[\mathcal{S}]$ of $\mathcal{C}$ is the free abelian group generated by $\mathcal{S}$ and equipped with the product
$
    s\cdot t = \sum_{r\in\mathcal{S}} (r, s\otimes t) r.
$

\subsubsection{Unitary braidings}

A unitary braiding $\sigma$ on a C*-tensor category $\mathcal{C}$ is an assignment of a unitary isomorphism
		$
			\sigma_{\pi,\rho}\colon \pi\otimes\rho\to \rho\otimes\pi
		$
		to every pair $(\pi,\rho)$ of objects in $\mathcal{C}$, natural in $\pi$ and $\rho$, satisfying the hexagon identities (see \cite{Kassel} or equations (\ref{eq:hexagon1}) and (\ref{eq:hexagon2}) below). As in \cite{EW}, we call a C*-tensor category with a choice of unitary braiding a braided C*-tensor category.

\subsubsection{Monoidal functors}

A functor $F\colon \mathcal{C}\to\mathcal{D}$ between C*-tensor categories $\mathcal{C}$ and $\mathcal{D}$ is called a *-functor if $F$ is linear and satisfies $F(f^*) = F(f)^*$ for all morphisms $f$. It is said to be monoidal (or to be a tensor functor) if there are natural unitary isomorphisms
$
    J_{\pi,\rho}\colon F(\pi)\otimes F(\rho)\to F(\pi\otimes\rho)
$
that are compatible with the associators and unit constraints (see Definition 2.1.3 of \cite{NT} or equations (\ref{eq:monoidal})--(\ref{eq:unital_left}) below). If $F$ is a monoidal *-functor and $\mathcal{C}$ and $\mathcal{D}$ are both braided then we say that $F$ is braided if the isomorphisms $J$ are compatible with the braiding (see equation (\ref{eq:braided}) below).

\subsection{The Temperley--Lieb--Jones categories}

In this section, we recall the notion of Temperley--Lieb diagrams and of 
certain vector spaces, algebras and categories that one can associate to them.

\subsubsection{Temperley--Lieb--Jones algebras}

We recall first the notion of an $(m,n)$-Temperley--Lieb diagram (for $m,n\geq 
0$ of equal parity), which first appeared in \cite{Kau}. Such a diagram 
consists of $(m+n)/2$ non-crossing smooth strands inside a rectangle with $m$ 
nodes (or marked points) on the left side and $n$ nodes on the right side, each 
node being connected to a unique strand. (Some examples are shown in the next 
figure.)
Given $\delta\in\mathbb{C}$, denote by $\mathrm{TL}^0_{m,n}(\delta)$ the formal 
complex linear span of all isotopy classes of $(m,n)$-Temperley--Lieb diagrams 
and define a product
$
    \mathrm{TL}^0_{m,n}(\delta)\times \mathrm{TL}^0_{n,k}(\delta)\to \mathrm{TL}^0_{m,k}(\delta)
$
as follows. In order to multiply an $(m,n)$-Temperley--Lieb diagram by an 
$(n,k)$-Temperley--Lieb diagram, start by juxtaposing them, matching up the 
nodes to form a new diagram. Next, remove each closed loop at the cost of 
multiplying by the scalar $\delta$. The following figure gives an example of 
the product of a $(2,4)$-Temperley--Lieb diagram and a $(4,0)$-Temperley--Lieb 
diagram.
\begin{center}
	\begin{tikzpicture} [scale=.75]
		\begin{scope}[ultra thick]
			\node() at (3.5,1) {\tiny $\bullet$};
            \node() at (7.75,1) {\huge $=$};
			\node() at (15.5,1) {\huge $=\!\delta$};
			
			\draw (0,0) rectangle (3,2);
			
			\draw (4,0) rectangle (7,2);
			
            \draw (8.5,0) rectangle (14.5,2);
			
			\draw (16.5,0) rectangle (19.5,2);
			
			\begin{scope}[color=blue]
			\draw (0,1.25) -- (3,1.75);
            \draw (0,0.75) -- (3,0.25);
			\draw (3,1.25) arc (90:270:.25);
			
			\draw (4,1.75) arc (90:-90:.75);
			\draw (4,1.25) arc (90:-90:.25);

			\draw (8.5,1.25) -- (11.5,1.75);
            \draw (8.5,0.75) -- (11.5,0.25);
			\draw (11.5,1.25) arc (90:270:.25);

			\draw (11.5,1.75) arc (90:-90:.75);
			\draw (11.5,1.25) arc (90:-90:.25);
			
			\draw (16.5,1.5) arc (90:-90:.5);

			\draw[fill=blue] (0,1.25) circle (.1cm);
			\draw[fill=blue] (0,0.75) circle (.1cm);
			
			\draw[fill=blue] (3,1.75) circle (.1cm);
			\draw[fill=blue] (3,1.25) circle (.1cm);
			\draw[fill=blue] (3,0.75) circle (.1cm);
			\draw[fill=blue] (3,0.25) circle (.1cm);
			
			\draw[fill=blue] (4,1.75) circle (.1cm);
			\draw[fill=blue] (4,1.25) circle (.1cm);
			\draw[fill=blue] (4,0.75) circle (.1cm);
			\draw[fill=blue] (4,0.25) circle (.1cm);

			\draw[fill=blue] (8.5,1.25) circle (.1cm);
			\draw[fill=blue] (8.5,0.75) circle (.1cm);
			
			\draw[fill=blue] (11.5,1.75) circle (.1cm);
			\draw[fill=blue] (11.5,1.25) circle (.1cm);
			\draw[fill=blue] (11.5,0.75) circle (.1cm);
			\draw[fill=blue] (11.5,0.25) circle (.1cm);
			
			\draw[fill=blue] (16.5,1.5) circle (.1cm);
			\draw[fill=blue] (16.5,0.5) circle (.1cm);
			\end{scope}
		\end{scope}
	\end{tikzpicture}
\end{center}
In particular, $\mathrm{TL}^0_{n,n}(\delta)$ is an associative algebra, which 
is known as the $n$'th Temperley--Lieb algebra.
One can define a linear trace $\mathrm{Tr}_n^{\mathrm{TL}}$ on 
$\mathrm{TL}^0_{n,n}(\delta)$ as follows. If $D$ is an $(n,n)$-Temperley--Lieb 
diagram then $\mathrm{Tr}_n^{\mathrm{TL}}(D)$ is defined by a picture such as 
the one below (in which $n=3$), which is turned into a scalar by removing 
closed loops as explained above. (This trace is usually called a Markov trace.)
\begin{center}
	\begin{tikzpicture} [scale=.65]
		\begin{scope}[ultra thick]
            \node() at (1,1) {\Huge $D$};

			\draw (0,0) -- (0,2);
			\draw (0,2) -- (2,2);
			\draw (2,0) -- (2,2);
            \draw (0,0) -- (2,0);
			
			\begin{scope}[color=blue]
			\draw (0,2.5) -- (2,2.5);
            \draw (0,3.0) -- (2,3.0);
            \draw (0,3.5) -- (2,3.5);
			
            \draw (0,2.5) arc (90:270:0.5);
            \draw (0,3.0) arc (90:270:1.0);
            \draw (0,3.5) arc (90:270:1.5);

            \draw (2,2.5) arc (90:-90:0.5);
            \draw (2,3.0) arc (90:-90:1.0);
            \draw (2,3.5) arc (90:-90:1.5);

			\draw[fill=blue] (0,1.5) circle (.1cm);
			\draw[fill=blue] (0,1.0) circle (.1cm);
            \draw[fill=blue] (0,0.5) circle (.1cm);
			
			\draw[fill=blue] (2,1.5) circle (.1cm);
			\draw[fill=blue] (2,1.0) circle (.1cm);
            \draw[fill=blue] (2,0.5) circle (.1cm);

			\end{scope}
		\end{scope}
	\end{tikzpicture}
\end{center}
Moreover, one can define an antilinear *-operation $\mathrm{TL}^0_{m,n}(\delta)\to \mathrm{TL}^0_{n,m}(\delta)$ by reflecting diagrams about a vertical axis.

Jones famously proved (cf.\ \cite{J}) that the linear trace $\mathrm{Tr}_n^{\mathrm{TL}}$ is positive for all $n$ if and only if $\delta\in \{2\cos(\pi/(k+2))\,:\,k=1,2,\ldots\}\cup[2,\infty)$. Given $\delta$ in this range, put
$
    \mathrm{TLJ}_{m,n}(\delta) = \mathrm{TL}^0_{m,n}(\delta)/\{x\in\mathrm{TL}^0_{m,n}(\delta)\,:\,\mathrm{Tr}_n(x^*x) = 0\}.
$
Then the product above descends to a product
$
    \mathrm{TLJ}_{m,n}(\delta)\times \mathrm{TLJ}_{n,k}(\delta)\to \mathrm{TLJ}_{m,k}(\delta),
$
the above *-operation descends to a *-operation
$
    \mathrm{TLJ}_{m,n}(\delta)\to \mathrm{TLJ}_{n,m}(\delta),
$
and the trace $\mathrm{Tr}_n^{\mathrm{TL}}$ descends to a positive faithful 
trace on $\mathrm{TLJ}_{n,n}(\delta)$. Thus, $\mathrm{TLJ}_{n}(\delta)\df 
\mathrm{TLJ}_{n,n}(\delta)$ is a finite-dimensional C*-algebra, which is known 
as the $n$'th Temperley--Lieb--Jones C*-algebra.

\subsubsection{Temperley--Lieb--Jones C*-tensor categories}\label{sec:TLJ}

Let $\delta\in \{2\cos(\pi/(k+2))\,:\,k=1,2,\ldots\}\cup[2,\infty)$ be given. 
The Temperley--Lieb--Jones (or reduced Temperley--Lieb) C*-tensor category 
$\TLJcat(\delta)$ with parameter $\delta$ is defined as follows. Its objects 
are all formal finite sums $P_1\oplus\cdots\oplus P_k$, where $P_j$ is a 
projection in the C*-algebra $\mathrm{TLJ}_{n_j}(\delta)$ for each $j$. Given 
projections $P\in \mathrm{TLJ}_{n}(\delta)$ and $Q\in 
\mathrm{TLJ}_{m}(\delta)$, the morphism set $\mathrm{Hom}(P,Q)$ is 
$Q\mathrm{TLJ}_{m,n}(\delta)P$. More generally, given objects $\oplus_{j=1}^k 
P_j$ and $\oplus_{i=1}^r Q_i$, the morphism set $\mathrm{Hom}(\oplus_{j=1}^k 
P_j,\oplus_{i=1}^r Q_i)$ consists of all $r\times k$-matrices whose $(i,j)$'th 
entry is in $\mathrm{Hom}(P_j,Q_i)$. Composition of morphisms is given by 
multiplication of Temperley--Lieb diagrams combined with matrix multiplication.
The tensor product in $\TLJcat(\delta)$ is defined as follows. Given projections $P\in \mathrm{TLJ}_{n}(\delta)$ and $Q\in \mathrm{TLJ}_{m}(\delta)$, the tensor product $P\otimes Q$ is formed by stacking $P$ on top of $Q$ (or rather by the bilinear extension of this procedure applied to pairs of diagrams) to obtain a projection in $\mathrm{TLJ}_{n+m}(\delta)$. The tensor product of two objects $\oplus_{i=1}^k P_i$ and $\oplus_{j=1}^r Q_j$ is simply $\oplus_{(i,j)} (P_i\otimes Q_j)$. The tensor product of morphisms is given by vertical stacking combined with tensor multiplication of matrices, i.e.,
$
    (a_{ij})_{i,j}\otimes (b_{kl})_{k,l} = (a_{ij}\otimes b_{kl})_{(i,k),(j,l)}.
$
One can show that $\TLJcat(\delta)$ is a strict C*-tensor category, whose 
tensor unit is the empty Temperley--Lieb diagram.

\subsubsection{Jones--Wenzl projections}\label{sec:JW}

For any $\delta\in \{2\cos(\pi/(k+2))\,:\,k=1,2,\ldots\}\cup[2,\infty)$, the 
C*-tensor category $\TLJcat(\delta)$ is semisimple. Up to unitary isomorphism, 
the simple objects are the so-called Jones--Wenzl projections (cf.\ \cite{We}). 
If $\delta \geq 2$ then the Jones--Wenzl projections form an infinite sequence 
$(f^{(n)})_{n=0}^\infty$ with $f^{(n)}\in\mathrm{TLJ}_{n}(\delta)$ for all $n$, 
where
$f^{(0)}$ is the empty diagram and $f^{(1)}$ is a single strand. The remaining 
Jones--Wenzl projections are defined via Wenzl's recursive formula (see e.g.\ 
equation (2.1) in \cite{MPS}, in which $\delta$ is equal to $q+q^{-1}$ in their 
notation). It is a fact that
$
    f^{(1)}\otimes f^{(n)} \cong f^{(n-1)}\oplus f^{(n+1)}
$
in $\TLJcat(\delta)$ for all $n\geq 1$.
If $\delta = 2\cos(\pi/(k+2))$ with $k\geq 1$ then the Jones--Wenzl projections 
form a finite sequence $f^{(0)}, f^{(1)}, \ldots, f^{(k)}$, defined recursively 
as above. In this case,
$
    f^{(1)}\otimes f^{(n)} \cong f^{(n-1)}\oplus f^{(n+1)}
$
in $\TLJcat(\delta)$ for $1\leq n\leq k-1$ while $f^{(1)}\otimes f^{(k)}\cong f^{(k-1)}$.

In either case, the category $\TLJcat(\delta)$ is generated by the object $\pi 
= f^{(1)}$ in the sense that every simple object occurs as a direct summand of 
some tensor power $\pi^{\otimes n}$ of $\pi$. Note in this 
connection that $\mathrm{Hom}(\pi^{\otimes n},\pi^{\otimes m}) = 
\mathrm{TLJ}_{m,n}(\delta)$ for all $n,m\geq 0$.

\subsubsection{The unitary braiding}

If $\delta\in \{2\cos(\pi/(k+2))\,:\,k=1,2,\ldots\}\cup\{2\}$ then $\TLJcat(\delta)$ is a braided C*-tensor category. Specifically, one defines a unitary braiding $\sigma^{\mathrm{TL}}$ as follows. Consider the unitary Kauffman element
\begin{center}
    \begin{tikzpicture}
        \begin{scope}[ultra thick]
            \node() at (2.3,0.5) {$\sigma^{\mathrm{TL}}_{\pi,\pi}\,=\,z^{-1}$};
            \node() at (5.45,0.5) {$+\,\,z$};

            \draw (3.5,0) rectangle (4.5,1);

            \begin{scope}[color=blue]
            \aline{3.5}{0.75}{4.5}{0.75}{blue}
            \aline{3.5}{0.25}{4.5}{0.25}{blue}

            \draw[fill=blue] (3.5,0.75) circle (.1cm);
            \draw[fill=blue] (4.5,0.25) circle (.1cm);
            \draw[fill=blue] (3.5,0.25) circle (.1cm);
            \draw[fill=blue] (4.5,0.75) circle (.1cm);
            \end{scope}

            \draw (6,0) rectangle (7,1);

            \begin{scope}[color=blue]
            \draw (6,0.75) arc (90:-90:.25);
            \draw (7,0.75) arc (90:270:.25);

            \draw[fill=blue] (6,0.75) circle (.1cm);
            \draw[fill=blue] (7,0.25) circle (.1cm);
            \draw[fill=blue] (6,0.25) circle (.1cm);
            \draw[fill=blue] (7,0.75) circle (.1cm);
            \end{scope}
        \end{scope}
    \end{tikzpicture}
\end{center}
of $\mathrm{TLJ}_2(\delta)$, where $z = i e^{\pi i/[2(k+2)]}$ if $\delta = 2\cos(\pi/(k+2))$ while $z = i$ if $\delta = 2$.
We will use the following conventional graphical representation of the Kauffman element as a crossing.
\begin{center}
    \begin{tikzpicture}
        \begin{scope}[ultra thick]
            \node() at (-0.8,0.5) {$\sigma^{\mathrm{TL}}_{\pi,\pi}\,=\,$};

            \draw (0,0) rectangle (1, 1);

            \begin{scope}[color=blue]
            \aline{0}{.75}{1}{.25}{blue}

            \draw[fill=blue] (0,0.75) circle (.1cm);
            \draw[fill=blue] (1,0.25) circle (.1cm);
            \end{scope}

            \begin{scope}[color=red]
            \sline{0}{.25}{1}{.75}{red}

            \draw[fill=red] (0,0.25) circle (.1cm);
            \draw[fill=red] (1,0.75) circle (.1cm);
            \end{scope}

            \node() at (4.8,0.5) {$\big(\sigma^{\mathrm{TL}}_{\pi,\pi}\big)^{-1}=\,$};

            \draw (6,0) rectangle (7, 1);

            \begin{scope}[color=red]
            \aline{6}{.25}{7}{.75}{red}

            \draw[fill=red] (6,0.25) circle (.1cm);
            \draw[fill=red] (7,0.75) circle (.1cm);
            \end{scope}

            \begin{scope}[color=blue]
            \sline{6}{.75}{7}{.25}{blue}

            \draw[fill=blue] (6,0.75) circle (.1cm);
            \draw[fill=blue] (7,0.25) circle (.1cm);
            \end{scope}
        \end{scope}
    \end{tikzpicture}
\end{center}
Using it, one can define a unitary element $\sigma_{\pi^{\otimes n},\pi^{\otimes m}}^{\mathrm{TL}}$ of $\mathrm{End}(\pi^{\otimes (n+m)}) = \mathrm{TLJ}_{n+m}(\delta)$ by a braid diagram like the one below (which corresponds to the case $n = 2$ and $m = 3$).
\begin{center}\label{crossings}
    \begin{tikzpicture}[scale=.75]
        \begin{scope}[ultra thick]
            \draw (0,0.5) rectangle (3, 3);

            \aline{0}{2.75}{3}{1.75}{blue}
            \aline{0}{2.25}{3}{1.25}{blue}
            \aline{0}{1.75}{3}{0.75}{blue}

            \sline{0}{1.25}{3}{2.75}{red}
            \sline{0}{0.75}{3}{2.25}{red}

            \begin{scope}[color=blue]
                \draw[fill=blue] (0,2.75) circle (.1cm);
                \draw[fill=blue] (0,2.25) circle (.1cm);
                \draw[fill=blue] (0,1.75) circle (.1cm);
                \draw[fill=blue] (3,1.75) circle (.1cm);
                \draw[fill=blue] (3,1.25) circle (.1cm);
                \draw[fill=blue] (3,0.75) circle (.1cm);
            \end{scope}

            \begin{scope}[color=red]
                \draw[fill=red] (3,2.75) circle (.1cm);
                \draw[fill=red] (3,2.25) circle (.1cm);
                \draw[fill=red] (0,1.25) circle (.1cm);
                \draw[fill=red] (0,0.75) circle (.1cm);
            \end{scope}
        \end{scope}
    \end{tikzpicture}
\end{center}
Given projections $P\in \mathrm{End}(\pi^{\otimes n}) = 
\mathrm{TLJ}_{n}(\delta)$ and $Q\in \mathrm{End}(\pi^{\otimes m}) = 
\mathrm{TLJ}_{m}(\delta)$, one defines a unitary isomorphism 
$\sigma^{\mathrm{TL}}_{P,Q}$ in $\mathrm{Hom}(P\otimes Q,Q\otimes P)$ by 
$\sigma_{P,Q}^{\mathrm{TL}} = \sigma_{\pi^{\otimes n},\pi^{\otimes 
m}}^{\mathrm{TL}}\circ (P\otimes Q)$. To see that 
$\sigma^{\mathrm{TL}}_{P,Q}$ is indeed 
an element of $\mathrm{Hom}(P\otimes Q,Q\otimes P) = (Q\otimes 
P)\mathrm{TLJ}_{n+m}(\delta)(P\otimes Q)$, one 
uses the isotopy 
invariance of the Temperley--Lieb diagrams along with the following two 
identities, which follow easily from the definition of the crossing.
\begin{center}\label{pull-over}
    \begin{tikzpicture}[scale=.75]
        \begin{scope}[ultra thick]
        \node() at (2.5,1.25) {=};
        \node() at (10.5,1.25) {=};
            \draw (0,0.5) rectangle (2, 2);

            \aline{0}{1.75}{1.25}{0.75}{blue}
            \aline{1.25}{0.75}{2}{0.75}{blue}

            \sline{0}{1.25}{1.25}{1.75}{red}
            \sline{0}{0.75}{1.25}{1.25}{red}
            \draw[color=red] (1.25,1.75) arc (90:-90:.25);

            \begin{scope}[color=blue]
                \draw[fill=blue] (0,1.75) circle (.1cm);
                \draw[fill=blue] (2,0.75) circle (.1cm);
            \end{scope}

            \begin{scope}[color=red]
                \draw[fill=red] (0,1.25) circle (.1cm);
                \draw[fill=red] (0,0.75) circle (.1cm);
            \end{scope}
            \draw (3,0.5) rectangle (5, 2);

            \aline{3}{1.75}{5}{0.75}{blue}

            \sline{3}{1.25}{3.25}{1.25}{red}
            \sline{3}{0.75}{3.25}{0.75}{red}
            \draw[color=red] (3.25,1.25) arc (90:-90:.25);

            \begin{scope}[color=blue]
                \draw[fill=blue] (3,1.75) circle (.1cm);
                \draw[fill=blue] (5,0.75) circle (.1cm);
            \end{scope}

            \begin{scope}[color=red]
                \draw[fill=red] (3,1.25) circle (.1cm);
                \draw[fill=red] (3,0.75) circle (.1cm);
            \end{scope}
            \draw (8,0.5) rectangle (10,2);
            
            \aline{8}{1.75}{9.25}{1.25}{blue}
            \aline{8}{1.25}{9.25}{0.75}{blue}
            \draw[color=blue] (9.25,1.25) arc (90:-90:.25);
            
            \sline{8}{0.75}{9.25}{1.75}{red}
            \sline{9.25}{1.75}{10}{1.75}{red}
            
            \begin{scope}[color=blue]
                \draw[fill=blue] (8,1.75) circle (.1cm);
                \draw[fill=blue] (8,1.25) circle (.1cm);
            \end{scope}

            \begin{scope}[color=red]
                \draw[fill=red] (8,0.75) circle (.1cm);
                \draw[fill=red] (10,1.75) circle (.1cm);
            \end{scope}            
            \draw (11,0.5) rectangle (13,2);
            
            \aline{11}{1.75}{11.25}{1.75}{blue}
            \aline{11}{1.25}{11.25}{1.25}{blue}
            \draw[color=blue] (11.25,1.75) arc (90:-90:.25);    
            
            \aline{11}{0.75}{13}{1.75}{red}
       
            \begin{scope}[color=blue]
                \draw[fill=blue] (11,1.75) circle (.1cm);
                \draw[fill=blue] (11,1.25) circle (.1cm);
            \end{scope}

            \begin{scope}[color=red]
                \draw[fill=red] (11,0.75) circle (.1cm);
                \draw[fill=red] (13,1.75) circle (.1cm);
            \end{scope}   
        \end{scope}
    \end{tikzpicture}
\end{center}
Finally, the unitary braiding $\sigma^{\mathrm{TL}}$ is given by the unitary 
isomorphisms $\sigma^{\mathrm{TL}}_{\oplus_iP_i,\oplus_jQ_j}$ in 
$\mathrm{Hom}\big((\oplus_i P_i)\otimes(\oplus_j Q_j), (\oplus_j Q_j)\otimes 
(\oplus_i P_i)\big)$ defined as direct sums of the 
$\sigma^{\mathrm{TL}}_{P_i,Q_j}$. 

\section{On a C*-algebra $\mathcal{B}$ and a *-homomorphism $\mathcal{B}\otimes\mathcal{B}\to\mathcal{B}$}\label{sec:BandPhi}

In this section, we define a Hilbert space $H$, a C*-algebra $\mathcal{B}\subset \mathbb{B}(H)$ and a *-homomorphism $\Phi\colon \mathcal{B}\otimes\mathcal{B}\to\mathcal{B}$, drawing inspiration from \cite{BS}, \cite{HP1}, \cite{Yuan} and \cite{EW}. Our starting point is the braided C*-tensor category $\mathcal{C} = \TLJcat(\delta)$ with $\delta\in \{2\cos(\pi/(k+2))\,:\,k=1,2,\ldots\}\cup\{2\}$, its tensor unit $\mathbbm{1}$, the generating object $\pi$, and a set $\mathcal{S}$ of simple objects in $\mathcal{C}$ chosen as in section \ref{sec:semisimple}. Put $\mathcal{G} = \{\mathbbm{1},\pi\}$ and denote by $\mathcal{G}^\infty$ the set of infinite sequences $\vec\mu = (\mu_1,\mu_2,\ldots)$ of elements in $\mathcal{G}$ for which there exists $n = n_{\vec\mu}\geq 0$ such that $\mu_k = \mathbbm{1}$ for $k>n$. Given such a sequence $\vec\mu$, we put $o(\vec\mu) = \mu_1\otimes\mu_2\otimes\cdots$. As $\mathcal{C}$ is a strict C*-tensor category, this infinite tensor product makes sense.

\subsection{Definition of $\mathcal{B}$}\label{sec:B}

For each $s\in\mathcal{S}$ and $\vec\mu\in\mathcal{G}^\infty$, the morphism space $\mathrm{Hom}(s,o(\vec\mu))$ is equipped with the inner product $\langle\cdot,\cdot\rangle$ given by $\langle \xi,\eta\rangle\,\id_s = \xi^*\eta$. We denote by $H^s$ the orthogonal direct sum of the Hilbert spaces $\mathrm{Hom}(s,o(\vec\mu))$ as $\vec\mu$ varies through $\mathcal{G}^\infty$. In symbols,
$$
    H^s = \bigoplus_{\vec\mu\in\mathcal{G}^\infty} \mathrm{Hom}(s,o(\vec\mu)).
$$
Next, we put
$$
    H = \bigoplus_{s\in\mathcal{S}} H^s.
$$
Given $\vec x, \vec y\in \mathcal{G}^\infty$ and $a\in\mathrm{Hom}(o(\vec y),o(\vec x))$, define a linear operator
$
    L_{\vec x, \vec y}(a)\colon H\to H
$
by the formula
$$
    L_{\vec x, \vec y}(a)\xi = \delta_{\vec y,\vec \mu}(a\circ \xi)\in \mathrm{Hom}(s,o(\vec x))
$$
for $\xi\in\mathrm{Hom}(s,o(\vec\mu))$. It is a bounded operator whose adjoint operator is
$$
    \big(L_{\vec x, \vec y}(a)\big)^* = L_{\vec y,\vec x}(a^*).
$$
Moreover,
$$
    L_{\vec x, \vec y}(a)\circ L_{\vec v, \vec w}(b) = \delta_{\vec y, \vec v}L_{\vec x, \vec w}(a\circ b).
$$
In particular, $p_{\vec x} \df L_{\vec x,\vec x}(\id_{o(\vec x)})$\label{def:px} is a projection in $\mathbb{B}(H)$ for each $\vec x\in\mathcal{G}^\infty$. Clearly, $\id_H = \sum_{\vec x\in\mathcal{G}^\infty} p_{\vec x}$.

\begin{lem}\label{lem:iso}
    We have that $\|L_{\vec x, \vec y}(a)\| = \|a\|$ for all $a\in\mathrm{Hom}(o(\vec y),o(\vec x))$.
\end{lem}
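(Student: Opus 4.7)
The plan is to use the C*-identity to reduce the problem to a statement about a *-representation of a finite-dimensional C*-algebra. Combining the adjoint formula $L_{\vec x,\vec y}(a)^* = L_{\vec y,\vec x}(a^*)$ with the composition rule displayed just above the lemma yields
$$
    L_{\vec x,\vec y}(a)^*\,L_{\vec x,\vec y}(a) = L_{\vec y,\vec y}(a^*a).
$$
Applying the C*-identity in $\mathbb{B}(H)$ and in $\mathrm{End}(o(\vec y))$, it therefore suffices to show that $\|L_{\vec y,\vec y}(b)\| = \|b\|$ for every $b\in\mathrm{End}(o(\vec y))$.

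Next, I would observe that the map $b\mapsto L_{\vec y,\vec y}(b)$ is a *-homomorphism from the finite-dimensional C*-algebra $\mathrm{End}(o(\vec y))$ into $\mathbb{B}(H)$, again by the composition and adjoint formulas. This operator annihilates $p_{\vec\mu}H$ for every $\vec\mu\neq\vec y$, and on $p_{\vec y}H = \bigoplus_{s\in\mathcal{S}}\mathrm{Hom}(s,o(\vec y))$ it acts as left-composition $\xi\mapsto b\circ\xi$.

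It then remains to show that this *-representation is faithful, since every faithful *-representation of a finite-dimensional C*-algebra is automatically isometric. To see faithfulness, invoke semisimplicity of $\TLJcat(\delta)$ (section \ref{sec:semisimple}) to decompose $o(\vec y)\cong\bigoplus_{s\in\mathcal{S}}s^{\oplus N^s}$ with $N^s = (s,o(\vec y))$, which identifies $\mathrm{End}(o(\vec y))\cong \bigoplus_{s:\,N^s>0}M_{N^s}(\mathbb{C})$. Choosing an orthonormal basis $v_{s,1},\ldots,v_{s,N^s}$ of $\mathrm{Hom}(s,o(\vec y))$ as in section \ref{sec:semisimple} identifies the left-composition action of the $M_{N^s}(\mathbb{C})$-summand on $\mathrm{Hom}(s,o(\vec y))$ with the standard representation of $M_{N^s}(\mathbb{C})$, while all other matrix blocks act as zero on this summand. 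Summing over $s$ produces a direct sum that includes every irreducible representation of $\mathrm{End}(o(\vec y))$ occurring with positive multiplicity, so the representation is faithful and the lemma follows. The main subtlety to watch for is that faithfulness crucially uses the fact that $H$ is built from $\mathrm{Hom}(s,o(\vec\mu))$ for \emph{every} $s\in\mathcal{S}$ rather than some restricted collection, which is exactly how $H$ was defined in section \ref{sec:B}.
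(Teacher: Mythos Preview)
Your proof is correct and follows essentially the same route as the paper: reduce via the C*-identity to the endomorphism case, observe that $b\mapsto L_{\vec y,\vec y}(b)$ is a *-homomorphism $\mathrm{End}(o(\vec y))\to\mathbb{B}(H)$, invoke semisimplicity to see it is injective, and conclude by the fact that injective *-homomorphisms between C*-algebras are isometric. The only difference is that you spell out the faithfulness argument explicitly via the block decomposition, whereas the paper simply asserts that semisimplicity implies injectivity; also note that the isometry fact holds for arbitrary C*-algebras, so the finite-dimensionality hypothesis you added is not needed.
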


\begin{proof}
    Consider the *-homomorphism $\phi\colon \mathrm{End}(o(\vec y))\to 
    \mathbb{B}(H)$ given by $\phi(a) = L_{\vec y,\vec y}(a)$. The 
    semisimplicity of $\mathcal{C}$ implies that $\phi$ is injective. Since 
    every injective *-homomor\-phism between C*-algebras is isometric, it 
    follows that
    $$
        \|L_{\vec x, \vec y}(a)\|^2 = \|L_{\vec x, \vec y}(a)^*L_{\vec x, \vec 
        y}(a)\| = \|L_{\vec y,\vec y}(a^*a)\| = \|\phi(a^*a)\| = \|a^*a\| = 
        \|a\|^2
    $$
    for all $a\in\mathrm{Hom}(o(\vec y),o(\vec x))$.
\end{proof}

For each $n\geq 0$, denote by $\mathcal{B}_n$ the finite-dimensional C*-algebra spanned by the operators of the form $L_{\vec x, \vec y}(a)$, where $x_k = y_k = \mathbbm{1}$ for all $k>n$.
Each $\mathcal{B}_n$ admits a positive faithful trace $\mathrm{Tr}_n$ defined by
$
    \mathrm{Tr}_n(L_{\vec x, \vec y}(a)) = \delta_{\vec x,\vec y}\mathrm{Tr}_{k}^{\mathrm{TL}}(a),
$
where $k$ is the number of entries in $\vec x$ that equal $\pi$. Moreover, $\mathcal{B}_n\subseteq \mathcal{B}_{n+1}$ for all $n$. Denote by $\mathcal{B}$ the smallest C*-subalgebra of $\mathbb{B}(H)$ that contains every $\mathcal{B}_n$, i.e.,
$$
    \mathcal{B} = \overline{\bigcup_{n\geq 0}\mathcal{B}_n}.
$$
The following result describes the structure of $\mathcal{B}$.

\begin{lem}\label{lem:sum}
    We have that
    $$
        \mathcal{B} \cong \bigoplus_{s\in\mathcal{S}}\mathbb{K}(H^s).
    $$
\end{lem}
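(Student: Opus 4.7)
The plan is to prove the inclusion $\mathcal{B}\subseteq \bigoplus_{s}\mathbb{K}(H^s)$ and the reverse inclusion separately, identifying $\bigoplus_{s}\mathbb{K}(H^s)$ with its natural embedding in $\mathbb{B}(H)$ (an operator acts block-diagonally with respect to $H=\bigoplus_s H^s$, and the C*-direct sum norm is the supremum of the component norms).

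First I would observe that each generator $L_{\vec x,\vec y}(a)$ preserves every summand $H^s$, because if $\xi\in\mathrm{Hom}(s,o(\vec\mu))$ then $L_{\vec x,\vec y}(a)\xi\in\mathrm{Hom}(s,o(\vec x))\subset H^s$. Hence $\mathcal{B}$ is block-diagonal, i.e.\ contained in $\prod_s\mathbb{B}(H^s)$. Next, since $\vec x,\vec y\in\mathcal{G}^\infty$ stabilize eventually to $\mathbbm{1}$, the objects $o(\vec x),o(\vec y)$ are finite tensor powers of $\pi$, so each decomposes into finitely many simple summands; thus $L_{\vec x,\vec y}(a)$ acts nontrivially on only finitely many $H^s$. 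On any such $H^s$, the operator factors through the finite-dimensional subspace $\mathrm{Hom}(s,o(\vec y))$ and lands in the finite-dimensional subspace $\mathrm{Hom}(s,o(\vec x))$, so it is finite-rank. Therefore every $\mathcal{B}_n$, and consequently $\mathcal{B}=\overline{\bigcup_n\mathcal{B}_n}$, sits inside $\bigoplus_{s\in\mathcal{S}}\mathbb{K}(H^s)$.

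For the reverse inclusion, fix $s\in\mathcal{S}$ and sequences $\vec x,\vec y\in\mathcal{G}^\infty$. Given $\xi\in\mathrm{Hom}(s,o(\vec x))$ and $\eta\in\mathrm{Hom}(s,o(\vec y))$, the morphism $b=\xi\eta^*$ lies in $\mathrm{Hom}(o(\vec y),o(\vec x))$, and I would check by direct computation that $L_{\vec x,\vec y}(b)$ is exactly the rank-one operator acting as $\zeta\mapsto \xi\langle\eta,\zeta\rangle_{H^s}$ on $H^s$ and as $0$ on $H^t$ for $t\ne s$. The key identity here is $\eta^*\zeta=\langle\eta,\zeta\rangle\,\id_s$ for $\zeta\in\mathrm{Hom}(s,o(\vec y))$, together with $\mathrm{Hom}(t,s)=0$ for $t\neq s$ by simplicity — exactly the inputs provided by semisimplicity of $\mathcal{C}$. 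Since such $L_{\vec x,\vec y}(b)$ lie in some $\mathcal{B}_n\subset\mathcal{B}$, every rank-one operator of this special form belongs to $\mathcal{B}$.

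Finally, because each vector of $H^s$ is a norm-convergent sum of pieces in $\mathrm{Hom}(s,o(\vec\mu))$, any rank-one operator $\xi\langle\eta,\cdot\rangle$ on $H^s$ with arbitrary $\xi,\eta\in H^s$ can be approximated in norm by linear combinations of the special rank-ones considered above; hence $\mathbb{K}(H^s)$, which is the norm closure of the linear span of its rank-one operators, is contained in $\mathcal{B}$ for each $s$. An arbitrary element of $\bigoplus_s\mathbb{K}(H^s)$ is approximated (in the $c_0$-direct-sum norm) by elements supported on finitely many $s$, so it lies in $\mathcal{B}$. The only place that requires some care, and which I expect to be the main obstacle, is checking that all the summands $\mathbb{K}(H^s)$ assemble into $\mathcal{B}$ with the correct C*-direct-sum norm (relevant when $\mathcal{S}$ is infinite, i.e.\ when $\delta=2$); this follows from the block-diagonal structure established in the first step, since a uniform-norm limit of block-diagonal compacts with vanishing block norms is itself a block-diagonal compact.
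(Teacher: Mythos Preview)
Your proposal is correct and follows essentially the same approach as the paper's proof: both establish that the generators $L_{\vec x,\vec y}(a)$ are finite-rank and block-diagonal with respect to $H=\bigoplus_s H^s$, and both produce the rank-one operators $L_{\vec x,\vec y}(\eta\xi^*)\in\mathbb{K}(H^s)$ to show the reverse inclusion. Your write-up is more explicit about block-diagonality and the $c_0$-direct-sum norm (relevant only for $\delta=2$), whereas the paper compresses these points into the phrases ``$\mathcal{B}\subset\mathbb{K}(H)$'' and ``a complete set of matrix units''; but the underlying argument is the same.
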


\begin{proof}
    Note first that $\mathcal{B}\subset\mathbb{K}(H)$. Indeed, each operator $L_{\vec x, \vec y}(a)$ is compact because it can be written as $L_{\vec x, \vec y}(a) P$, where $P$ is the orthogonal projection onto the finite-dimensional subspace $\bigoplus_{s\prec o(\vec y)}\mathrm{Hom}(s,o(\vec y))$. Conversely, if $\xi$ is a unit vector in $\mathrm{Hom}(s,o(\vec y))$ and $\eta$ is a unit vector in $\mathrm{Hom}(s,o(\vec x))$ then $\mathcal{B}$ contains the rank one operator $L_{\vec x,\vec y}(\eta\xi^*)\in\mathbb{K}(H^s)$, which maps $\xi$ onto $\eta$. Thus, for each $s\in\mathcal{S}$, $\mathcal{B}$ contains a complete set of matrix units for $\mathbb{K}(H^s)$. The result follows.
\end{proof}

The next lemma will be used to define certain morphisms between tensor products of $\mathcal{B}$-modules.

\begin{lem}\label{lem:idealize}
     Assume that $\sum_{n\geq 0}v_n$ and $\sum_{n\geq 0}v_n^*$ converge strongly in $\mathbb{B}(H)$, where $v_n\in\mathcal{B}$ for all $n$. Put $v = \sum_{n\geq 0}v_n$. Then $vb\in\mathcal{B}$ and $bv\in\mathcal{B}$ for all $b\in\mathcal{B}$.
\end{lem}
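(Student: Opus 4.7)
The plan is to use that $\Bcal \subset \mathbb{K}(H)$ (by Lemma \ref{lem:sum}) together with Fact \ref{fact:SOTcpt}, which says that strong convergence becomes norm convergence after multiplying (on the right) by a compact operator.

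First, I would handle $vb$. Set $w_N = \sum_{n=0}^N v_n$, which lies in $\Bcal$ since $\Bcal$ is a C*-algebra. By hypothesis $w_N \to v$ strongly. Given $b \in \Bcal$, Lemma \ref{lem:sum} shows $b$ is compact, so Fact \ref{fact:SOTcpt} yields $\|w_N b - vb\| \to 0$. Since each $w_N b \in \Bcal$ and $\Bcal$ is norm-closed in $\BB(H)$, it follows that $vb \in \Bcal$.

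Next I would handle $bv$ by a symmetric argument via adjoints. Let $u$ denote the strong limit of $\sum_{n\geq 0} v_n^*$, which exists by hypothesis. A quick inner product calculation identifies $u = v^*$: for any $\xi,\eta \in H$,
$$
\langle u\xi,\eta\rangle = \lim_{N\to\infty}\sum_{n=0}^N \langle v_n^*\xi,\eta\rangle = \lim_{N\to\infty}\Bigl\langle \xi,\sum_{n=0}^N v_n\eta\Bigr\rangle = \langle \xi, v\eta\rangle.
$$
Now $b^* \in \Bcal$ is compact, so Fact \ref{fact:SOTcpt} applied to $w_N^* \to v^*$ strongly gives $\|w_N^* b^* - v^* b^*\| \to 0$. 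Taking adjoints (an isometry) yields $\|b w_N - bv\| \to 0$. Since each $b w_N \in \Bcal$ and $\Bcal$ is norm-closed, $bv \in \Bcal$.

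The only subtlety is remembering that strong convergence is not preserved under adjoints in general, which is precisely why the hypothesis includes strong convergence of both $\sum v_n$ and $\sum v_n^*$; everything else reduces to the compactness built into $\Bcal$ via Lemma \ref{lem:sum} plus the standard Fact \ref{fact:SOTcpt}. I do not foresee any genuine obstacle.
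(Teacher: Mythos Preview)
Your proof is correct and follows essentially the same approach as the paper: use that $\mathcal{B}\subset\mathbb{K}(H)$ (Lemma~\ref{lem:sum}) together with Fact~\ref{fact:SOTcpt} to upgrade strong convergence to norm convergence after multiplying by $b$ (respectively $b^*$), then conclude by norm-closedness of $\mathcal{B}$. The only difference is that you spell out the identification $u=v^*$ more explicitly than the paper does.
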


\begin{proof}
Note that $v^* = \sum_{n\geq 0}v_n^*$, where the sum converges in the strong operator topology.
Let $b\in\mathcal{B}$ be given.
    By Fact \ref{fact:SOTcpt}, $\sum_{n\geq 0}v_nb$ converges to $vb$ in norm because $b\in\mathbb{K}(H)$. Similarly, $\sum_{n\geq 0}v_n^*b^*$ converges to $v^*b^*$ in norm. Since $\mathcal{B}$ is a C*-subalgebra of $\mathbb{B}(H)$, the lemma follows.
\end{proof}

\subsection{Diagrammatic operators}\label{sec:diagram}

In effect, the above construction allows us to associate operators to certain 
kinds of diagrams. These diagrams all consist of strands inside a rectangle 
with an infinite sequence of nodes, some empty and some non-empty (or 
filled-in), attached to each of its (left and right) sides such that every 
strand connects two distinct non-empty nodes and every non-empty node is the 
end point of a unique strand. The simplest such diagram is a dilute 
Temperley--Lieb diagram (cf.\ e.g.\ \cite{BS}). It has only finitely many 
non-empty nodes, which are connected by non-crossing strands. The top of such a 
diagram is depicted below.
\begin{center}
	\begin{tikzpicture} [scale=.75]
		\begin{scope}[ultra thick]
			\draw (0,0) -- (0,2.5);
			\draw (0,2.5) -- (3,2.5);
			\draw (3,0) -- (3,2.5);
						
			\begin{scope}[color=blue]
			\draw (0,2.25) -- (3,2.25);
			\draw (0,.75) .. controls (1,1) and (2,1.5) .. (3,1.75);
			\draw (3,1.25) arc (90:270:.5);
			
			\draw[fill=blue] (0,2.25) circle (.1cm);
			\draw[fill=white] (0,1.75) circle (.1cm);
			\draw[fill=white] (0,1.25) circle (.1cm);
			\draw[fill=blue] (0,0.75) circle (.1cm);
			\draw[fill=white] (0,0.25) circle (.1cm);

            \node() at (-0.5,2.25) {$1$};
            \node() at (-0.5,1.75) {$2$};
			\node() at (-0.5,1.25) {$3$};
			\node() at (-0.5,0.75) {$4$};
			\node() at (-0.5,0.25) {$5$};

			\draw[fill=blue] (3,2.25) circle (.1cm);
			\draw[fill=blue] (3,1.75) circle (.1cm);
			\draw[fill=blue] (3,1.25) circle (.1cm);
			\draw[fill=white] (3,0.75) circle (.1cm);
			\draw[fill=blue] (3,0.25) circle (.1cm);
			
            \node() at (3.5,2.25) {$1$};
            \node() at (3.5,1.75) {$2$};
			\node() at (3.5,1.25) {$3$};
			\node() at (3.5,0.75) {$4$};
			\node() at (3.5,0.25) {$5$};
			\end{scope}
		\end{scope}
	\end{tikzpicture}
\end{center}
The diagram in the figure gives rise to the operator $L_{\vec x,\vec y}(a)$, 
where $\vec x = (\pi,\mathbbm{1},\mathbbm{1},\pi,\mathbbm{1},\ldots)$, $\vec y 
= (\pi,\pi,\pi,\mathbbm{1},\pi,\ldots)$, and $a$ is the morphism given by the 
pictured Temperley--Lieb diagram. By definition, the C*-algebra $\mathcal{B}$ 
is generated by operators arising from dilute Temperley--Lieb diagrams.

The following figure illustrates the product of two diagrammatic operators. Note that the patterns of empty and non-empty nodes have to match in the middle for the product to be non-zero.
\begin{center}
	\begin{tikzpicture} [scale=.75]
		\begin{scope}[ultra thick]
			\node() at (3.5,1.25) {\tiny $\bullet$};
			\node() at (15.5,1.25) {\Huge $=\!\delta$};
            \node() at (7.75,1.25) {\Huge $=$};
			
			\draw (0,0) -- (0,2.5);
			\draw (0,2.5) -- (3,2.5);
			\draw (3,0) -- (3,2.5);
			
			\draw (4,0) -- (4,2.5);
			\draw (4,2.5) -- (7,2.5);
			\draw (7,0) -- (7,2.5);

            \draw (8.5,0) -- (8.5,2.5);
            \draw (8.5,2.5) -- (14.5,2.5);
            \draw (14.5,0) -- (14.5,2.5);
			
			\draw (16.5,0) -- (16.5,2.5);
			\draw (16.5,2.5) -- (19.5,2.5);
			\draw (19.5,0) -- (19.5,2.5);
			
			\begin{scope}[color=blue]
			\draw (0,2.25) -- (3,2.25);
			\draw (0,.75) .. controls (1,1) and (2,1.5) .. (3,1.75);
			\draw (3,1.25) arc (90:270:.5);
			
			\draw (4,2.25) arc (90:-90:.25);
			\draw (4,1.25) arc (90:-90:.5);
			\draw (7,2.25) arc (90:270:1);
			\draw (7,1.25) arc (90:270:.25);

			\draw (8.5,2.25) -- (11.5,2.25);
			\draw (8.5,.75) .. controls (9.5,1) and (10.5,1.5) .. (11.5,1.75);
			\draw (11.5,1.25) arc (90:270:.5);
			
			\draw (11.5,2.25) arc (90:-90:.25);
			\draw (11.5,1.25) arc (90:-90:.5);
			\draw (14.5,2.25) arc (90:270:1);
			\draw (14.5,1.25) arc (90:270:.25);
			
			\draw (16.5,2.25) arc (90:-90:.75);
			\draw (19.5,2.25) arc (90:270:1);
			\draw (19.5,1.25) arc (90:270:.25);
			
			\draw[fill=blue] (0,2.25) circle (.1cm);
			\draw[fill=white] (0,1.75) circle (.1cm);
			\draw[fill=white] (0,1.25) circle (.1cm);
			\draw[fill=blue] (0,0.75) circle (.1cm);
			\draw[fill=white] (0,0.25) circle (.1cm);
			
			\draw[fill=blue] (3,2.25) circle (.1cm);
			\draw[fill=blue] (3,1.75) circle (.1cm);
			\draw[fill=blue] (3,1.25) circle (.1cm);
			\draw[fill=white] (3,0.75) circle (.1cm);
			\draw[fill=blue] (3,0.25) circle (.1cm);
			
			\draw[fill=blue] (4,2.25) circle (.1cm);
			\draw[fill=blue] (4,1.75) circle (.1cm);
			\draw[fill=blue] (4,1.25) circle (.1cm);
			\draw[fill=white] (4,0.75) circle (.1cm);
			\draw[fill=blue] (4,0.25) circle (.1cm);
			
			\draw[fill=blue] (7,2.25) circle (.1cm);
			\draw[fill=white] (7,1.75) circle (.1cm);
			\draw[fill=blue] (7,1.25) circle (.1cm);
			\draw[fill=blue] (7,0.75) circle (.1cm);
			\draw[fill=blue] (7,0.25) circle (.1cm);

			\draw[fill=blue] (8.5,2.25) circle (.1cm);
			\draw[fill=white] (8.5,1.75) circle (.1cm);
			\draw[fill=white] (8.5,1.25) circle (.1cm);
			\draw[fill=blue] (8.5,0.75) circle (.1cm);
			\draw[fill=white] (8.5,0.25) circle (.1cm);

			\draw[fill=blue] (11.5,2.25) circle (.1cm);
			\draw[fill=blue] (11.5,1.75) circle (.1cm);
			\draw[fill=blue] (11.5,1.25) circle (.1cm);
			\draw[fill=white] (11.5,0.75) circle (.1cm);
			\draw[fill=blue] (11.5,0.25) circle (.1cm);

			\draw[fill=blue] (14.5,2.25) circle (.1cm);
			\draw[fill=white] (14.5,1.75) circle (.1cm);
			\draw[fill=blue] (14.5,1.25) circle (.1cm);
			\draw[fill=blue] (14.5,0.75) circle (.1cm);
			\draw[fill=blue] (14.5,0.25) circle (.1cm);
			
			\draw[fill=blue] (16.5,2.25) circle (.1cm);
			\draw[fill=white] (16.5,1.75) circle (.1cm);
			\draw[fill=white] (16.5,1.25) circle (.1cm);
			\draw[fill=blue] (16.5,0.75) circle (.1cm);
			\draw[fill=white] (16.5,0.25) circle (.1cm);
			
			\draw[fill=blue] (19.5,2.25) circle (.1cm);
			\draw[fill=white] (19.5,1.75) circle (.1cm);
			\draw[fill=blue] (19.5,1.25) circle (.1cm);
			\draw[fill=blue] (19.5,0.75) circle (.1cm);
			\draw[fill=blue] (19.5,0.25) circle (.1cm);
			\end{scope}
		\end{scope}
	\end{tikzpicture}
\end{center}

The unitary braiding on $\mathcal{C}$ allows us to also associate operators to certain diagrams that involve crossings. For instance, we can associate operators to what one might call ``finite dilute braid diagrams''. Such a diagram has only finitely many non-empty nodes (which is what the term ``finite'' in the name of the diagrams refers to). Moreover, every strand connects a node on the left side to one on the right side, and any two given  strands are only allowed to cross a finite number of times. The top of a sample diagram of this type is shown below.
\begin{center}\label{fig:ass}
    \begin{tikzpicture}[scale=.75]
        \begin{scope}[ultra thick]
            \draw (0,0) -- (0,4);
            \draw (0,4) -- (4,4);
            \draw (4,0) -- (4,4);

            \aline{0}{0.25}{4}{2.25}{red}
            \aline{0}{2.25}{4}{3.25}{red}
			
            \sline{0}{1.25}{4}{0.75}{green}
            \sline{0}{3.25}{4}{2.75}{green}

            \sline{0}{3.75}{4}{3.75}{blue}
            \sline{0}{2.75}{4}{1.75}{blue}

			\begin{scope}[color=red]
                \draw[fill=red] (0,0.25) circle (.1cm);
                \node() at (-0.5,0.25) {\tiny $8$};
                \draw[fill=red] (0,2.25) circle (.1cm);
                \node() at (-0.5,2.25) {\tiny $4$};

                \draw[fill=red] (4,3.25) circle (.1cm);
                \node() at (4.5,3.25) {\tiny $2$};
                \draw[fill=red] (4,2.25) circle (.1cm);
                \node() at (4.5,2.25) {\tiny $4$};
                \draw[fill=white] (4,1.25) circle (.1cm);
                \node() at (4.5,1.25) {\tiny $6$};
                \draw[fill=white] (4,0.25) circle (.1cm);
                \node() at (4.5,0.25) {\tiny $8$};
            \end{scope}

            \begin{scope}[color=blue]
            	\draw[fill=blue] (0,3.75) circle (.1cm);
                \node() at (-0.5,3.75) {\tiny $1$};
            	\draw[fill=blue] (0,2.75) circle (.1cm);
                \node() at (-0.5,2.75) {\tiny $3$};
            	\draw[fill=white] (0,1.75) circle (.1cm);
                \node() at (-0.5,1.75) {\tiny $5$};
            	\draw[fill=white] (0,0.75) circle (.1cm);
                \node() at (-0.5,0.75) {\tiny $7$};
            	
            	\draw[fill=blue] (4,3.75) circle (.1cm);
                \node() at (4.5,3.75) {\tiny $1$};
            	\draw[fill=blue] (4,1.75) circle (.1cm);
                \node() at (4.5,1.75) {\tiny $5$};
            \end{scope}

            \begin{scope}[color=green]
            	\draw[fill=green] (0,3.25) circle (.1cm);
                \node() at (-0.5,3.25) {\tiny $2$};
            	\draw[fill=green] (0,1.25) circle (.1cm);
                \node() at (-0.5,1.25) {\tiny $6$};
            	
            	\draw[fill=green] (4,2.75) circle (.1cm);
                \node() at (4.5,2.75) {\tiny $3$};
            	\draw[fill=green] (4,0.75) circle (.1cm);
                \node() at (4.5,0.75) {\tiny $7$};
            \end{scope}
        \end{scope}
    \end{tikzpicture}
\end{center}
If one such diagram can be obtained from another by a finite sequence of Reidemeister moves of types 2 and 3 then these two diagrams give rise to the same operator. Indeed, the unitary braiding engenders, in a natural way, a group homomorphism from Artin's braid group on $n$ strands into the unitary group of $\mathrm{End}(\pi^{\otimes n})$ for every $n$ (see e.g.\ page 374 in \cite{EW}).
In particular, every finite dilute braid diagram gives rise to a partial isometry in $\mathcal{B}$.

We will also in a slightly different way associate operators to what might be termed ``(possibly) infinite dilute braid diagrams''. These diagrams are defined in the same way as their finite cousins, except that they are allowed to have infinitely many non-empty nodes and hence infinitely many strands. Let $D$ be such a diagram and denote by $\ell(D)$ the pattern of empty and non-empty nodes on its left side. Denote by $\supp(D)$ the set of patterns that can be obtained from $\ell(D)$ by replacing all but finitely many non-empty nodes by empty ones. Given $\vec x\in \supp(D)$, we get a finite dilute braid diagram $D_{\vec x}$\label{def:Dx} by removing from $D$ every strand whose left end point corresponds to an empty node in $\vec x$ and replacing both end points of each removed strand by empty nodes. As mentioned above, this new diagram gives rise to a partial isometry in $\mathcal{B}$, which we denote by $v(D,\vec x)$\label{def:vDx}. Since
$$
    v(D,\vec x)^* v(D,\vec y) = v(D,\vec x)v(D,\vec y)^* = 0
$$
whenever $\vec x\neq \vec y$, Fact \ref{lem:SOT} implies that $\sum_{\vec x\in\supp(D)} v(D,\vec x)$ is strongly convergent in $\mathbb{B}(H)$. We put
$$
    v(D) = \sum_{\vec x\in\supp(D)} v(D,\vec x).
$$
Although $v(D)$ need not belong to $\mathcal{B}$, Fact \ref{lem:SOT} and Lemma \ref{lem:idealize} imply that
$$
    v(D)\cdot b\in\mathcal{B},\qquad b\cdot v(D)\in\mathcal{B}
$$
for all $b\in\mathcal{B}$. If $D$ has no empty nodes then $v(D)$ is a unitary operator in $\mathbb{B}(H)$. This follows from the fact that multiplication in $\mathbb{B}(H)$ is jointly strongly continuous on bounded sets. In general, $v(D)$ is a partial isometry in $\mathbb{B}(H)$ whose range projection is $\sum_{\vec x\in\supp(D)}p_{\vec x}$. (Recall that $p_{\vec x}$ was defined on page \pageref{def:px}.)

\subsection{Definition of $\Phi\colon \mathcal{B}\otimes\mathcal{B}\to\mathcal{B}$}\label{sec:Phi}

Define, for each $n\geq 0$, a unitary element $U_n\in\mathcal{B}_{2n}$ in terms of the unitary braiding $\sigma^{\mathrm{TL}}$ on $\mathcal{C}$ in the same way as on page 374 in \cite{EW} (when $s$ there is $2$), except that we sum over all patterns $\vec x\in\mathcal{G}^{2n}$. As an example, the following figure shows two of the terms in the definition of $U_3$.
\begin{center}
    \begin{tikzpicture}[scale = .75]
        \begin{scope}[ultra thick]
            \node() at (4,1.5) {\Huge +};
            \node() at (9,1.5) {\Huge +};
            \node() at (10.55,1.5) {\Huge $\cdots$};

            \draw (0,0) -- (0,3);
            \draw (0,3) -- (3,3);
            \draw (3,0) -- (3,3);

            \draw (5,0) -- (5,3);
            \draw (5,3) -- (8,3);
            \draw (8,0) -- (8,3);

            \aline{0}{0.25}{3}{0.25}{blue}
            \aline{1.5}{0.75}{3}{0.75}{blue}
            \aline{0}{1.25}{1.5}{0.75}{blue}
            \aline{0}{2.25}{1.5}{1.75}{blue}
            \aline{1.5}{1.75}{3}{1.25}{blue}
            \aline{5}{0.25}{8}{0.25}{blue}
            \aline{6.5}{0.75}{8}{0.75}{blue}
            \aline{5}{1.25}{6.5}{0.75}{blue}

            \sline{0}{2.75}{3}{2.75}{red}
            \sline{1.5}{2.25}{3}{2.25}{red}
            \sline{0}{1.75}{1.5}{2.25}{red}
            \sline{0}{0.75}{1.5}{1.25}{red}
            \sline{1.5}{1.25}{3}{1.75}{red}
            \sline{5}{2.75}{8}{2.75}{red}
            \sline{5}{0.75}{6.5}{1.25}{red}
            \sline{6.5}{1.25}{8}{1.75}{red}

            \begin{scope}[color=blue]
                \draw[fill=blue] (0,0.25) circle (.1cm);
                \draw[fill=blue] (0,1.25) circle (.1cm);
                \draw[fill=blue] (0,2.25) circle (.1cm);
                \draw[fill=blue] (3,0.25) circle (.1cm);
                \draw[fill=blue] (3,0.75) circle (.1cm);
                \draw[fill=blue] (3,1.25) circle (.1cm);

                \draw[fill=blue] (5,0.25) circle (.1cm);
                \draw[fill=blue] (5,1.25) circle (.1cm);
                \draw[fill=white] (5,2.25) circle (.1cm);
                \draw[fill=blue] (8,0.25) circle (.1cm);
                \draw[fill=blue] (8,0.75) circle (.1cm);
                \draw[fill=white] (8,1.25) circle (.1cm);
            \end{scope}

            \begin{scope}[color=red]
                \draw[fill=red] (0,0.75) circle (.1cm);
                \draw[fill=red] (0,1.75) circle (.1cm);
                \draw[fill=red] (0,2.75) circle (.1cm);
                \draw[fill=red] (3,1.75) circle (.1cm);
                \draw[fill=red] (3,2.25) circle (.1cm);
                \draw[fill=red] (3,2.75) circle (.1cm);

                \draw[fill=red] (5,0.75) circle (.1cm);
                \draw[fill=white] (5,1.75) circle (.1cm);
                \draw[fill=red] (5,2.75) circle (.1cm);
                \draw[fill=red] (8,1.75) circle (.1cm);
                \draw[fill=white] (8,2.25) circle (.1cm);
                \draw[fill=red] (8,2.75) circle (.1cm);
            \end{scope}
        \end{scope}
	\end{tikzpicture}	
\end{center}
We can think of $U_3$ as $v(D)$, where $D$ is the diagram on the left, all nodes below the displayed part of the diagram being empty. However, in this case it is just a finite sum.

We can now define a *-homomorphism $\Phi_n\colon\mathcal{B}_n\otimes\mathcal{B}_n\to \mathcal{B}_{2n}$ by
$$
    L_{\vec x, \vec y}(a)\otimes L_{\vec v, \vec w}(b)\longmapsto U_n\circ L_{\vec x\vec v, \vec y\vec w}(a\otimes b)\circ U_n^*,
$$
where $\vec x\vec v = (x_1,\ldots,x_n,v_1,\ldots,v_n,\ldots)$ (and similarly for $\vec y\vec w$). The faithfulness of the traces $\mathrm{Tr}_n$ and the fact that $\mathrm{Tr}_{2n}\circ \Phi_n = \mathrm{Tr}_n\otimes\mathrm{Tr}_n$ on elements of the form $L_{\vec x, \vec y}(a)\otimes L_{\vec v, \vec w}(b)$ imply that $\Phi_n$ is a well-defined isometric *-homomorphism.
The purpose of the unitaries $U_n$ is to ensure that
$$
    \Phi_{n+1}\circ (\iota_n\otimes\iota_n) = \iota_{2n+1}\circ\iota_{2n}\circ \Phi_n
$$
for all $n\geq 0$, where $\iota_n$ is the inclusion map $\mathcal{B}_n\to\mathcal{B}_{n+1}$.
This allows us to extend the *-homo\-morphisms $\Phi_n$ to an isometric *-homomorphism
$$
    \Phi\colon\mathcal{B}\otimes\mathcal{B}\longrightarrow\mathcal{B}.
$$
Diagrammatically, the effect of applying $\Phi$ to a tensor product $L_{\vec x, 
\vec y}(a)\otimes L_{\vec v, \vec w}(b)$ of operators arising from dilute 
Temperley--Lieb or braid diagrams
is to superimpose the one on the left on top of the one on the right in such a way that the nodes are interleaved.

\begin{rem}\label{rem:Kring}
By Lemma \ref{lem:sum}, $K_0(\mathcal{B})$ is isomorphic to the fusion ring $\mathbb{Z}[\mathcal{S}]$ as an abelian group. It is also easy to check that the induced product map
$$
    K_0(\Phi)\colon K_0(\mathcal{B})\otimes_{\mathbb{Z}}K_0(\mathcal{B})\longrightarrow K_0(\mathcal{B})
$$
on $K_0(\mathcal{B})$ agrees with the product on the fusion ring. (This boils down to the fact that $L_{\vec\mu,\vec\mu}(vv^*)$ is a rank one projection in $\mathbb{K}(H^s)$ for any $\vec\mu\in\mathcal{G}^\infty$ and any unit vector $v\in\mathrm{Hom}(s,o(\vec\mu))$.) Below, we will ``categorify'' this statement, by using $\Phi$ to define a tensor product of right Hilbert $\mathcal{B}$-modules that recaptures the tensor product in $\mathcal{C}$ (see also Remark \ref{rem:categorification}).
\end{rem}

\section{On the braided C*-tensor categories $\mathrm{Mod}_{\mathcal{B}}$ and $\mathrm{Mod}_{\mathcal{B}}^f$}\label{sec:Monoidal}

In this section, we use the *-homomorphism $\Phi$ from the previous section to endow the category $\ModB$ of (right) Hilbert $\mathcal{B}$-modules with the structure of a braided C*-tensor category. We also introduce the full subcategory $\ModB^f$ of modules admitting a finite orthonormal basis.

\subsection{A tensor product of right Hilbert $\mathcal{B}$-modules}\label{sec:Tensor}

Given two right Hilbert $\mathcal{B}$-modules $M_1$ and $M_2$, we define their tensor product by
$$
    M_1\otimes M_2 = (M_1\boxtimes M_2)\otimes_{\Phi}\mathcal{B},
$$
where $\Phi\colon\mathcal{B}\otimes\mathcal{B}\to\mathcal{B}$ is the *-homomorphism from the previous section. (See section \ref{sec:HilbMod} for an explanation of the notation.) Given adjointable maps $f_1\colon M_1\to N_1$ and $f_2\colon M_2\to N_2$ between right Hilbert $\mathcal{B}$-modules, we denote by $f_1\otimes f_2$ the adjointable map $M_1\otimes M_2\to N_1\otimes N_2$ given by
$$
    (f_1\otimes f_2)(\xi_1\otimes\xi_2\otimes b) = f_1(\xi_1)\otimes f_2(\xi_2)\otimes b.
$$
for $\xi_1\in M_1$, $\xi_2\in M_2$ and $b\in\mathcal{B}$.

As a simple example, let $p$ and $q$ be projections in $\mathcal{B}$. Then $p\mathcal{B}$ and $q\mathcal{B}$ are right Hilbert $\mathcal{B}$-modules (with inner product given by $(a,b)\mapsto a^*b$) and there exists a surjective $\mathcal{B}$-linear isometry $p\mathcal{B}\otimes q\mathcal{B}\to \Phi(p\otimes q)\mathcal{B}$ defined by $pa\otimes qb\otimes c\mapsto \Phi(pa\otimes qb)c$ for $a,b,c\in\mathcal{B}$.

We next relate the above tensor product to the standard direct sum of Hilbert $\mathcal{B}$-modules. Given finite families $(M_i)_{i\in I}$ and $(N_j)_{j\in J}$ of right Hilbert $\mathcal{B}$-modules, we have a surjective $\mathcal{B}$-linear isometry
$$
    \phi\colon (\oplus_i M_i)\otimes (\oplus_j N_j)\longrightarrow \oplus_{(i,j)}(M_i\otimes N_j)
$$
defined by $(\xi_i)_i\otimes (\eta_j)_j\otimes b\mapsto (\xi_i\otimes\eta_j\otimes b)_{(i,j)}$ for $\xi_i\in M_i$, $\eta_j\in N_j$ and $b\in\mathcal{B}$.

\subsection{The C*-tensor category $\mathrm{Mod}_{\mathcal{B}}$}\label{sec:ModB}

We denote by $\ModB$ the category whose objects are all right Hilbert $\mathcal{B}$-modules and whose morphism sets $\mathrm{Hom}(M,N)$ consist of all adjointable (or, equivalently, all bounded $\mathcal{B}$-linear, cf.\ \cite{F}) maps $M\to N$. Below, we will endow this category with the structure of a C*-tensor category. Note first that conditions (1), (2), (3), (6) and (7) in section \ref{sec:TensorCat} follow from the general theory of Hilbert C*-modules.
Thus, our goal in the present section is to
define associators, a tensor unit, and unit constraints satisfying conditions (4), (5) and (8).

\subsubsection{Associators}

We begin by defining associators in \ModB. To do so, we first define a unitary operator $V\in \mathbb{B}(H)$ as the operator associated to the following infinite braid diagram $D^\alpha$. (Note that, in notation introduced on page \pageref{def:Dx}, the multi-colored figure on page \pageref{fig:ass}
depicts $D^\alpha_{\vec x}$ ($=(D^\alpha)_{\vec x}$), where $\vec x = (\pi,\pi,\pi,\pi,\mathbbm{1},\pi,\mathbbm{1},\pi,\ldots)$.) First connect the nodes numbered $4$, $8$, $12$, $\ldots$ on the left side to those numbered $2$, $4$, $6$, $\ldots$ on the right side by strands in order. (These nodes and strands are colored red in the aforementioned figure.) Next connect, by (green) strands that cross over the ones already drawn, the nodes on the left side numbered $2$, $6$, $10$, $\ldots$ to those numbered $3$, $7$, $11$, $\ldots$ on the right side. Finally, connect, by (blue) strands that cross over the ones already drawn, the nodes on the left side numbered $1$, $3$, $5$, $\ldots$ to those numbered $1$, $5$, $9$, $\ldots$ on the right side.

We next observe that
\begin{equation}\label{eq:V}
    V\Phi(\Phi(b_1\otimes b_2)\otimes b_3)V^* = \Phi(b_1\otimes\Phi(b_2\otimes b_3))
\end{equation}
for all $b_1,b_2,b_3\in\mathcal{B}$. The following figures illustrate the case when $b_1,b_2,b_3\in\mathcal{B}_2$. In that case, the left hand side of equation (\ref{eq:V}) arises from the diagram
\begin{center}
    \begin{tikzpicture}[scale=0.75]
        \begin{scope}[ultra thick]
            \draw (0,0) -- (0,4);
            \draw (0,4) -- (13,4);
            \draw (13,4) -- (13,0);

            \draw (6,1.125) rectangle (7,1.875);
            \node() at (6.5,1.5) {\color{red}\Large $\boldsymbol{b_3}$};

            \aline{0}{2.25}{2}{3.25}{red}
            \aline{2}{3.25}{4}{1.75}{red}
            \aline{4}{1.75}{6}{1.75}{red}
            \aline{0}{0.25}{2}{2.25}{red}
            \aline{2}{2.25}{4}{1.25}{red}
            \aline{4}{1.25}{6}{1.25}{red}
            \draw[color=red,dotted,thick] (2,1.25) -- (4,0.75);
            \draw[color=red,dotted,thick] (4,0.75) -- (6,0.75);
            \draw[color=red,dotted,thick] (2,0.25) -- (6,0.25);

            \aline{13}{2.25}{11}{3.25}{red}
            \aline{11}{3.25}{9}{1.75}{red}
            \aline{9}{1.75}{7}{1.75}{red}
            \aline{13}{0.25}{11}{2.25}{red}
            \aline{11}{2.25}{9}{1.25}{red}
            \aline{9}{1.25}{7}{1.25}{red}
            \draw[color=red,dotted,thick] (11,1.25) -- (9,0.75);
            \draw[color=red,dotted,thick] (9,0.75) -- (7,0.75);
            \draw[color=red,dotted,thick] (11,0.25) -- (7,0.25);

            \draw (6,2.125) rectangle (7,2.875);
            \node() at (6.5,2.5) {\color{green}\Large $\boldsymbol{b_2}$};

            \sline{0}{3.25}{2}{2.75}{green}
            \sline{2}{2.75}{4}{3.25}{green}
            \sline{4}{3.25}{6}{2.75}{green}
            \sline{0}{1.25}{2}{0.75}{green}
            \sline{2}{0.75}{4}{2.25}{green}
            \sline{4}{2.25}{6}{2.25}{green}

            \sline{13}{3.25}{11}{2.75}{green}
            \sline{11}{2.75}{9}{3.25}{green}
            \sline{9}{3.25}{7}{2.75}{green}
            \sline{13}{1.25}{11}{0.75}{green}
            \sline{11}{0.75}{9}{2.25}{green}
            \sline{9}{2.25}{7}{2.25}{green}

            \draw (6,3.125) rectangle (7,3.875);
            \node() at (6.5,3.5) {\color{blue}\Large $\boldsymbol{b_1}$};

            \sline{0}{3.75}{6}{3.75}{blue}
            \sline{0}{2.75}{2}{1.75}{blue}
            \sline{2}{1.75}{4}{2.75}{blue}
            \sline{4}{2.75}{6}{3.25}{blue}

            \sline{7}{3.75}{13}{3.75}{blue}
            \sline{7}{3.25}{9}{2.75}{blue}
            \sline{9}{2.75}{11}{1.75}{blue}
            \sline{11}{1.75}{13}{2.75}{blue}

            \begin{scope}[color=red]
                \draw[fill=red] (0,2.25) circle (.1cm);
                \draw[fill=red] (0,0.25) circle (.1cm);

                \draw[fill=red] (2,3.25) circle (.1cm);
                \draw[fill=red] (2,2.25) circle (.1cm);

                \draw[fill=white] (2,1.25) circle (.1cm);
                \draw[fill=white] (2,0.25) circle (.1cm);

                \draw[fill=red] (4,1.75) circle (.1cm);
                \draw[fill=red] (4,1.25) circle (.1cm);

                \draw[fill=white] (4,0.75) circle (.1cm);
                \draw[fill=white] (4,0.25) circle (.1cm);

                \draw[fill=red] (6,1.75) circle (.1cm);
                \draw[fill=red] (6,1.25) circle (.1cm);

                \draw[fill=white] (6,0.75) circle (.1cm);
                \draw[fill=white] (6,0.25) circle (.1cm);

                \draw[fill=red] (13,2.25) circle (.1cm);
                \draw[fill=red] (13,0.25) circle (.1cm);

                \draw[fill=red] (11,3.25) circle (.1cm);
                \draw[fill=red] (11,2.25) circle (.1cm);

                \draw[fill=white] (11,1.25) circle (.1cm);
                \draw[fill=white] (11,0.25) circle (.1cm);

                \draw[fill=red] (9,1.75) circle (.1cm);
                \draw[fill=red] (9,1.25) circle (.1cm);

                \draw[fill=white] (9,0.75) circle (.1cm);
                \draw[fill=white] (9,0.25) circle (.1cm);

                \draw[fill=red] (7,1.75) circle (.1cm);
                \draw[fill=red] (7,1.25) circle (.1cm);

                \draw[fill=white] (7,0.75) circle (.1cm);
                \draw[fill=white] (7,0.25) circle (.1cm);
            \end{scope}

            \begin{scope}[color=green]
                \draw[fill=green] (0,3.25) circle (.1cm);
                \draw[fill=green] (0,1.25) circle (.1cm);

                \draw[fill=green] (2,2.75) circle (.1cm);
                \draw[fill=green] (2,0.75) circle (.1cm);

                \draw[fill=green] (4,3.25) circle (.1cm);
                \draw[fill=green] (4,2.25) circle (.1cm);

                \draw[fill=green] (6,2.75) circle (.1cm);
                \draw[fill=green] (6,2.25) circle (.1cm);

                \draw[fill=green] (7,2.75) circle (.1cm);
                \draw[fill=green] (7,2.25) circle (.1cm);

                \draw[fill=green] (11,2.75) circle (.1cm);
                \draw[fill=green] (11,0.75) circle (.1cm);

                \draw[fill=green] (9,3.25) circle (.1cm);
                \draw[fill=green] (9,2.25) circle (.1cm);

                \draw[fill=green] (13,3.25) circle (.1cm);
                \draw[fill=green] (13,1.25) circle (.1cm);
            \end{scope}

            \begin{scope}[color=blue]
                \draw[fill=blue] (0,3.75) circle (.1cm);
                \draw[fill=blue] (0,2.75) circle (.1cm);
                \draw[fill=white] (0,1.75) circle (.1cm);
                \draw[fill=white] (0,0.75) circle (.1cm);

                \draw[fill=blue] (2,3.75) circle (.1cm);
                \draw[fill=blue] (2,1.75) circle (.1cm);

                \draw[fill=blue] (4,3.75) circle (.1cm);
                \draw[fill=blue] (4,2.75) circle (.1cm);

                \draw[fill=blue] (6,3.75) circle (.1cm);
                \draw[fill=blue] (6,3.25) circle (.1cm);
                \draw[fill=blue] (7,3.75) circle (.1cm);
                \draw[fill=blue] (7,3.25) circle (.1cm);

                \draw[fill=blue] (11,3.75) circle (.1cm);
                \draw[fill=blue] (11,1.75) circle (.1cm);

                \draw[fill=blue] (9,3.75) circle (.1cm);
                \draw[fill=blue] (9,2.75) circle (.1cm);

                \draw[fill=blue] (13,3.75) circle (.1cm);
                \draw[fill=blue] (13,2.75) circle (.1cm);
                \draw[fill=white] (13,1.75) circle (.1cm);
                \draw[fill=white] (13,0.75) circle (.1cm);
            \end{scope}
        \end{scope}
    \end{tikzpicture}
\end{center}
Note that, in the above figure and the one below, the dotted lines do not represent strands, but only serve to keep track of the positions of empty nodes. Also, depending on which of the four nodes attached to each $b_i$ are empty and non-empty, the solid lines may or may not represent strands. For comparison, the right hand side of equation (\ref{eq:V}) arises from the diagram
\begin{center}
    \begin{tikzpicture}[scale=0.75]
        \begin{scope}[ultra thick]
            \draw (2,0) -- (2,4);
            \draw (2,4) -- (11,4);
            \draw (11,4) -- (11,0);

            \draw[dotted,color=blue,thick] (2,1.75) -- (4,2.75);
            \draw[dotted,color=blue,thick] (4,2.75) -- (6,2.75);
            \draw[dotted,color=blue,thick] (2,0.75) -- (4,2.25);
            \draw[dotted,color=blue,thick] (4,2.25) -- (6,2.25);

            \draw[dotted,color=blue,thick] (11,1.75) -- (9,2.75);
            \draw[dotted,color=blue,thick] (9,2.75) -- (7,2.75);
            \draw[dotted,color=blue,thick] (11,0.75) -- (9,2.25);
            \draw[dotted,color=blue,thick] (9,2.25) -- (7,2.25);

            \draw (6,0.125) rectangle (7,0.875);
            \node() at (6.5,0.5) {\color{red}\Large $\boldsymbol{b_3}$};

            \sline{2}{2.25}{4}{1.25}{red}
            \sline{4}{1.25}{6}{0.75}{red}
            \sline{2}{0.25}{6}{0.25}{red}

            \sline{11}{2.25}{9}{1.25}{red}
            \sline{9}{1.25}{7}{0.75}{red}
            \sline{11}{0.25}{7}{0.25}{red}

            \draw (6,1.125) rectangle (7,1.875);
            \node() at (6.5,1.5) {\color{green}\Large $\boldsymbol{b_2}$};

            \sline{2}{3.25}{4}{1.75}{green}
            \sline{4}{1.75}{6}{1.75}{green}
            \sline{2}{1.25}{4}{0.75}{green}
            \sline{4}{0.75}{6}{1.25}{green}

            \sline{11}{3.25}{9}{1.75}{green}
            \sline{9}{1.75}{7}{1.75}{green}
            \sline{11}{1.25}{9}{0.75}{green}
            \sline{9}{0.75}{7}{1.25}{green}

            \draw (6,3.125) rectangle (7,3.875);
            \node() at (6.5,3.5) {\color{blue}\Large $\boldsymbol{b_1}$};

            \sline{2}{3.75}{6}{3.75}{blue}
            \sline{2}{2.75}{4}{3.25}{blue}
            \sline{4}{3.25}{6}{3.25}{blue}

            \sline{11}{3.75}{7}{3.75}{blue}
            \sline{11}{2.75}{9}{3.25}{blue}
            \sline{9}{3.25}{7}{3.25}{blue}

            \begin{scope}[color=red]
                \draw[fill=red] (2,2.25) circle (.1cm);
                \draw[fill=red] (2,0.25) circle (.1cm);

                \draw[fill=red] (4,1.25) circle (.1cm);
                \draw[fill=red] (4,0.25) circle (.1cm);

                \draw[fill=red] (6,0.75) circle (.1cm);
                \draw[fill=red] (6,0.25) circle (.1cm);

                \draw[fill=red] (9,1.25) circle (.1cm);
                \draw[fill=red] (9,0.25) circle (.1cm);

                \draw[fill=red] (7,0.75) circle (.1cm);
                \draw[fill=red] (7,0.25) circle (.1cm);

                \draw[fill=red] (11,2.25) circle (.1cm);
                \draw[fill=red] (11,0.25) circle (.1cm);
            \end{scope}

            \begin{scope}[color=green]
                \draw[fill=green] (2,3.25) circle (.1cm);
                \draw[fill=green] (2,1.25) circle (.1cm);

                \draw[fill=green] (4,1.75) circle (.1cm);
                \draw[fill=green] (4,0.75) circle (.1cm);

                \draw[fill=green] (6,1.75) circle (.1cm);
                \draw[fill=green] (6,1.25) circle (.1cm);

                \draw[fill=green] (9,1.75) circle (.1cm);
                \draw[fill=green] (9,0.75) circle (.1cm);

                \draw[fill=green] (7,1.75) circle (.1cm);
                \draw[fill=green] (7,1.25) circle (.1cm);

                \draw[fill=green] (11,3.25) circle (.1cm);
                \draw[fill=green] (11,1.25) circle (.1cm);
            \end{scope}

            \begin{scope}[color=blue]
                \draw[fill=blue] (2,3.75) circle (.1cm);
                \draw[fill=blue] (2,2.75) circle (.1cm);
                \draw[fill=white] (2,1.75) circle (.1cm);
                \draw[fill=white] (2,0.75) circle (.1cm);

                \draw[fill=blue] (4,3.75) circle (.1cm);
                \draw[fill=blue] (4,3.25) circle (.1cm);
                \draw[fill=white] (4,2.75) circle (.1cm);
                \draw[fill=white] (4,2.25) circle (.1cm);

                \draw[fill=blue] (6,3.75) circle (.1cm);
                \draw[fill=blue] (6,3.25) circle (.1cm);
                \draw[fill=white] (6,2.75) circle (.1cm);
                \draw[fill=white] (6,2.25) circle (.1cm);

                \draw[fill=blue] (9,3.75) circle (.1cm);
                \draw[fill=blue] (9,3.25) circle (.1cm);
                \draw[fill=white] (9,2.75) circle (.1cm);
                \draw[fill=white] (9,2.25) circle (.1cm);

                \draw[fill=blue] (7,3.75) circle (.1cm);
                \draw[fill=blue] (7,3.25) circle (.1cm);
                \draw[fill=white] (7,2.75) circle (.1cm);
                \draw[fill=white] (7,2.25) circle (.1cm);

                \draw[fill=blue] (11,3.75) circle (.1cm);
                \draw[fill=blue] (11,2.75) circle (.1cm);
                \draw[fill=white] (11,1.75) circle (.1cm);
                \draw[fill=white] (11,0.75) circle (.1cm);
            \end{scope}
        \end{scope}
    \end{tikzpicture}
\end{center}
In general, one of these diagrams can be obtained from the other by a finite sequence of Reidemeister moves of types 2 and 3. Thus, the associated operators are equal.

We can now define associators as follows. Given right Hilbert $\mathcal{B}$-modules $M_1$, $M_2$ and $M_3$, consider the formula
$$
    \alpha_{M_1,M_2,M_3}\big(\xi_1\otimes \xi_2a\otimes b\otimes\xi_3cd\otimes e\big) = \xi_1\otimes\xi_2\otimes\xi_3\otimes \Phi(a\otimes c)\otimes V\Phi(b\otimes d)e,
$$
where $\xi_1\in M_1$, $\xi_2\in M_2$, $\xi_3\in M_3$ and $a,b,c,d,e\in\mathcal{B}$. Here, $\xi_1\otimes \xi_2a\otimes b\otimes\xi_3cd\otimes e$ on the left hand side is viewed as an element of
$$
    (M_1\otimes M_2)\otimes M_3 = \bigg(\big((M_1\boxtimes M_2)\otimes_{\Phi}\mathcal{B}\big)\boxtimes M_3\bigg)\otimes_{\Phi}\mathcal{B}
$$
while $\xi_1\otimes\xi_2\otimes\xi_3\otimes \Phi(a\otimes c)\otimes V\Phi(b\otimes d)e$ on the right hand side is viewed as an element of
$$
    M_1\otimes (M_2\otimes M_3) = \bigg(M_1\boxtimes \big((M_2\boxtimes M_3)\otimes_{\Phi}\mathcal{B}\big)\bigg)\otimes_{\Phi}\mathcal{B}.
$$
On the one hand, we get that
\begin{align*}
    \langle [\xi_1\otimes &(\xi_2\otimes\xi_3\otimes\Phi(a\otimes c))]\otimes V\Phi(b\otimes d)e, [\eta_1\otimes(\eta_2\otimes\eta_3\otimes\Phi(a_1\otimes c_1))]\otimes V\Phi(b_1\otimes d_1)e_1\rangle\\
    &= [V\Phi(b\otimes d)e]^*\Phi\left(\langle\xi_1,\eta_1\rangle\otimes\langle (\xi_2\otimes\xi_3)\otimes \Phi(a\otimes c),(\eta_2\otimes\eta_3)\otimes \Phi(a_1\otimes c_1)\rangle \right)[V\Phi(b_1\otimes d_1)e_1]\\
    &= [V\Phi(b\otimes d)e]^*\Phi\left(\langle\xi_1,\eta_1\rangle\otimes \left[\Phi(a\otimes c)^*\Phi(\langle \xi_2,\eta_2\rangle \otimes \langle \xi_3,\eta_3\rangle)\Phi(a_1\otimes c_1)\right]\right)[V\Phi(b_1\otimes d_1)e_1]\\
    &= [V\Phi(b\otimes d)e]^*\Phi\left(\langle\xi_1,\eta_1\rangle\otimes \Phi\big(a^*\langle \xi_2,\eta_2\rangle a_1 \otimes c^*\langle \xi_3,\eta_3\rangle c_1\big)\right)[V\Phi(b_1\otimes d_1)e_1]\\
    &= [e^*\Phi(b\otimes d)^*]V^*\Phi\left(\langle\xi_1,\eta_1\rangle\otimes \Phi\big(\langle \xi_2 a,\eta_2 a_1\rangle \otimes \langle \xi_3 c,\eta_3 c_1\rangle\big)\right)V[\Phi(b_1\otimes d_1)e_1].
\end{align*}
On the other hand, we have that
\begin{align*}
    \langle [((\xi_1\,\otimes\, &\xi_2a)\otimes b)\otimes\xi_3cd]\otimes e, [((\eta_1\otimes\eta_2a_1)\otimes b_1)\otimes\eta_3c_1d_1]\otimes e_1\rangle\\
    &= e^*\Phi(\langle (\xi_1\otimes \xi_2a)\otimes b, (\eta_1\otimes\eta_2a_1)\otimes b_1\rangle\otimes \langle\xi_3cd,\eta_3c_1d_1\rangle)e_1\\
    &= e^*\Phi([b^*\Phi(\langle\xi_1,\eta_1\rangle\otimes \langle\xi_2a,\eta_2a_1\rangle) b_1]\otimes d^*\langle\xi_3c,\eta_3c_1\rangle d_1) e_1\\
    &= e^*\Phi(b^*\otimes d^*)\Phi\left(\Phi(\langle\xi_1,\eta_1\rangle\otimes \langle\xi_2a,\eta_2a_1\rangle)\otimes \langle\xi_3c,\eta_3c_1\rangle\right)\Phi(b_1\otimes d_1)e_1.
\end{align*}
Since these two expressions coincide by equation (\ref{eq:V}), the above formula defines a $\mathcal{B}$-linear isometry
$$
    \alpha_{M_1,M_2,M_3}\colon (M_1\otimes M_2)\otimes M_3\longrightarrow M_1\otimes (M_2\otimes M_3).
$$
Similarly, we can define a $\mathcal{B}$-linear isometry
$$
    \beta_{M_1,M_2,M_3}\colon M_1\otimes (M_2\otimes M_3) \longrightarrow (M_1\otimes M_2)\otimes M_3
$$
by the formula
$$
    \beta_{M_1,M_2,M_3}(\xi_1ab\otimes\xi_2c\otimes\xi_3\otimes d\otimes e) = \xi_1\otimes\xi_2\otimes\Phi(a\otimes c)\otimes \xi_3\otimes V^*\Phi(b\otimes d)e.
$$
As this is the inverse of $\alpha_{M_1,M_2,M_3}$, we get that $\alpha_{M_1,M_2,M_3}$ is a unitary isomorphism in $\ModB$. The assignment $(M_1,M_2,M_2)\mapsto \alpha_{M_1,M_2,M_3}$ is clearly natural in $M_1$, $M_2$ and $M_3$.

\subsubsection{Pentagon identity}

In order to show that $\ModB$ along with the associators $\alpha_{M_1,M_2,M_3}$ and the unit constraints that we define below is a C*-tensor category, we must verify the pentagon identity, which in the present context is the identity
\begin{equation}\label{eq:pentagon}
  \begin{aligned}
    (\id_{M_1}\otimes\alpha_{M_2,M_3,M_4})\circ \alpha_{M_1,M_2\otimes M_3,M_4}\circ(\alpha_{M_1,M_2,M_3}&\otimes\id_{M_4})\\ &= \alpha_{M_1,M_2,M_3\otimes M_4}\circ\alpha_{M_1\otimes M_2,M_3,M_4}
  \end{aligned}
\end{equation}
for any objects $M_1$, $M_2$, $M_3$ and $M_4$ in $\ModB$. We verify it by applying both sides to an element of the form
\begin{equation}\label{eq:pentagon1}
    \xi_1 a\otimes\xi_2bb'b''\otimes c\otimes\xi_3 d d'd''d'''\otimes e\otimes \xi_4 f_0ff'f''f'''\otimes g
\end{equation}
in the quadruple tensor product
$
    \big((M_1\otimes M_2)\otimes M_3\big)\otimes M_4.
$
Let us first consider the left hand side. First, $\alpha_{M_1,M_2,M_3}\otimes\id_{M_4}$ maps the given element to
$$
    \xi_1 a\otimes\big[\xi_2b\otimes\xi_3d\otimes\Phi(b'b''\otimes d'd'')\big]\otimes V\Phi(c\otimes d''')e\otimes\xi_4 f_0ff'f''f'''\otimes g.
$$
Next, $\alpha_{M_1,M_2\otimes M_3,M_4}$ maps the above element to
$$
    \xi_1 a\otimes \big[\xi_2 b\otimes\xi_3 d\otimes\Phi(b'\otimes d')\big]\otimes\xi_4f_0f\otimes\Phi(\Phi(b''\otimes d'')\otimes f')\otimes V\Phi(V\Phi(c\otimes d''')e\otimes f''f''')g.
$$
Finally, $\id_{M_1}\otimes\alpha_{M_2,M_3,M_4}$ maps this element to
$$
    \xi_1 a\otimes \left(\xi_2 b\otimes \big[\xi_3\otimes \xi_4\otimes\Phi(d\otimes f_0)\big]\otimes V\Phi(\Phi(b'b''\otimes d'd'')\otimes ff')\right)\otimes V\Phi(V\Phi(c\otimes d''')e\otimes f''f''')g.
$$
We now consider the right hand side. First, $\alpha_{M_1\otimes M_2,M_3,M_4}$ maps the element in equation (\ref{eq:pentagon1}) to
$$
    \xi_1 a\otimes \xi_2 bb'b''\otimes c\otimes \left[\xi_3\otimes \xi_4\otimes \Phi(dd'd''d'''\otimes f_0ff'f'')\right]\otimes V\Phi(e\otimes f''')g.
$$
Next, $\alpha_{M_1,M_2,M_3\otimes M_4}$ maps the above element to {\footnotesize
$$
    \xi_1 a\otimes \left(\xi_2bb'\otimes\left[\xi_3\otimes\xi_4\otimes\Phi(dd'\otimes f_0f)\right]\otimes\Phi(b''\otimes\Phi(d''\otimes f'))\right)\otimes V\Phi(c\otimes\Phi(d'''\otimes f''))V\Phi(e\otimes f''')g,
$$}which is equal to {\footnotesize
$$
    \xi_1 a\otimes \left(\xi_2b\otimes\left[\xi_3\otimes\xi_4\otimes\Phi(d\otimes f_0)\right]\otimes\Phi(b'b''\otimes\Phi(d'd''\otimes ff'))\right)\otimes V\Phi(c\otimes\Phi(d'''\otimes f''))V\Phi(e\otimes f''')g,
$$}and, in turn, to {\footnotesize
$$
    \xi_1 a\otimes \left(\xi_2b\otimes\left[\xi_3\otimes\xi_4\otimes\Phi(d\otimes f_0)\right]\otimes V\Phi(\Phi(b'b''\otimes d'd'')\otimes ff')V^*\right)\otimes V\Phi(c\otimes\Phi(d'''\otimes f''))V\Phi(e\otimes f''')g.
$$}We now see that the pentagon identity reduces to the identity {\footnotesize
$$
    \Phi(a\otimes \Phi(\Phi(b''\otimes d'')\otimes f'))V\Phi(V\Phi(c\otimes d''')\otimes f'')
    = \Phi(a\otimes \Phi(\Phi(b''\otimes d'')\otimes f')V^*)V^2\Phi(\Phi(c\otimes d''')\otimes f'').
$$
}Since $\mathcal{B}$ is generated by operators arising from dilute 
Temperley--Lieb diagrams, and because $V = \sum_{\vec x}v(D^\alpha,\vec x)$ for 
a certain infinite braid diagram $D^\alpha$ (see page \pageref{def:vDx}), it 
suffices to prove that
$$
    v(D^\alpha,\vec x)\Phi(v(D^\alpha,\vec y)\otimes p_{\vec z}) = \Phi(p_{\vec\mu}\otimes v(D^\alpha,\vec \nu)^*)v(D^\alpha,\vec \beta)v(D^\alpha,\vec \gamma)
$$
whenever $\vec x, \vec y, \vec z, \vec\mu, \vec\nu,\vec\beta,\vec\gamma\in\mathcal{G}^\infty$ are such that the patterns agree. (Recall that $p_{\vec z}$ was defined on page \pageref{def:px}.) In this identity, each side is the operator associated to some finite dilute braid diagram. One can easily check that both of these diagrams consist of strands that live on four separate layers, as we next explain. The bottom layer $L_1$ consists of those strands whose left end point is at one of the non-empty nodes numbered $4$, $8$, $12$, $\ldots$, the next layer $L_2$ at those numbered $6$, $14$, $22$, $\ldots$, the next layer $L_3$ at those numbered $2$, $10$, $18$, $\ldots$, and the top layer $L_4$ at those numbered $1$, $3$, $5$, $\ldots$. This means that, in both diagrams, every crossing is of the following sort: A strand from $L_j$ crosses over a strand from $L_i$ with $j>i$. It is easily deduced from this that one of the diagrams can be obtained from the other by a finite sequence of Reidemeister moves of types $2$ and $3$, from which the identity follows.

\subsubsection{Tensor unit and unit constraints}

Denote by $p_\ast$ the operator in $\mathcal{B}$ that is associated to the empty diagram. We will exhibit $p_\ast\mathcal{B}$ as a tensor unit in $\ModB$ by defining explicit unit constraints
$$
    \Psi_M^\ell\colon p_\ast\mathcal{B}\otimes M\longrightarrow M,\qquad \Psi^r_M\colon M\otimes p_\ast\mathcal{B}\longrightarrow M
$$
for each object $M$ in $\ModB$. First, we define two partial isometries $W^\ell$ and $W^r$ in $\mathbb{B}(H)$. Namely, $W^r$ is the operator associated to the infinite dilute braid diagram
\begin{center}
    \begin{tikzpicture}
        \begin{scope}[ultra thick]
            \draw (0,0) -- (0,2.5);
            \draw (0,2.5) -- (3,2.5);
            \draw (3,0) -- (3,2.5);

            \begin{scope}[color=blue]
                \draw (0,2.25) -- (3,2.25);
                \draw (0,1.75) -- (3,1.25);
                \draw (0,1.25) -- (3,0.25);
                \draw (0,0.75) -- (1.5,0);
                \draw (0,0.25) -- (0.375,0);

			    \draw[fill=blue] (0,2.25) circle (.1cm);
			    \draw[fill=blue] (0,1.75) circle (.1cm);
			    \draw[fill=blue] (0,1.25) circle (.1cm);
			    \draw[fill=blue] (0,0.75) circle (.1cm);
			    \draw[fill=blue] (0,0.25) circle (.1cm);
			
			    \draw[fill=blue] (3,2.25) circle (.1cm);
			    \draw[fill=white] (3,1.75) circle (.1cm);
			    \draw[fill=blue] (3,1.25) circle (.1cm);
			    \draw[fill=white] (3,0.75) circle (.1cm);
			    \draw[fill=blue] (3,0.25) circle (.1cm);
            \end{scope}
        \end{scope}
    \end{tikzpicture}
\end{center}
which we will call $D^r$, while $W^\ell$ is the operator associated to the diagram
\begin{center}
    \begin{tikzpicture}
        \begin{scope}[ultra thick]
            \draw (0,-.5) -- (0,2.5);
            \draw (0,2.5) -- (3,2.5);
            \draw (3,-.5) -- (3,2.5);

            \begin{scope}[color=blue]
                \draw (0,2.25) -- (3,1.75);
                \draw (0,1.75) -- (3,0.75);
                \draw (0,1.25) -- (3,-0.25);
                \draw (0,0.75) -- (1.875,-0.5);
                \draw (0,0.25) -- (0.9,-0.5);
                \draw (0,-0.25) -- (0.25,-0.5);

			    \draw[fill=blue] (0,2.25) circle (.1cm);
			    \draw[fill=blue] (0,1.75) circle (.1cm);
			    \draw[fill=blue] (0,1.25) circle (.1cm);
			    \draw[fill=blue] (0,0.75) circle (.1cm);
			    \draw[fill=blue] (0,0.25) circle (.1cm);
                \draw[fill=blue] (0,-0.25) circle (.1cm);
			
			    \draw[fill=white] (3,2.25) circle (.1cm);
			    \draw[fill=blue] (3,1.75) circle (.1cm);
			    \draw[fill=white] (3,1.25) circle (.1cm);
			    \draw[fill=blue] (3,0.75) circle (.1cm);
			    \draw[fill=white] (3,0.25) circle (.1cm);
                \draw[fill=blue] (3,-.25) circle (.1cm);
            \end{scope}
        \end{scope}
    \end{tikzpicture}
\end{center}
which we call $D^\ell$. We have that
$$
    W^r\Phi(b\otimes p_\ast)(W^r)^* = b,\qquad (W^r)^*W^r\Phi(b\otimes p_\ast) = \Phi(b\otimes p_\ast) = \Phi(b\otimes p_\ast)(W^r)^*W^r
$$
for all $b\in\mathcal{B}$. It follows from this that we may define a unitary isomorphism
$$
    \Psi^r_M\colon M\otimes p_\ast\mathcal{B}\longrightarrow M
$$
by the formula
$$
    \Psi^r_M\big((\xi b_0\otimes p_\ast b_1)\otimes a\big) = \xi W^r\Phi(b_0\otimes p_\ast b_1)a
$$
for $\xi\in M$ and $b_0,b_1,a\in\mathcal{B}$. Note that the adjoint (and inverse) of $\Psi^r_M$ is given by the formula
$$
    (\Psi^r_M)^*(\eta c) = (\eta\otimes p_\ast)\otimes (W^r)^*c
$$
for $\eta\in M$ and $c\in\mathcal{B}$. Clearly, the assignment $M\mapsto \Psi^r_M$ is natural in $M$.

Similarly, we can define a unitary isomorphism
$$
    \Psi_M^\ell\colon p_\ast\mathcal{B}\otimes M\longrightarrow M
$$
in $\ModB$ by the formula
$$
    \Psi_M^\ell\big((p_\ast b_0\otimes \xi b_1)\otimes a\big) = \xi W^\ell\Phi(p_\ast b_0\otimes b_1)a
$$
for $\xi\in M$ and $b_0,b_1,a\in\mathcal{B}$. Again, the assignment $M\mapsto \Psi^\ell_M$ is natural in $M$.

\subsubsection{Triangle identity}

In the present context, the triangle identity states that
\begin{equation}\label{eq:triangle}
    (\id_{M_1}\otimes\Psi^\ell_{M_2})\circ\alpha_{M_1,p_\ast\mathcal{B},M_2} = \Psi^r_{M_1}\otimes\id_{M_2}
\end{equation}
for any objects $M_1$ and $M_2$ in $\ModB$. By applying both sides to an element of the form $\xi b_0\otimes b_1b_1'\otimes b_2\otimes\eta b_3b_3'b_3''\otimes b_4$, we see that the verification reduces to proving the identity
$$
    \Phi(b\otimes W^\ell\Phi(p_\ast c\otimes d))V = \Phi(W^r\Phi(b\otimes p_\ast c)\otimes d).
$$
for $b,c,d\in\mathcal{B}$. Similarly to the case of the pentagon identity, it suffices to prove that
$$
    \Phi(p_{\vec x}\otimes v(D^\ell,\vec y))v(D^\alpha,\vec z) = \Phi(v(D^r,\vec \beta)\otimes p_{\vec\gamma})
$$
whenever $\vec x, \vec y, \vec z, \vec\beta,\vec\gamma\in\mathcal{G}^\infty$ are such that the patterns agree. Note that the operator on the left hand side arises from a finite dilute braid diagram such as
\begin{center}
    \begin{tikzpicture}[scale=0.75]
        \begin{scope}[ultra thick]
            \draw (0,0) -- (0,4);
            \draw (0,4) -- (6,4);
            \draw (6,0) -- (6,4);

            \aline{0}{3.25}{3}{2.25}{red}
            \aline{0}{2.25}{3}{0.25}{red}
            \aline{0}{1.25}{1.25}{0}{red}

            \aline{3}{2.25}{6}{3.25}{red}
            \aline{3}{0.25}{6}{2.25}{red}
            \aline{4.75}{0}{6}{1.25}{red}

            \sline{0}{3.75}{3}{3.75}{blue}
            \sline{0}{2.75}{3}{2.75}{blue}
            \sline{0}{1.75}{3}{1.75}{blue}

            \sline{3}{3.75}{6}{3.75}{blue}
            \sline{3}{2.75}{6}{1.75}{blue}
            \sline{3}{1.75}{5.625}{0}{blue}

            \begin{scope}[color=blue]
                \draw[fill=blue] (0,3.75) circle (.1cm);
                \draw[fill=blue] (0,2.75) circle (.1cm);
                \draw[fill=blue] (0,1.75) circle (.1cm);
                \draw[fill=white] (0,0.75) circle (.1cm);

                \draw[fill=blue] (3,3.75) circle (.1cm);
                \draw[fill=blue] (3,2.75) circle (.1cm);
                \draw[fill=blue] (3,1.75) circle (.1cm);
                \draw[fill=white] (3,0.75) circle (.1cm);

                \draw[fill=blue] (6,3.75) circle (.1cm);
                \draw[fill=blue] (6,1.75) circle (.1cm);
            \end{scope}

            \begin{scope}[color=green]
                \draw[fill=white] (3,3.25) circle (.1cm);
                \draw[fill=white] (3,1.25) circle (.1cm);

                \draw[fill=white] (6,2.75) circle (.1cm);
                \draw[fill=white] (6,0.75) circle (.1cm);
            \end{scope}

            \begin{scope}[color=red]
                \draw[fill=red] (0,3.25) circle (.1cm);
                \draw[fill=red] (0,2.25) circle (.1cm);
                \draw[fill=red] (0,1.25) circle (.1cm);
                \draw[fill=white] (0,0.25) circle (.1cm);

                \draw[fill=red] (3,2.25) circle (.1cm);
                \draw[fill=red] (3,0.25) circle (.1cm);

                \draw[fill=red] (6,3.25) circle (.1cm);
                \draw[fill=red] (6,2.25) circle (.1cm);
                \draw[fill=red] (6,1.25) circle (.1cm);
                \draw[fill=white] (6,0.25) circle (.1cm);
            \end{scope}
        \end{scope}
    \end{tikzpicture}
\end{center}
while the operator on the right hand side arises from
\begin{center}
    \begin{tikzpicture}[scale=0.75]
        \begin{scope}[ultra thick]
            \draw (0,0) -- (0,4);
            \draw (0,4) -- (3,4);
            \draw (3,0) -- (3,4);

            \aline{0}{3.25}{3}{3.25}{red}
            \aline{0}{2.25}{3}{2.25}{red}
            \aline{0}{1.25}{3}{1.25}{red}

            \sline{0}{3.75}{3}{3.75}{blue}
            \sline{0}{2.75}{3}{1.75}{blue}
            \sline{0}{1.75}{2.625}{0}{blue}

            \begin{scope}[color=blue]
                \draw[fill=blue] (0,3.75) circle (.1cm);
                \draw[fill=blue] (0,2.75) circle (.1cm);
                \draw[fill=blue] (0,1.75) circle (.1cm);
                \draw[fill=white] (0,0.75) circle (.1cm);

                \draw[fill=blue] (3,3.75) circle (.1cm);
                \draw[fill=blue] (3,1.75) circle (.1cm);
            \end{scope}

            \begin{scope}[color=green]
                \draw[fill=white] (3,2.75) circle (.1cm);
                \draw[fill=white] (3,0.75) circle (.1cm);
            \end{scope}

            \begin{scope}[color=red]
                \draw[fill=red] (0,3.25) circle (.1cm);
                \draw[fill=red] (0,2.25) circle (.1cm);
                \draw[fill=red] (0,1.25) circle (.1cm);
                \draw[fill=white] (0,0.25) circle (.1cm);

                \draw[fill=red] (3,3.25) circle (.1cm);
                \draw[fill=red] (3,2.25) circle (.1cm);
                \draw[fill=red] (3,1.25) circle (.1cm);
                \draw[fill=white] (3,0.25) circle (.1cm);
            \end{scope}
        \end{scope}
    \end{tikzpicture}
\end{center}
which can be obtained from the top diagram by a finite sequence of Reidemeister moves of type 2.

\subsubsection{Simplicity of the tensor unit}

To finish the proof that $\ModB$ is a C*-tensor category, we note that $p_\ast\mathcal{B}$ is a simple object in $\ModB$. Indeed, one easily checks that
$$
    \mathrm{End}(p_\ast\mathcal{B}) \cong p_\ast\mathcal{B}p_\ast = p_\ast\mathcal{B}_0p_\ast = \mathbb{C}p_\ast
$$
(see also the proof of Lemma \ref{lem:full} below).

\subsection{A unitary braiding on $\mathrm{Mod}_{\mathcal{B}}$}\label{sec:Braiding}

We next define a unitary braiding on $\ModB$ and verify the hexagon identities.

\subsubsection{Definition of the braiding}

Denote by $U$ the unitary operator in $\mathbb{B}(H)$ that is associated to the infinite braid diagram $D^\sigma$ that is formed as follows. First connect the nodes on the left side numbered $2$, $4$, $\ldots$ to those on the right side numbered $1$, $3$, $\ldots$ by red strands (as in the following figure). Next, for each of the remaining nodes on the left numbered $2k-1$, say, draw a blue strand from it to the top of the diagram, crossing over the red strands whose left end point is above it, and then continue this strand to the node numbered $2k$ on the right side, now crossing under the red strands whose right end point is above that node. The following figure shows one of the associated finite dilute braid diagrams $D^\sigma_{\vec x}$ ($=(D^\sigma)_{\vec x}$).
\begin{center}
    \begin{tikzpicture}[scale=0.75]
        \begin{scope}[ultra thick]
            \draw (0,0) -- (0,4);
            \draw (0,4) -- (4,4);
            \draw (4,0) -- (4,4);

            \aline{0}{3.25}{2}{2.25}{red}
            \aline{0}{2.25}{2}{1.25}{red}
			
            \sline{0}{0.75}{2}{2.75}{blue}
            \sline{0}{1.75}{2}{3}{blue}
            \sline{0}{2.75}{2}{3.25}{blue}
            \sline{0}{3.75}{2}{3.75}{blue}
			
            \aline{2}{3.75}{4}{3.25}{blue}
            \aline{2}{3.25}{4}{2.25}{blue}
            \aline{2}{3}{4}{1.25}{blue}
            \aline{2}{2.75}{4}{0.25}{blue}

            \sline{2}{2.25}{4}{3.75}{red}
            \sline{2}{1.25}{4}{2.75}{red}			

            \draw[color=blue] (2,3.75) circle (.0015cm);
            \draw[color=blue] (2,3.245) circle (.0015cm);
            \draw[color=blue] (2,2.995) circle (.0015cm);
            \draw[color=blue] (2,2.745) circle (.0015cm);
            \draw[color=red] (2,2.255) circle (.0015cm);
            \draw[color=red] (2,1.255) circle (.0015cm);			

			\begin{scope}[color=red]
                \draw[fill=red] (0,3.25) circle (.1cm);
                \draw[fill=red] (0,2.25) circle (.1cm);
                \draw[fill=white] (0,1.25) circle (.1cm);
                \draw[fill=white] (0,0.25) circle (.1cm);

                \draw[fill=red] (4,3.75) circle (.1cm);
                \draw[fill=red] (4,2.75) circle (.1cm);
                \draw[fill=white] (4,1.75) circle (.1cm);
                \draw[fill=white] (4,0.75) circle (.1cm);
            \end{scope}

            \begin{scope}[color=blue]
            	\draw[fill=blue] (0,3.75) circle (.1cm);
            	\draw[fill=blue] (0,1.75) circle (.1cm);
            	
            	\draw[fill=blue] (4,3.25) circle (.1cm);
            	\draw[fill=blue] (4,1.25) circle (.1cm);
            \end{scope}

            \begin{scope}[color=blue]
            	\draw[fill=blue] (0,2.75) circle (.1cm);
            	\draw[fill=blue] (0,.75) circle (.1cm);
            	
            	\draw[fill=blue] (4,2.25) circle (.1cm);
            	\draw[fill=blue] (4,.25) circle (.1cm);
            \end{scope}
        \end{scope}
    \end{tikzpicture}
\end{center}

Note that
\begin{equation}\label{eq:braid}
    U\Phi(b_1\otimes b_2)U^* = \Phi(b_2\otimes b_1)
\end{equation}
for all $b_1,b_2\in\mathcal{B}$.

Equation (\ref{eq:braid}) allows us, given two objects $M_1$ and $M_2$ in $\ModB$, to define a unitary isomorphism
$$
    \sigma_{M_1,M_2}\colon M_1\otimes M_2\longrightarrow M_2\otimes M_1
$$
by the formula
$$
    \sigma_{M_1,M_2}((\xi_1\otimes\xi_2)\otimes a) = (\xi_2 \otimes\xi_1)\otimes Ua,
$$
for $\xi_1\in M_1$, $\xi_2\in M_2$ and $a\in\mathcal{B}$. The assignment $(M_1,M_2)\mapsto \sigma_{M_1,M_2}$ is clearly natural in $M_1$ and $M_2$ and will turn out to be a unitary braiding on $\ModB$.

\subsubsection{Hexagon identities}

In the present context, the two hexagon identities are
\begin{align}
    \label{eq:hexagon1}\alpha_{M_2,M_3,M_1}\circ \sigma_{M_1,M_2\otimes M_3}\circ\alpha_{M_1,M_2,M_3} &= (\id_{M_2}\otimes \sigma_{M_1,M_3})\circ \alpha_{M_2,M_1,M_3}\circ (\sigma_{M_1,M_2}\otimes\id_{M_3}),\\
    \label{eq:hexagon2}\alpha_{M_2,M_3,M_1}\circ \sigma_{M_2\otimes M_3,M_1}^*\circ\alpha_{M_1,M_2,M_3} &= (\id_{M_2}\otimes \sigma_{M_3,M_1}^*)\circ \alpha_{M_2,M_1,M_3}\circ (\sigma_{M_2,M_1}^*\otimes\id_{M_3})
\end{align}
for any objects $M_1$, $M_2$ and $M_2$ in $\ModB$. Let us prove the first identity and leave the second one to the reader. The left hand side maps an element of the form $\xi_1 aa'\otimes \xi_2b\otimes c\otimes\xi_3 dd'd''\otimes e$ to
$$
    \xi_2\otimes\xi_3\otimes \xi_1\otimes\Phi(d\otimes a)\otimes V\Phi(\Phi(b\otimes d')\otimes a')UV\Phi(c\otimes d'')e
$$
while the right hand side maps it to
$$
    \xi_2\otimes\xi_3\otimes\xi_1\otimes\Phi(d\otimes a)\otimes\Phi(b\otimes U\Phi(a'\otimes d'))V\Phi(Uc\otimes d'')e.
$$
Thus, the first hexagon identity would follow from the identities
$$
    \Phi(a\otimes\Phi(b\otimes c))VUV\Phi(d\otimes e) = \Phi(a\otimes\Phi(b\otimes c)U)V\Phi(Ud\otimes e)
$$
for $a,b,c,d,e\in\mathcal{B}$. As in the case of the pentagon identity, this reduces to showing that
$$
    v(D^\alpha,\vec x)v(D^\sigma,\vec y)v(D^\alpha,\vec z) = \Phi(p_{\vec\mu}\otimes v(D^\sigma,\vec\nu))v(D^\alpha,\vec\beta)\Phi(v(D^\sigma,\vec\gamma)\otimes p_{\vec \epsilon})
$$
whenever $\vec x, \vec y, \vec z, \vec \mu, \vec\nu, \vec\beta,\vec\gamma,\vec\epsilon\in\mathcal{G}^\infty$ are such that the patterns agree. In this identity, the operator on each side arises from a certain finite dilute braid diagram.
The next figure shows a sample pair of diagrams that can appear. On the left hand side, we could have
\begin{center}
	\begin{tikzpicture} [scale=.75]
		\begin{scope}[ultra thick]

\draw (0,0) -- (0,4);
\draw (0,4) -- (12,4);
\draw (12,0) -- (12,4);

\sline{0}{0.25}{4}{2.25}{red}			
\sline{0}{2.25}{4}{3.25}{red}
\sline{4}{3.25}{6}{2.5}{red}
\sline{4}{2.25}{6}{1.5}{red}

\sline{0}{1.25}{4}{0.75}{green}
\sline{0}{3.25}{4}{2.75}{green}
\sline{4}{0.75}{6}{2.75}{green}			
\sline{4}{2.75}{6}{3.25}{green}

\sline{0}{3.75}{6}{3.75}{blue}
\sline{0}{2.75}{4}{1.75}{blue}
\sline{4}{1.75}{6}{3}{blue}

\aline{6}{2.75}{8}{0.25}{green}
\aline{6}{3.25}{8}{2.25}{green}
\aline{8}{0.25}{12}{2.25}{green}
\aline{8}{2.25}{12}{3.25}{green}

\aline{6}{3.75}{8}{3.25}{blue}
\aline{6}{3}{8}{1.25}{blue}
\sline{8}{3.25}{12}{2.75}{blue}
\sline{8}{1.25}{12}{0.75}{blue}

\sline{6}{2.5}{8}{3.75}{red}
\sline{6}{1.5}{8}{2.75}{red}
\sline{8}{3.75}{12}{3.75}{red}
\sline{8}{2.75}{12}{1.75}{red}

\draw[color=red] (6,2.506) circle (.004cm);
\draw[color=red] (6,1.506) circle (.004cm);
\draw[color=blue] (6,3.75) circle (.0015cm);
\draw[color=blue] (6,2.995) circle (.0019cm);
\draw[color=green] (6,3.248) circle (.0015cm);
\draw[color=green] (6,2.748) circle (.0010cm);
			
			\begin{scope}[color=red]
                \draw[fill=red] (0,0.25) circle (.1cm);
                \draw[fill=red] (0,2.25) circle (.1cm);

                \draw[fill=red] (4,3.25) circle (.1cm);
                \draw[fill=red] (4,2.25) circle (.1cm);
                \draw[fill=white] (4,1.25) circle (.1cm);
                \draw[fill=white] (4,0.25) circle (.1cm);

                \draw[fill=red] (8,3.75) circle (.1cm);
                \draw[fill=red] (8,2.75) circle (.1cm);
                \draw[fill=white] (8,1.75) circle (.1cm);
                \draw[fill=white] (8,0.75) circle (.1cm);

                \draw[fill=red] (12,3.75) circle (.1cm);
                \draw[fill=red] (12,1.75) circle (.1cm);
            \end{scope}

            \begin{scope}[color=blue]
            	\draw[fill=blue] (0,3.75) circle (.1cm);
            	\draw[fill=blue] (0,2.75) circle (.1cm);
            	\draw[fill=white] (0,1.75) circle (.1cm);
            	\draw[fill=white] (0,0.75) circle (.1cm);
            	
            	\draw[fill=blue] (4,3.75) circle (.1cm);
            	\draw[fill=blue] (4,1.75) circle (.1cm);
            	
            	\draw[fill=blue] (8,3.25) circle (.1cm);
            	\draw[fill=blue] (8,1.25) circle (.1cm);
            	
            	\draw[fill=blue] (12,2.75) circle (.1cm);
            	\draw[fill=blue] (12,.75) circle (.1cm);
            \end{scope}

            \begin{scope}[color=green]
            	\draw[fill=green] (0,3.25) circle (.1cm);
            	\draw[fill=green] (0,1.25) circle (.1cm);
            	
            	\draw[fill=green] (4,2.75) circle (.1cm);
            	\draw[fill=green] (4,.75) circle (.1cm);
            	
            	\draw[fill=green] (8,2.25) circle (.1cm);
            	\draw[fill=green] (8,.25) circle (.1cm);
            	
            	\draw[fill=green] (12,3.25) circle (.1cm);
            	\draw[fill=green] (12,2.25) circle (.1cm);
            	\draw[fill=white] (12,1.25) circle (.1cm);
            	\draw[fill=white] (12,0.25) circle (.1cm);
            \end{scope}
		\end{scope}
	\end{tikzpicture}
\end{center}
which would be paired with the following diagram on the right hand side.
\begin{center}
	\begin{tikzpicture} [scale=.75]
		\begin{scope}[ultra thick]

\draw (0,0) -- (0,4);
\draw (0,4) -- (12,4);
\draw (12,0) -- (12,4);


\aline{0}{2.25}{2}{1.25}{red}
\aline{0}{0.25}{2}{0.25}{red}
\sline{0}{3.25}{2}{2.25}{green}
\sline{0}{1.25}{2}{1.75}{green}
\sline{0}{3.75}{2}{3.75}{blue}
\sline{0}{2.75}{2}{2.75}{blue}


\aline{2}{2.25}{4}{2.25}{green}
\aline{2}{1.75}{4}{0.25}{green}
\aline{4}{2.25}{8}{3.25}{green}
\aline{4}{0.25}{8}{2.25}{green}
\aline{8}{3.25}{10}{3.25}{green}
\aline{8}{2.25}{10}{2.25}{green}

\sline{2}{1.25}{4}{3.25}{red}
\sline{2}{0.25}{4}{1.25}{red}
\sline{4}{3.25}{8}{2.75}{red}
\sline{4}{1.25}{8}{0.75}{red}
\sline{8}{2.75}{10}{2.5}{red}
\sline{8}{0.75}{10}{0.5}{red}

\sline{2}{3.75}{4}{3.75}{blue}
\sline{2}{2.75}{4}{2.75}{blue}
\sline{4}{3.75}{8}{3.75}{blue}
\sline{4}{2.75}{8}{1.75}{blue}
\sline{8}{3.75}{10}{3}{blue}
\sline{8}{1.75}{10}{2.75}{blue}


\aline{10}{3.25}{12}{3.25}{green}
\aline{10}{2.25}{12}{2.25}{green}
\sline{10}{3}{12}{2.75}{blue}
\sline{10}{2.75}{12}{0.75}{blue}
\sline{10}{2.5}{12}{3.75}{red}
\sline{10}{0.5}{12}{1.75}{red}

\draw[color=red] (1.99,1.255) circle (.0025cm);
\draw[color=red] (2,0.252) circle (.003cm);
\draw[color=green] (2,2.25) circle (.0010cm);
\draw[color=green] (2,1.75) circle (.0010cm);

\draw[color=red] (10,2.505) circle (.0025cm);
\draw[color=red] (10,0.505) circle (.003cm);
\draw[color=blue] (10,3) circle (.0010cm);
\draw[color=blue] (10,2.745) circle (.0010cm);
			
			\begin{scope}[color=red]
                \draw[fill=red] (0,0.25) circle (.1cm);
                \draw[fill=red] (0,2.25) circle (.1cm);

                \draw[fill=red] (4,3.25) circle (.1cm);
                \draw[fill=red] (4,1.25) circle (.1cm);

                \draw[fill=red] (8,2.75) circle (.1cm);
                \draw[fill=red] (8,0.75) circle (.1cm);

                \draw[fill=red] (12,3.75) circle (.1cm);
                \draw[fill=red] (12,1.75) circle (.1cm);
            \end{scope}

            \begin{scope}[color=blue]
            	\draw[fill=blue] (0,3.75) circle (.1cm);
            	\draw[fill=blue] (0,2.75) circle (.1cm);
            	\draw[fill=white] (0,1.75) circle (.1cm);
            	\draw[fill=white] (0,0.75) circle (.1cm);
            	
            	\draw[fill=blue] (4,3.75) circle (.1cm);
            	\draw[fill=blue] (4,2.75) circle (.1cm);
            	\draw[fill=white] (4,1.75) circle (.1cm);
            	\draw[fill=white] (4,0.75) circle (.1cm);
            	
            	\draw[fill=blue] (8,3.75) circle (.1cm);
            	\draw[fill=blue] (8,1.75) circle (.1cm);
            	
            	\draw[fill=blue] (12,2.75) circle (.1cm);
            	\draw[fill=blue] (12,.75) circle (.1cm);
            \end{scope}

            \begin{scope}[color=green]
            	\draw[fill=green] (0,3.25) circle (.1cm);
            	\draw[fill=green] (0,1.25) circle (.1cm);
            	
            	\draw[fill=green] (4,2.25) circle (.1cm);
            	\draw[fill=green] (4,.25) circle (.1cm);
            	
            	\draw[fill=green] (8,3.25) circle (.1cm);
            	\draw[fill=green] (8,2.25) circle (.1cm);
            	\draw[fill=white] (8,1.25) circle (.1cm);
            	\draw[fill=white] (8,0.25) circle (.1cm);
            	
            	\draw[fill=green] (12,3.25) circle (.1cm);
            	\draw[fill=green] (12,2.25) circle (.1cm);
            	\draw[fill=white] (12,1.25) circle (.1cm);
            	\draw[fill=white] (12,0.25) circle (.1cm);
            \end{scope}
		\end{scope}
	\end{tikzpicture}
\end{center}
Note that, in both diagrams, the blue strands always cross over the green strands. Thus, one can transform both diagrams into the same diagram by pulling the green and blue strands up and pulling the red strands down. In the case of our sample pair of diagrams, the common diagram is
\begin{center}
	\begin{tikzpicture} [scale=.75]
		\begin{scope}[ultra thick]

\draw (0,0) -- (0,4.5);
\draw (0,4.5) -- (12,4.5);
\draw (12,0) -- (12,4.5);

\aline{0}{0.25}{6}{-0.25}{red}			
\aline{0}{2.25}{6}{0.25}{red}

\sline{0}{1.25}{6}{2.75}{green}
\sline{0}{3.25}{6}{3.75}{green}

\sline{0}{3.75}{6}{4.25}{blue}
\sline{0}{2.75}{6}{3.25}{blue}

\aline{6}{2.75}{12}{2.25}{green}
\aline{6}{3.75}{12}{3.25}{green}

\sline{6}{4.25}{12}{2.75}{blue}
\sline{6}{3.25}{12}{0.75}{blue}

\sline{6}{0.25}{12}{3.75}{red}
\sline{6}{-0.25}{12}{1.75}{red}

\draw[color=red] (6,0.25) circle (.0015cm);
\draw[color=red] (6,-0.25) circle (.0015cm);
\draw[color=blue] (6,4.25) circle (.0015cm);
\draw[color=blue] (6,3.25) circle (.0015cm);
\draw[color=green] (6,3.75) circle (.0015cm);
\draw[color=green] (6,2.75) circle (.0015cm);
			
			\begin{scope}[color=red]
                \draw[fill=red] (0,0.25) circle (.1cm);
                \draw[fill=red] (0,2.25) circle (.1cm);

                \draw[fill=red] (12,3.75) circle (.1cm);
                \draw[fill=red] (12,1.75) circle (.1cm);
            \end{scope}

            \begin{scope}[color=blue]
            	\draw[fill=blue] (0,3.75) circle (.1cm);
            	\draw[fill=blue] (0,2.75) circle (.1cm);
            	\draw[fill=white] (0,1.75) circle (.1cm);
            	\draw[fill=white] (0,0.75) circle (.1cm);
            	
            	\draw[fill=blue] (12,2.75) circle (.1cm);
            	\draw[fill=blue] (12,.75) circle (.1cm);
            \end{scope}

            \begin{scope}[color=green]
            	\draw[fill=green] (0,3.25) circle (.1cm);
            	\draw[fill=green] (0,1.25) circle (.1cm);
            	
            	\draw[fill=green] (12,3.25) circle (.1cm);
            	\draw[fill=green] (12,2.25) circle (.1cm);
            	\draw[fill=white] (12,1.25) circle (.1cm);
            	\draw[fill=white] (12,0.25) circle (.1cm);
            \end{scope}
		\end{scope}
	\end{tikzpicture}
\end{center}
Since this only involves Reidemeister moves of types 2 and 3, the associated operators are equal.

\subsection{The full C*-tensor subcategory $\mathrm{Mod}_{\mathcal{B}}^f$}\label{sec:ModBf}

Denote by $\ModB^f$ the full subcategory of $\ModB$ whose objects are those right Hilbert $\mathcal{B}$-modules which admit a finite orthonormal basis in the sense of section \ref{sec:HilbMod}. (Note that, by \cite{BG}, every module in $\ModB$ admits a possibly infinite orthonormal basis.) Clearly, $\ModB^f$ contains the tensor unit in $\ModB$. In order to check that $\ModB^f$ is a C*-tensor subcategory of $\ModB$, we must show that $\ModB^f$ is closed under tensor products. To do this, let $M$ and $N$ be objects in $\ModB^f$. Choose finite orthonormal bases $(\xi_i)_i$ and $(\eta_j)_j$ for $M$ and $N$, respectively. Using the identities $\xi_i\langle \xi_i,\xi_i\rangle = \xi_i$ and $\eta_j\langle \eta_j,\eta_j\rangle = \eta_j$, one verifies that the elements
$
    \xi_i\otimes\eta_j\otimes \Phi(\langle\xi_i,\xi_i\rangle\otimes\langle \eta_j,\eta_j\rangle)
$
form a finite orthonormal basis for $M\otimes N$, showing that $M\otimes N$ is an object in $\ModB^f$. Note that this could also be deduced from the following easily proved fact.

\begin{fact}\label{lem:ONB}
    Let $\{\xi_1,\ldots,\xi_k\}$ be a finite orthonormal basis for a right Hilbert $\mathcal{B}$-module $M$. Then $M$ is isomorphic to $\bigoplus_{j=1}^k p_j\mathcal{B}$, where $p_j = \langle\xi_j,\xi_j\rangle$ for all $j$.
\end{fact}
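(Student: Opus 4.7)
The plan is to construct an explicit unitary isomorphism between $M$ and $\bigoplus_{j=1}^k p_j\mathcal{B}$ using the Fourier expansion property of the given orthonormal basis. Define $\Phi\colon M \to \bigoplus_{j=1}^k p_j\mathcal{B}$ by $\Phi(\eta) = (\langle \xi_j,\eta\rangle)_{j=1}^k$ and $\Psi\colon \bigoplus_{j=1}^k p_j\mathcal{B} \to M$ by $\Psi((a_j)_j) = \sum_{j=1}^k \xi_j a_j$. The goal is to verify that these are mutually inverse $\mathcal{B}$-linear maps and that $\Phi$ is isometric (so that both are automatically adjointable unitaries in $\ModB$).

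First I would check that $\Phi$ has the correct codomain, i.e., that $\langle \xi_j,\eta\rangle \in p_j\mathcal{B}$. This uses the identity $\xi_j p_j = \xi_j$, which follows by applying the Fourier expansion to $\xi_j$ itself together with orthogonality of the $\xi_i$: indeed $\xi_j = \sum_i \xi_i\langle \xi_i,\xi_j\rangle = \xi_j\langle\xi_j,\xi_j\rangle = \xi_j p_j$. Consequently $\langle \xi_j,\eta\rangle = \langle \xi_j p_j,\eta\rangle = p_j\langle \xi_j,\eta\rangle$, as needed. Right $\mathcal{B}$-linearity of $\Phi$ and $\Psi$ is immediate from the definitions.

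Next I would verify that $\Psi\circ\Phi = \id_M$, which is exactly the Fourier expansion axiom (iii) of section \ref{sec:HilbMod}: $\Psi(\Phi(\eta)) = \sum_j \xi_j\langle \xi_j,\eta\rangle = \eta$. For the reverse composition $\Phi\circ\Psi$, given $(a_j)_j$ with $a_j \in p_j\mathcal{B}$, the $i$-th component of $\Phi(\Psi((a_j)_j))$ is $\sum_j \langle \xi_i,\xi_j\rangle a_j = \langle\xi_i,\xi_i\rangle a_i = p_i a_i = a_i$, using orthogonality (i) together with $a_i \in p_i\mathcal{B}$. Finally, to see that $\Phi$ preserves inner products, compute
$$\langle\Phi(\eta),\Phi(\zeta)\rangle = \sum_j \langle\xi_j,\eta\rangle^*\langle\xi_j,\zeta\rangle = \sum_j \langle\eta,\xi_j\rangle\langle\xi_j,\zeta\rangle = \langle\eta,\zeta\rangle,$$
where the last equality results from applying the Fourier expansion of $\zeta$ inside $\langle\eta,\cdot\rangle$ and using continuity/linearity of the inner product.

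There is essentially no obstacle here: the only mildly subtle point is ensuring that $\langle\xi_j,\eta\rangle$ actually lies in $p_j\mathcal{B}$, which boils down to observing that the orthonormal basis elements satisfy $\xi_j = \xi_j p_j$. Everything else is a direct unpacking of the axioms of an orthonormal basis, and no input from the specific structure of $\mathcal{B}$ or the braided category is required.
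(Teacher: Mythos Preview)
Your proof is correct and is exactly the standard argument one would expect; the paper does not actually give a proof of this fact, merely labeling it an ``easily proved fact,'' so there is nothing to compare against. Your verification that $\xi_j p_j = \xi_j$ (and hence $\langle\xi_j,\eta\rangle\in p_j\mathcal{B}$) via the Fourier expansion and orthogonality is the only point requiring any care, and you handle it correctly.
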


\section{Realizing $\TLJcat(\delta)$ as right Hilbert $\mathcal{B}$-modules}\label{sec:Section5}

In this section, we show that $\TLJcat(\delta)$ is equivalent to $\ModB^f$ as a braided C*-tensor category.

\subsection{A braided monoidal *-functor $F\colon \TLJcat(\delta)\to \mathrm{Mod}_{\mathcal{B}}$}\label{sec:Functor}

We will now define a functor $F\colon \mathcal{C} \to \ModB$ (where $\mathcal{C} = \TLJcat(\delta)$, as above). The following notation will be convenient. Setting $\vec x_n = (\pi,\ldots,\pi,\mathbbm{1},\mathbbm{1},\ldots)$, with $n$ leading copies of $\pi$, we denote $L_{\vec x_n,\vec x_m}(a)$ by $L_{n,m}(a)$ for any $a\in\mathrm{Hom}(\pi^{\otimes m},\pi^{\otimes n}) = \mathrm{TLJ}_{n,m}(\delta)$. We also put $L_n(a) = L_{n,n}(a)$ for any $a\in \mathrm{End}(\pi^{\otimes n}) = \mathrm{TLJ}_n(\delta)$. Finally, we denote by $p_n$ the projection $p_{\vec x_n}$ (as defined on page \pageref{def:px}).

We define $F$ on objects as follows. Given a projection $P\in\mathrm{TLJ}_n(\delta)$, we define $F(P)$ by the formula
$$
    F(P) = L_n(P)\mathcal{B},
$$
on the right hand side of which we view $P$ as a morphism. Given an object $\oplus_j P_j$ in $\mathcal{C}$, we put
$$
    F\big(\!\oplus_j P_j\big) = \oplus_j F(P_j).
$$
On the right hand side, the symbol $\oplus$ denotes the standard direct sum of right Hilbert $\mathcal{B}$-modules.

We next define $F$ on morphisms. Given a morphism $a\in\mathrm{Hom}(P,Q)$, where $P\in\mathrm{TLJ}_n(\delta)$ and $Q\in\mathrm{TLJ}_m(\delta)$ are projections, we define $F(a)$ to be the adjointable map $L_n(P)\mathcal{B}\to L_m(Q)\mathcal{B}$ given by left-multiplication by $L_{m,n}(a)$. For any morphism $(a_{ij})_{i,j}\in\mathrm{Hom}(\oplus_j P_j,\oplus_i Q_i)$, we define $F((a_{ij})_{i,j})$ to be the adjointable map $\oplus_j F(P_j)\to \oplus_i F(Q_i)$ associated to the matrix $(F(a_{ij}))_{i,j}$.

It is clear that $F$ is a *-functor. We will prove that it is in fact a braided monoidal *-functor. Thus, given any two objects $\rho$ and $\nu$ in $\mathcal{C}$, we will define a unitary isomorphism
$
    J_{\rho,\nu}\colon F(\rho)\otimes F(\nu)\to F(\rho\otimes\nu)
$
in such a way that the assignment $(\rho,\nu)\mapsto J_{\rho,\nu}$ is natural in $\rho$ and $\nu$ and all four of the following identities hold for all objects $\rho$, $\nu$ and $\mu$ in $\mathcal{C}$:
\begin{align}
    \label{eq:monoidal}J_{\rho,\nu\otimes \mu}\circ (\id_{F(\rho)}\otimes J_{\nu,\mu})\circ \alpha_{F(\rho),F(\nu),F(\mu)} &= J_{\rho\otimes\nu,\mu}\circ (J_{\rho,\nu}\otimes \id_{F(\mu)}),\\
    \label{eq:unital_right}J_{\rho,\mathbbm{1}} &= 
    \Psi^r_{F(\rho)},\\
    \label{eq:unital_left}J_{\mathbbm{1},\rho} &= 
    \Psi^\ell_{F(\rho)},\\
    \label{eq:braided}F(\sigma_{\rho,\nu}^{\mathrm{TL}})\circ J_{\rho,\nu} &= J_{\nu,\rho}\circ \sigma_{F(\rho),F(\nu)}.
\end{align}
(We are using the fact that $\mathcal{C}$ is strict.)
First, given $n,m\geq 0$, we define a unitary isomorphism
$$
    J_{n,m}\colon p_n\mathcal{B}\otimes p_m\mathcal{B}\longrightarrow p_{n+m}\mathcal{B}
$$
by the formula
$
    J_{n,m}(p_na\otimes p_mb\otimes c) = U_{n,m}\Phi(p_n a\otimes p_m b)c
$
for $a,b,c\in\mathcal{B}$, where $U_{n,m}$ is the partial isometry in $\mathcal{B}$ arising from a diagram like the one depicted below (in the case $n=4$ and $m=2$).
\begin{center}
    \begin{tikzpicture}[scale=0.75]
        \begin{scope}[ultra thick]
            \draw (0,0) -- (0, 4.5);
            \draw (0, 4.5) -- (4, 4.5);
            \draw (4,0) -- (4, 4.5);

            \aline{0}{2.25}{4}{3.75}{blue}
            \aline{0}{1.75}{4}{2.75}{blue}

            \sline{0}{4.25}{4}{4.25}{red}
            \sline{0}{3.75}{4}{3.25}{red}
            \sline{0}{3.25}{4}{2.25}{red}
            \sline{0}{2.75}{4}{1.25}{red}

            \begin{scope}
                \draw[fill=white] (0,1.25) circle (.1cm);
                \draw[fill=white] (0,0.75) circle (.1cm);
                \draw[fill=white] (0,0.25) circle (.1cm);
            \end{scope}

            \begin{scope}[color=blue]
                \draw[fill=blue] (0,2.25) circle (.1cm);
                \draw[fill=blue] (0,1.75) circle (.1cm);

                \draw[fill=blue] (4,3.75) circle (.1cm);
                \draw[fill=blue] (4,2.75) circle (.1cm);

                \draw[fill=white] (4,1.75) circle (.1cm);
                \draw[fill=white] (4,0.75) circle (.1cm);
            \end{scope}

            \begin{scope}[color=red]
            	\draw[fill=red] (0,4.25) circle (.1cm);
            	\draw[fill=red] (0,3.75) circle (.1cm);
            	\draw[fill=red] (0,3.25) circle (.1cm);
            	\draw[fill=red] (0,2.75) circle (.1cm);
            	
            	\draw[fill=red] (4,4.25) circle (.1cm);
            	\draw[fill=red] (4,3.25) circle (.1cm);
            	\draw[fill=red] (4,2.25) circle (.1cm);
            	\draw[fill=red] (4,1.25) circle (.1cm);
            	\draw[fill=white] (4,0.25) circle (.1cm);
            \end{scope}
        \end{scope}
    \end{tikzpicture}
\end{center}
The fact that $J_{n,m}$ is a well-defined unitary isomorphism comes down to the easily verified identities $U_{n,m}^*U_{n,m} = \Phi(p_n\otimes p_m)$ and $U_{n,m}U_{n,m}^* = p_{n+m}$.

Given projections $P\in\mathrm{TLJ}_n(\delta)$ and $Q\in\mathrm{TLJ}_m(\delta)$, we define $J_{P,Q}$ as the restriction of $J_{n,m}$. More generally, given two objects $\oplus_i P_i$ and $\oplus_j Q_j$ in $\mathcal{C}$, we define
$J_{\oplus_i P_i,\oplus_j Q_j}$ as the composition $\big(\oplus_{(i,j)} J_{P_i,Q_j}\big)\circ\phi$, where $\phi$ is the unitary isomorphism
$
    \big(\oplus_i F(P_i)\big)\otimes \big(\oplus_j F(Q_j)\big)\to \oplus_{(i,j)}(F(P_i)\otimes F(Q_j))
$
from section \ref{sec:Tensor}. Note that the domain of $J_{\oplus_i P_i,\oplus_j Q_j}$ is
$
    F(\oplus_i P_i)\otimes F(\oplus_j Q_j) = \big(\oplus_i F(P_i)\big)\otimes \big(\oplus_i F(P_i)\big)
$
while its codomain is
$
    F\big((\oplus_i P_i)\otimes (\oplus_j Q_j)) = \oplus_{(i,j)} (F(P_i)\otimes F(Q_j)).
$

It is easy to reduce the naturality of $J$ as well as the identities in 
equations (\ref{eq:monoidal})--(\ref{eq:braided}) to the case where $\rho$, 
$\nu$ and $\mu$ are projections in Temperley--Lieb--Jones C*-algebras. Since 
$J_{P,Q}$ is defined as the restriction of $J_{n,m}$, it is in fact enough to 
verify these identities in the case where $\rho$, $\nu$ and $\mu$ are identity 
elements in such algebras. This case can be taken care of by straightforward 
diagrammatic arguments. For the convenience of the reader, we indicate the 
proofs of equations (\ref{eq:monoidal}) and (\ref{eq:braided}), starting with 
the latter. In the case under consideration, equation (\ref{eq:braided}) is just
$$
    F(\sigma_{\pi^{\otimes n},\pi^{\otimes m}}^{\mathrm{TL}})\circ J_{n,m} = J_{m,n}\circ \sigma_{F(\pi^{\otimes n}),F(\pi^{\otimes m})}.
$$
The verification of this identity amounts to proving that
\begin{equation}\label{eq:braiding2}
    L_{n+m}(\sigma_{\pi^{\otimes n},\pi^{\otimes m}}^{\mathrm{TL}})\circ U_{n,m}\circ \Phi(p_n\otimes p_m) = U_{m,n}\circ U\circ \Phi(p_n\otimes p_m).
\end{equation}
In the case when $n=m=3$, the left hand side arises from the finite braid diagram
\begin{center}
    \begin{tikzpicture}[scale=0.75]
        \begin{scope}[ultra thick]
        \draw (0,0) -- (0,3);
        \draw (0,3) -- (12,3);
        \draw (12,0) -- (12,3);

        \aline{0}{2.75}{4}{1.25}{blue}
        \aline{0}{2.25}{4}{0.75}{blue}
        \aline{0}{1.75}{4}{0.25}{blue}

        \aline{4}{1.25}{8}{2.25}{blue}
        \aline{4}{0.75}{8}{1.25}{blue}
        \aline{4}{0.25}{8}{0.25}{blue}

        \aline{8}{2.25}{12}{2.25}{blue}
        \aline{8}{1.25}{12}{1.25}{blue}
        \aline{8}{0.25}{12}{0.25}{blue}

        \sline{0}{1.25}{4}{2.75}{red}
        \sline{0}{0.75}{4}{2.25}{red}
        \sline{0}{0.25}{4}{1.75}{red}

        \sline{4}{2.75}{8}{2.75}{red}
        \sline{4}{2.25}{8}{1.75}{red}
        \sline{4}{1.75}{8}{0.75}{red}

        \sline{8}{2.75}{12}{2.75}{red}
        \sline{8}{1.75}{12}{1.75}{red}
        \sline{8}{0.75}{12}{0.75}{red}

            \begin{scope}[color=blue]
                \draw[fill=blue] (0,2.75) circle (.1cm);
                \draw[fill=blue] (0,2.25) circle (.1cm);
                \draw[fill=blue] (0,1.75) circle (.1cm);

                \draw[fill=blue] (4,1.25) circle (.1cm);
                \draw[fill=blue] (4,0.75) circle (.1cm);
                \draw[fill=blue] (4,0.25) circle (.1cm);

                \draw[fill=blue] (8,2.25) circle (.1cm);
                \draw[fill=blue] (8,1.25) circle (.1cm);
                \draw[fill=blue] (8,0.25) circle (.1cm);

                \draw[fill=blue] (12,2.25) circle (.1cm);
                \draw[fill=blue] (12,1.25) circle (.1cm);
                \draw[fill=blue] (12,0.25) circle (.1cm);
            \end{scope}

            \begin{scope}[color=red]
            	\draw[fill=red] (0,1.25) circle (.1cm);
            	\draw[fill=red] (0,0.75) circle (.1cm);
            	\draw[fill=red] (0,0.25) circle (.1cm);

            	\draw[fill=red] (4,2.75) circle (.1cm);            	
            	\draw[fill=red] (4,2.25) circle (.1cm);
            	\draw[fill=red] (4,1.75) circle (.1cm);

            	\draw[fill=red] (8,2.75) circle (.1cm);
            	\draw[fill=red] (8,1.75) circle (.1cm);
                \draw[fill=red] (8,0.75) circle (.1cm);

            	\draw[fill=red] (12,2.75) circle (.1cm);
            	\draw[fill=red] (12,1.75) circle (.1cm);
                \draw[fill=red] (12,0.75) circle (.1cm);          	
            \end{scope}
            \end{scope}
    \end{tikzpicture}
\end{center}
while the right hand side arises from the diagram
\begin{center}
    \begin{tikzpicture}[scale=0.75]
        \begin{scope}[ultra thick]
        \draw (0,0) -- (0,3);
        \draw (0,3) -- (12,3);
        \draw (12,0) -- (12,3);

        \aline{0}{1.25}{4}{2.25}{red}
        \aline{0}{0.75}{4}{1.25}{red}
        \aline{0}{0.25}{4}{0.25}{red}

        \aline{4}{2.25}{6}{1.25}{red}
        \aline{4}{1.25}{6}{0.75}{red}
        \aline{4}{0.25}{6}{0.25}{red}

        \sline{0}{2.75}{4}{2.75}{blue}
        \sline{0}{2.25}{4}{1.75}{blue}
        \sline{0}{1.75}{4}{0.75}{blue}

        \sline{4}{2.75}{6}{2.75}{blue}
        \sline{4}{1.75}{6}{2.25}{blue}
        \sline{4}{0.75}{6}{1.75}{blue}

        \aline{6}{2.75}{8}{2.25}{blue}
        \aline{6}{2.25}{8}{1.25}{blue}
        \aline{6}{1.75}{8}{0.25}{blue}

        \aline{8}{2.25}{12}{2.25}{blue}
        \aline{8}{1.25}{12}{1.25}{blue}
        \aline{8}{0.25}{12}{0.25}{blue}

        \sline{6}{1.25}{8}{2.75}{red}
        \sline{6}{0.75}{8}{1.75}{red}
        \sline{6}{0.25}{8}{0.75}{red}

        \sline{8}{2.75}{12}{2.75}{red}
        \sline{8}{1.75}{12}{1.75}{red}
        \sline{8}{0.75}{12}{0.75}{red}

            \begin{scope}[color=blue]
                \draw[fill=blue] (0,2.75) circle (.1cm);
                \draw[fill=blue] (0,2.25) circle (.1cm);
                \draw[fill=blue] (0,1.75) circle (.1cm);

                \draw[fill=blue] (4,2.75) circle (.1cm);
                \draw[fill=blue] (4,1.75) circle (.1cm);
                \draw[fill=blue] (4,0.75) circle (.1cm);

                \draw[fill=blue] (8,2.25) circle (.1cm);
                \draw[fill=blue] (8,1.25) circle (.1cm);
                \draw[fill=blue] (8,0.25) circle (.1cm);

                \draw[fill=blue] (12,2.25) circle (.1cm);
                \draw[fill=blue] (12,1.25) circle (.1cm);
                \draw[fill=blue] (12,0.25) circle (.1cm);
            \end{scope}

            \begin{scope}[color=red]
            	\draw[fill=red] (0,1.25) circle (.1cm);
            	\draw[fill=red] (0,0.75) circle (.1cm);
            	\draw[fill=red] (0,0.25) circle (.1cm);

            	\draw[fill=red] (4,2.25) circle (.1cm);            	
            	\draw[fill=red] (4,1.25) circle (.1cm);
            	\draw[fill=red] (4,0.25) circle (.1cm);

            	\draw[fill=red] (8,2.75) circle (.1cm);
            	\draw[fill=red] (8,1.75) circle (.1cm);
                \draw[fill=red] (8,0.75) circle (.1cm);

            	\draw[fill=red] (12,2.75) circle (.1cm);
            	\draw[fill=red] (12,1.75) circle (.1cm);
                \draw[fill=red] (12,0.75) circle (.1cm);          	
            \end{scope}

\draw[color=blue] (6,2.75) circle (.0015cm);
\draw[color=blue] (6,2.25) circle (.0015cm);
\draw[color=blue] (6,1.75) circle (.0015cm);
\draw[color=red] (6,1.25) circle (.0015cm);
\draw[color=red] (6,0.75) circle (.0015cm);
\draw[color=red] (6,0.25) circle (.0015cm);
            \end{scope}
    \end{tikzpicture}
\end{center}
As one of these diagrams can in general be obtained from the other by a finite sequence of Reidemeister moves of type $2$, we get equation (\ref{eq:braiding2}).
Similarly, equation (\ref{eq:monoidal}) reduces to the identity
\begin{equation}\label{eq:aux}
    U_{n,m+k}\circ\Phi(p_n\otimes U_{m,k})\circ V = U_{n+m,k}\circ\Phi(U_{n,m}\otimes p_k).
\end{equation}
In the case when $n = 3$ and $m = k = 2$, the left hand side arises from the diagram
\begin{center}
    \begin{tikzpicture}[scale=0.75]
        \begin{scope}[ultra thick]
            \draw (0,0) -- (0,4.5);
            \draw (0,4.5) -- (12,4.5);
            \draw (12,0) -- (12,4.5);

            \aline{0}{1.75}{5.5}{1.75}{green}
            \aline{0}{1.25}{4}{0.75}{green}
            \aline{5.5}{1.75}{8}{2.75}{green}
            \aline{8}{2.75}{12}{3.75}{green}
            \aline{4}{0.75}{10}{0.75}{green}
            \aline{10}{0.75}{12}{2.75}{green}

            \aline{0}{2.75}{4}{3.75}{red}
            \sline{4}{3.75}{8}{3.75}{red}
            \sline{8}{3.75}{12}{3.25}{red}
            \aline{0}{2.25}{4}{2.75}{red}
            \sline{4}{2.75}{6.5}{1.75}{red}
            \sline{6.5}{1.75}{8}{1.75}{red}
            \sline{8}{1.75}{12}{1.25}{red}

            \sline{0}{4.25}{12}{4.25}{blue}
            \sline{0}{3.75}{4}{3.25}{blue}
            \sline{4}{3.25}{8}{3.25}{blue}
            \sline{8}{3.25}{12}{2.25}{blue}
            \sline{0}{3.25}{4}{2.25}{blue}
            \sline{4}{2.25}{8}{2.25}{blue}
            \sline{8}{2.25}{12}{0.25}{blue}

            \draw[fill=white] (0,0.75) circle (.1cm);
            \draw[fill=white] (0,0.25) circle (.1cm);

            \begin{scope}[color=red]
                \draw[fill=red] (0,2.75) circle (.1cm);
                \draw[fill=red] (0,2.25) circle (.1cm);

                \draw[fill=red] (4,3.75) circle (.1cm);
                \draw[fill=red] (4,2.75) circle (.1cm);

                \draw[fill=red] (8,3.75) circle (.1cm);
                \draw[fill=red] (8,1.75) circle (.1cm);

                \draw[fill=red] (12,3.25) circle (.1cm);
                \draw[fill=red] (12,1.25) circle (.1cm);
            \end{scope}

            \begin{scope}[color=green]
                \draw[fill=green] (0,1.75) circle (.1cm);
                \draw[fill=green] (0,1.25) circle (.1cm);

                \draw[fill=green] (4,1.75) circle (.1cm);
                \draw[fill=green] (4,0.75) circle (.1cm);

                \draw[fill=green] (8,2.75) circle (.1cm);
                \draw[fill=green] (8,0.75) circle (.1cm);

                \draw[fill=green] (12,3.75) circle (.1cm);
                \draw[fill=green] (12,2.75) circle (.1cm);
                \draw[fill=white] (12,1.75) circle (.1cm);
                \draw[fill=white] (12,0.75) circle (.1cm);
            \end{scope}

            \begin{scope}[color=blue]
                \draw[fill=blue] (0,4.25) circle (.1cm);
                \draw[fill=blue] (0,3.75) circle (.1cm);
                \draw[fill=blue] (0,3.25) circle (.1cm);

                \draw[fill=blue] (4,4.25) circle (.1cm);
                \draw[fill=blue] (4,3.25) circle (.1cm);
                \draw[fill=blue] (4,2.25) circle (.1cm);
                \draw[fill=white] (4,1.25) circle (.1cm);
                \draw[fill=white] (4,0.25) circle (.1cm);

                \draw[fill=blue] (8,4.25) circle (.1cm);
                \draw[fill=blue] (8,3.25) circle (.1cm);
                \draw[fill=blue] (8,2.25) circle (.1cm);
                \draw[fill=white] (8,1.25) circle (.1cm);
                \draw[fill=white] (8,0.25) circle (.1cm);

                \draw[fill=blue] (12,4.25) circle (.1cm);
                \draw[fill=blue] (12,2.25) circle (.1cm);
                \draw[fill=blue] (12,0.25) circle (.1cm);
            \end{scope}
        \end{scope}
    \end{tikzpicture}
\end{center}
while the right hand side arises from
\begin{center}
    \begin{tikzpicture}[scale=0.75]
        \begin{scope}[ultra thick]
            \draw (0,0) -- (0,4.5);
            \draw (0,4.5) -- (8,4.5);
            \draw (8,0) -- (8,4.5);

            \aline{0}{1.75}{4}{3.75}{green}
            \aline{4}{3.75}{8}{3.75}{green}
            \aline{0}{1.25}{4}{2.75}{green}
            \aline{4}{2.75}{8}{2.75}{green}

            \sline{0}{2.75}{4}{1.25}{red}
            \draw[color=red] (4,1.25) .. controls (4,1.2) and (5.2,.9) .. (5.25,1);
            \sline{5.25}{1}{8}{3.25}{red}
            \sline{0}{2.25}{4}{0.25}{red}
            \sline{4}{0.25}{8}{1.25}{red}

            \sline{0}{4.25}{8}{4.25}{blue}
            \sline{0}{3.75}{4}{3.25}{blue}
            \sline{4}{3.25}{8}{2.25}{blue}
            \sline{0}{3.25}{4}{2.25}{blue}
            \sline{4}{2.25}{8}{0.25}{blue}

            \draw[fill=white] (0,0.75) circle (.1cm);
            \draw[fill=white] (0,0.25) circle (.1cm);

            \begin{scope}[color=red]
                \draw[fill=red] (0,2.75) circle (.1cm);
                \draw[fill=red] (0,2.25) circle (.1cm);

                \draw[fill=red] (4,1.25) circle (.1cm);
                \draw[fill=red] (4,0.25) circle (.1cm);

                \draw[fill=red] (8,3.25) circle (.1cm);
                \draw[fill=red] (8,1.25) circle (.1cm);
            \end{scope}

            \begin{scope}[color=green]
                \draw[fill=green] (0,1.75) circle (.1cm);
                \draw[fill=green] (0,1.25) circle (.1cm);

                \draw[fill=green] (4,3.75) circle (.1cm);
                \draw[fill=green] (4,2.75) circle (.1cm);
                \draw[fill=white] (4,1.75) circle (.1cm);
                \draw[fill=white] (4,0.75) circle (.1cm);

                \draw[fill=green] (8,3.75) circle (.1cm);
                \draw[fill=green] (8,2.75) circle (.1cm);
                \draw[fill=white] (8,1.75) circle (.1cm);
                \draw[fill=white] (8,0.75) circle (.1cm);
            \end{scope}

            \begin{scope}[color=blue]
                \draw[fill=blue] (0,4.25) circle (.1cm);
                \draw[fill=blue] (0,3.75) circle (.1cm);
                \draw[fill=blue] (0,3.25) circle (.1cm);

                \draw[fill=blue] (4,4.25) circle (.1cm);
                \draw[fill=blue] (4,3.25) circle (.1cm);
                \draw[fill=blue] (4,2.25) circle (.1cm);

                \draw[fill=blue] (8,4.25) circle (.1cm);
                \draw[fill=blue] (8,2.25) circle (.1cm);
                \draw[fill=blue] (8,0.25) circle (.1cm);
            \end{scope}
        \end{scope}
    \end{tikzpicture}
\end{center}
As in the proof of the pentagon identity, equation (\ref{eq:aux}) is verified in general by noting that the strands live on three separate layers (corresponding to the three colors used in the figures).

\subsection{$\TLJcat(\delta)$ and $\mathrm{Mod}_{\mathcal{B}}^f$ are equivalent}\label{sec:Equiv}

Finally, we will prove that $F$ is fully faithful (i.e., restricts to a bijection on each morphism space) and that if we restrict its codomain to the subcategory $\ModB^f$ then it is essentially surjective (i.e., hits every isomorphism class of objects).

\begin{lem}\label{lem:full}
    The functor $F$ is fully faithful.
\end{lem}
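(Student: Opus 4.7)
The plan is to reduce to the case of single projections and then use the standard description of morphisms between corner modules over a C*-algebra.

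For the reduction, let $\rho = \oplus_j P_j$ and $\nu = \oplus_i Q_i$ be arbitrary objects of $\mathcal{C}$ with $P_j \in \mathrm{TLJ}_{n_j}(\delta)$ and $Q_i \in \mathrm{TLJ}_{m_i}(\delta)$ projections. By construction, $F(\rho) = \oplus_j F(P_j)$ and $F(\nu) = \oplus_i F(Q_i)$, and morphisms on either side decompose into matrices whose $(i,j)$'th entry lies in $\mathrm{Hom}(P_j,Q_i)$ and $\mathrm{Hom}(F(P_j),F(Q_i))$ respectively. Moreover, $F$ acts entrywise on such matrices. Therefore, fully faithfulness of $F$ reduces to the case where $\rho$ and $\nu$ are single projections.

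So fix projections $P \in \mathrm{TLJ}_n(\delta)$ and $Q \in \mathrm{TLJ}_m(\delta)$. Write $p = L_n(P)$ and $q = L_m(Q)$, so $F(P) = p\mathcal{B}$ and $F(Q) = q\mathcal{B}$. It is a standard fact for Hilbert C*-modules over a C*-algebra $\mathcal{B}$ that, for projections $p,q\in\mathcal{B}$, the map $b\mapsto (pa\mapsto ba)$ defines a bijection $q\mathcal{B}p \to \mathrm{Hom}(p\mathcal{B},q\mathcal{B})$ (its inverse sends $T$ to $T(p)$). Under this bijection, $F$ becomes the map $a\mapsto L_{m,n}(a)$ from $\mathrm{Hom}(P,Q) = Q\,\mathrm{TLJ}_{m,n}(\delta)\,P$ into $q\mathcal{B}p = L_m(Q)\mathcal{B}L_n(P)$. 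Thus it remains to show this latter map is a bijection.

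Injectivity is Lemma \ref{lem:iso}, which tells us that $a\mapsto L_{m,n}(a)$ is isometric. For surjectivity, by linearity and density it suffices to compute $L_m(Q)\cdot L_{\vec x,\vec y}(c)\cdot L_n(P)$ for a generating operator $L_{\vec x,\vec y}(c)$ of $\mathcal{B}$; using the composition rule
$L_{\vec x,\vec y}(a)\circ L_{\vec v,\vec w}(b) = \delta_{\vec y,\vec v}L_{\vec x,\vec w}(a\circ b)$
together with the definitions $p = L_{\vec x_n,\vec x_n}(P)$ and $q = L_{\vec x_m,\vec x_m}(Q)$, this product equals $L_{m,n}(Q\circ c\circ P)$ when $\vec x=\vec x_m$ and $\vec y=\vec x_n$ and vanishes otherwise. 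Hence $L_m(Q)\mathcal{B}L_n(P) = L_{m,n}(Q\,\mathrm{TLJ}_{m,n}(\delta)\,P)$, which proves surjectivity.

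The only mildly non-routine point is recognizing that the corner $q\mathcal{B}p$ is completely computed by the composition rule for the operators $L_{\vec x,\vec y}(c)$; once that is in hand, the rest follows from Lemma \ref{lem:iso} and the standard bijection between morphisms of corner modules and elements of the corresponding corner.
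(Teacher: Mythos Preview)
Your proof is correct and follows essentially the same route as the paper: reduce to single projections, identify $\mathrm{Hom}(F(P),F(Q))$ with the corner $L_m(Q)\mathcal{B}L_n(P)$, invoke Lemma~\ref{lem:iso} for injectivity, and compute the corner via the composition rule for the $L_{\vec x,\vec y}(a)$. The paper phrases the surjectivity step as $L_m(Q)\mathcal{B}_kL_n(P)\subseteq L_m(Q)\mathcal{B}_KL_n(P)$ for $K=\max\{m,n\}$, which is exactly your computation combined with the finite-dimensionality (hence closedness) of the target that your density argument implicitly relies on.
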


\begin{proof}
    It suffices to prove that $F$ restricts to a bijective map $\mathrm{Hom}(P,Q)\to \mathrm{Hom}(F(P),F(Q))$ for any given pair of projections $P\in\mathrm{TLJ}_n(\delta)$ and $Q\in\mathrm{TLJ}_m(\delta)$.

    We first prove injectivity. Let $a\in \mathrm{Hom}(P,Q)$ be such that $F(a) = 0$. Since $F$ is linear on morphisms, we need only show that $a=0$. Since $F(a)$ is left-multiplication by $L_{m,n}(a)$, we get that
    $0 = L_{m,n}(a)\circ L_n(P) = L_{m,n}(aP) = L_{m,n}(a)$, whereby $\|a\| = \|L_{m,n}(a)\| = 0$ by Lemma \ref{lem:iso}.

    We next prove surjectivity. Let $f\in \mathrm{Hom}(F(P),F(Q))$ be given. Then $f$ performs left-multipli\-cation by $b = f(L_n(P))\in L_m(Q)\mathcal{B}L_n(P)$. Set $K = \max\{m,n\}$ so that $L_n(P),L_m(Q)\in\mathcal{B}_K$. Since $\mathcal{B}_K$ is finite-dimensional and $L_m(Q)\mathcal{B}_kL_n(P)\subseteq L_m(Q)\mathcal{B}_KL_n(P)$ for all $k\geq 0$, it follows that $b\in L_m(Q)\mathcal{B}_KL_n(P)$. Thus, $b = L_{m,n}(a)$ for some morphism $a\in\mathrm{Hom}(P,Q)$.
\end{proof}

\begin{lem}
    The (codomain-restricted) functor $F\colon \TLJcat(\delta)\to \ModB^f$ is essentially surjective.
\end{lem}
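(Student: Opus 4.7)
The plan is to reduce essential surjectivity to a transparent Murray--von Neumann calculation enabled by Lemma \ref{lem:sum}. Since $F$ preserves finite direct sums on the nose (by its definition on objects, $F(\bigoplus_j P_j)=\bigoplus_j F(P_j)$), and Fact \ref{lem:ONB} writes any $M\in\ModB^f$ as $M\cong\bigoplus_{j=1}^k p_j\mathcal{B}$ with $p_j\in\mathcal{B}$ projections, it suffices to show that for each projection $p\in\mathcal{B}$ there exist $m\geq 0$ and a projection $P\in\mathrm{TLJ}_m(\delta)$ with $p\mathcal{B}\cong L_m(P)\mathcal{B}=F(P)$.

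I then invoke Lemma \ref{lem:sum} to identify $\mathcal{B}$ with the $c_0$-direct sum $\bigoplus_{s\in\mathcal{S}}\mathbb{K}(H^s)$. Under this identification, any projection $p\in\mathcal{B}$ decomposes as $p=\sum_{s\in\mathcal{S}_0}p_s$ with $\mathcal{S}_0\subset\mathcal{S}$ finite and each $p_s$ of some finite rank $N_s\geq 1$ in $\mathbb{K}(H^s)$; finiteness of $\mathcal{S}_0$ follows from the $c_0$-structure together with $\|p_s\|\in\{0,1\}$. Two projections in $\mathcal{B}$ give isomorphic right Hilbert $\mathcal{B}$-modules if and only if they are Murray--von Neumann equivalent, if and only if they have matching ranks in every summand $\mathbb{K}(H^s)$.

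The remaining task is therefore to realize the multiplicity profile $(N_s)_{s\in\mathcal{S}_0}$ as the rank profile of some $L_m(P)$. For this I would verify the rank formula that for any projection $Q\in\mathrm{TLJ}_m(\delta)$ and any $s\in\mathcal{S}$, the rank of $L_m(Q)$ in $\mathbb{K}(H^s)$ equals the multiplicity $(s,Q)$ of $s$ in $Q$ as an object of $\TLJcat(\delta)$. Indeed, $L_m(Q)=L_{\vec x_m,\vec x_m}(Q)$ acts as zero on $\mathrm{Hom}(s,o(\vec\mu))$ unless $\vec\mu=\vec x_m$, and as left composition by $Q$ on $\mathrm{Hom}(s,\pi^{\otimes m})$, so its range has dimension $\dim(Q\cdot\mathrm{Hom}(s,\pi^{\otimes m}))=\dim\mathrm{Hom}(s,Q)=(s,Q)$.

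Finally, since $\pi$ tensor-generates $\mathcal{C}$ and the multiplicities $(s,\pi^{\otimes m})$ grow without bound in $m$ of the appropriate parity (by the Jones--Wenzl fusion rules in section \ref{sec:JW}), I can pick one $m$ large enough that $(s,\pi^{\otimes m})\geq N_s$ for all $s\in\mathcal{S}_0$ simultaneously. Decomposing $\id_{\pi^{\otimes m}}$ into minimal projections in $\mathrm{TLJ}_m(\delta)$ along the simple summands of $\pi^{\otimes m}$ and summing precisely $N_s$ of those sitting in the matrix block corresponding to $s$, for each $s\in\mathcal{S}_0$, yields a projection $P\in\mathrm{TLJ}_m(\delta)$ with $(s,P)=N_s$ for $s\in\mathcal{S}_0$ and $(s,P)=0$ otherwise. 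The rank formula then gives $L_m(P)\sim p$ in $\mathcal{B}$, hence $p\mathcal{B}\cong F(P)$ as required. The only substantive point is the rank formula; beyond that, the argument is routine semisimple bookkeeping, so I do not foresee a serious obstacle.
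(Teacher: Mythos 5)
Your reduction, rank formula, and Murray--von Neumann bookkeeping are all sound, but the final construction step contains a genuine gap: you claim that for \emph{every} projection $p\in\mathcal{B}$ there is a \emph{single} $m$ and a projection $P\in\mathrm{TLJ}_m(\delta)$ with $p\mathcal{B}\cong L_m(P)\mathcal{B}$, and this is false in general because of a parity obstruction. Since $(m,j)$-Temperley--Lieb diagrams exist only when $m\equiv j \pmod 2$, one has $\mathrm{Hom}(f^{(j)},\pi^{\otimes m})=0$ whenever $j\not\equiv m\pmod 2$; equivalently, the fusion graph of $\pi=f^{(1)}$ is bipartite, so $(s,\pi^{\otimes m})$ is nonzero only for simples $s$ of the same parity as $m$. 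Hence if $p$ has nonzero components in, say, both $\mathbb{K}(H^{f^{(0)}})$ and $\mathbb{K}(H^{f^{(1)}})$, no choice of $m$ can satisfy $(s,\pi^{\otimes m})\geq N_s$ for all $s\in\mathcal{S}_0$ simultaneously --- your own phrase ``in $m$ of the appropriate parity'' already signals the problem, and picking ``one $m$ large enough'' is exactly what cannot be done.

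The gap is easily repaired, and the repair brings you essentially to the paper's argument: do not insist on one block $\mathrm{TLJ}_m(\delta)$, but treat each summand separately and use the fact that objects of $\TLJcat(\delta)$ are \emph{formal} direct sums $P_1\oplus\cdots\oplus P_k$ with the $P_j$ allowed to live in $\mathrm{TLJ}_{m_j}(\delta)$ for different (in particular, differently-parified) $m_j$, together with $F(\oplus_j P_j)=\oplus_j F(P_j)$. Concretely, write $p=\sum_{s\in\mathcal{S}_0}p_s$ (or go all the way down to minimal projections, as the paper does): for each $s$ choose $m_s$ of the correct parity with $(s,\pi^{\otimes m_s})\geq N_s$, build $P_s\in\mathrm{TLJ}_{m_s}(\delta)$ with $(s,P_s)=N_s$ via your rank formula, and conclude $p\mathcal{B}\cong\bigoplus_s p_s\mathcal{B}\cong F(\oplus_s P_s)$. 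The paper's own proof avoids the issue from the start by reducing to a rank-one projection $q$ in a single summand $\mathbb{K}(H^s)$ and realizing it as $L_n(vv^*)$ for a unit vector $v\in\mathrm{Hom}(s,\pi^{\otimes n})$ with $s\prec\pi^{\otimes n}$; your rank formula is a nice generalization of that observation, but as written the proof does not go through.
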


\begin{proof}
    Let $M$ be any object in $\ModB^f$. By Fact \ref{lem:ONB}, $M$ is 
    isomorphic to a direct sum of modules of the form $p\mathcal{B}$, where $p$ 
    is a projection in $\mathcal{B}$. Writing such a projection $p$ as a finite 
    sum of minimal projections, we get that $p\mathcal{B}$ is isomorphic to a 
    direct sum of modules of the form $q\mathcal{B}$, where $q$ is a minimal 
    projection in $\mathcal{B}$. Thus, we may assume that $M = q\mathcal{B}$, 
    where $q$ is a rank one projection in, say, the summand $\mathbb{K}(H^s)$. 
    Pick $n\geq 0$ such that $s\prec\pi^{\otimes n}$, and let $v$ be a unit 
    vector in $\mathrm{Hom}(s,\pi^{\otimes n})$. Then $L_{n}(vv^*)$ is a rank 
    one projection in the summand $\mathbb{K}(H^s)$, and therefore Murray--von 
    Neumann equivalent to $q$ in $\mathcal{B}$. It follows that
    $
        q\mathcal{B} \cong L_{n}(vv^*)\mathcal{B} = F(vv^*),
    $
    which yields the stated result.
\end{proof}

In conclusion, we have the following theorem.

\begin{thm}\label{thm}
    If $\delta\in \{2\cos(\pi/(k+2))\,:\,k=1,2,\ldots\}\cup\{2\}$ then the 
    Temperley--Lieb--Jones C*-tensor category $\TLJcat(\delta)$ and the 
    category $\ModB^f$ are equivalent as braided C*-tensor categories.
\end{thm}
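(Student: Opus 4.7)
The plan is to deduce the theorem directly from the two lemmas immediately preceding it and the fact, established in section \ref{sec:Functor}, that the functor $F\colon \TLJcat(\delta)\to \ModB$ is a braided monoidal *-functor. Since the image of $F$ already lies in $\ModB^f$ (each $F(P) = L_n(P)\mathcal{B}$ has the singleton $\{L_n(P)\}$ as a finite orthonormal basis, and finite direct sums of objects in $\ModB^f$ remain in $\ModB^f$), we may view $F$ as a braided monoidal *-functor $\TLJcat(\delta)\to \ModB^f$.

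The first step is to invoke the preceding two lemmas: $F\colon \TLJcat(\delta)\to\ModB^f$ is both fully faithful (as a map on morphism spaces) and essentially surjective on objects. By the standard characterization of equivalences of categories, any such functor admits a quasi-inverse $G\colon \ModB^f\to \TLJcat(\delta)$ together with natural isomorphisms $\eta\colon \id \Rightarrow GF$ and $\epsilon\colon FG\Rightarrow \id$. Concretely, for each object $M$ of $\ModB^f$, choose an object $G(M)$ in $\TLJcat(\delta)$ together with a unitary isomorphism $\epsilon_M\colon F(G(M))\to M$ (which exists by essential surjectivity); the action of $G$ on morphisms is then determined by full faithfulness, and the naturality of $\epsilon$ is automatic. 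One checks that $\eta$ and $\epsilon$ can be taken to be unitary isomorphisms: indeed $\epsilon_M$ was chosen to be unitary, and $\eta$ inherits unitarity since $F$ preserves adjoints.

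The second step is to promote this equivalence to an equivalence of braided C*-tensor categories. Since $F$ is equipped with natural unitary isomorphisms $J_{\rho,\nu}\colon F(\rho)\otimes F(\nu)\to F(\rho\otimes\nu)$ compatible with the associators, unit constraints, and unitary braidings (equations (\ref{eq:monoidal})--(\ref{eq:braided})), we can transport this monoidal and braided structure across to $G$ by conjugating with $\epsilon$ and $\eta$. Explicitly, for objects $M,N$ in $\ModB^f$, define
$$
    J^G_{M,N}\colon G(M)\otimes G(N)\longrightarrow G(M\otimes N)
$$
as the unique morphism sent by $F$ to $\epsilon_{M\otimes N}^{-1}\circ(\epsilon_M\otimes\epsilon_N)\circ J^{-1}_{G(M),G(N)}$; this makes sense because $F$ is fully faithful. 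Direct diagram-chasing then shows that the hexagon, pentagon, and triangle identities for $G$ follow from those for $F$ together with the naturality of $\epsilon$.

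The main work is essentially bookkeeping rather than any deep obstacle: the substantive content has already been carried out in proving that $F$ is a braided monoidal *-functor and in the two preceding lemmas. The only potential subtlety is the transport of structure argument ensuring that $G$ becomes a braided monoidal *-functor and that the natural isomorphisms $\eta,\epsilon$ are monoidal in the appropriate sense, but this is a completely standard categorical fact once $F$ is known to be a fully faithful, essentially surjective braided monoidal *-functor. The conclusion is that $F$ and $G$ together realize the desired equivalence of braided C*-tensor categories between $\TLJcat(\delta)$ and $\ModB^f$.
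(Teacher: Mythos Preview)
Your proposal is correct and follows essentially the same approach as the paper: the theorem is deduced directly from $F$ being a braided monoidal *-functor (section \ref{sec:Functor}) together with the two preceding lemmas establishing full faithfulness and essential surjectivity. The paper simply states ``In conclusion, we have the following theorem'' without spelling out the transport-of-structure argument you give, treating it as the standard categorical fact you identify.
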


We end this section with a couple of remarks.

\begin{rem}
    Although we have not defined the conjugate of an arbitrary object in $\ModB^f$, we have shown that every such object is isomorphic to $F(P)$ for some object $P$ in $\TLJcat(\delta)$. Since $F$ is a monoidal *-functor, $F(P)$ has a conjugate (namely $F(\bar{P}) = F(P)$). Thus, every object in $\ModB^f$ does have a conjugate and $\ModB^f$ is in fact a rigid braided C*-tensor category. (See \cite{LR} and e.g.\ section 2.2 of \cite{NT} for the concepts of conjugates and rigidity in C*-tensor categories.)
\end{rem}

\begin{rem}\label{rem:categorification}
    Denote by $R$ the fusion ring of $\ModB^f$ consisting of formal differences $[X]-[Y]$ of isomorphism classes of modules in $\ModB^f$.
    Define a group homomorphism $\phi\colon R\to K_0(\mathcal{B})$ by $\phi([X]) = [\mathbf{p}]$, where $\mathbf{p} = \diag(p_1,\ldots,p_n)$ is any diagonal projection in $M_n(\mathcal{B})$ for which $X\cong\bigoplus_{j=1}^n p_j\mathcal{B}$. Since $\mathcal{B}$-linear maps $\bigoplus_{j=1}^n p_j\mathcal{B}\to \bigoplus_{i=1}^m q_i\mathcal{B}$ may be identified with $m\times n$-matrices whose $(i,j)$'th entry belongs to $q_i\mathcal{B}p_j$ in such a way that composition corresponds to matrix multiplication and adjoints correspond to matrix adjoints, we get that $\phi$ is well-defined. (In the module picture of $K_0(\mathcal{B})$, $[\mathbf{p}]$ corresponds to $[X\otimes\tilde{\mathcal{B}}]$, where $\tilde{\mathcal{B}}$ is the unitalization of $\mathcal{B}$.) The map $\phi$ is injective because $\mathcal{B}$ is an AF-algebra and hence admits cancellation. Since $\phi([q_s\mathcal{B}]) = [q_s]$ for all $s\in\mathcal{S}$, where $q_s$ is any minimal projection in the summand $\mathbb{K}(H^s)$ of $\mathcal{B}$, and the classes $[q_s]$ generate $K_0(\mathcal{B})$, it follows that $\phi$ is surjective.
    If $X = \bigoplus_{i=1}^n p_i\mathcal{B}$ and $Y = \bigoplus_{j=1}^m q_j\mathcal{B}$ then
    $
        X\otimes Y\cong \bigoplus_{(i,j)} \Phi(p_i\otimes q_j)\mathcal{B}
    $,
    and $K_0(\Phi)([\diag(p_1,\ldots,p_n)]\otimes [\diag(q_1,\ldots,q_m)])$ is the class of the diagonal $mn\times mn$-matrix whose $(i,j)$'th diagonal entry is $\Phi(p_i\otimes q_j)$.
    Thus, we may now conclude that $\phi$ is an isomorphism of rings. In this sense, the above equivalence of categories $\TLJcat(\delta)\cong \ModB^f$ ``categorifies'' the isomorphism $\mathbb{Z}[\mathcal{S}]\cong K_0(\mathcal{B})$ of rings that was exhibited in Remark \ref{rem:Kring}.
\end{rem}

\begin{rem}\label{rem:gen}
    Theorem \ref{thm} can in fact be proved in greater generality, as we next indicate. Let $\mathcal{C}$ be a finitely generated rigid (see Definition 2.2.1 of \cite{NT}) braided C*-tensor category. The assumption that $\mathcal{C}$ is rigid implies that $\mathcal{C}$ is semisimple (see section \ref{sec:semisimple}) and that each $\mathrm{End}_{\mathcal{C}}(\rho)$ is a finite-dimensional C*-algebra equipped with a canonical positive faithful trace (cf.\ \cite{LR}; see also \cite{NT}). The assumption that $\mathcal{C}$ is finitely generated means that there exists a finite set $\mathcal{L}$ of objects such that every simple object in $\mathcal{C}$ occurs as a direct summand of a tensor product of objects in $\mathcal{L}$.

    By a version of the Mac Lane Coherence Theorem (that can e.g.\ be deduced from the proof of Theorem XI.5.3 in \cite{Kassel}), we may assume that $\mathcal{C}$ is strict.
    Denote by $\pi$ a direct sum of the objects in $\mathcal{L}$.
    By Theorem 2.17 in \cite{BHP}, for example, the category $\mathcal{C}$ is equivalent, as a C*-tensor category, to the category $\mathcal{D}$ whose objects are formal finite sums $P_1\oplus\cdots\oplus P_k$ of projections $P_j\in \mathrm{End}_{\mathcal{C}}(\pi^{\otimes n_j})$ and whose morphisms $\oplus_j P_j\to \oplus_iQ_i$ are matrices whose $(i,j)$'th entry belongs to $Q_i\mathrm{Hom}_{\mathcal{C}}(\pi^{\otimes n_j},\pi^{\otimes m_i})P_j$ (when $Q_i\in\mathrm{End}_{\mathcal{C}}(\pi^{\otimes m_i})$). One can use the unitary braiding $\sigma$ on $\mathcal{C}$ to define a unitary braiding $\tilde{\sigma}$ on $\mathcal{D}$ by $\tilde{\sigma}_{P,Q} = \sigma_{\pi^{\otimes n},\pi^{\otimes m}}\circ (P\otimes Q)$ (when $P\in\mathrm{End}(\pi^{\otimes n})$ and $Q\in\mathrm{End}(\pi^{\otimes m})$). Then $\mathcal{C}$ and $\mathcal{D}$ are equivalent as braided C*-tensor categories.

    Next, put $\mathcal{G} = \{\mathbbm{1},\pi\}$ and choose $\mathcal{S}$ as 
    in section \ref{sec:semisimple}. Then, as in section \ref{sec:B}, we 
    construct a Hilbert space $H = \oplus_{s\in\mathcal{S}}H^s$ [where $H^s = 
    \oplus_{\vec x\in\mathcal{G}^\infty}\mathrm{Hom}_{\mathcal{C}}(s,o(\vec 
    x))$], operators $L_{\vec x,\vec y}(a)$, and a C*-algebra $\mathcal{B}$ 
    that is *-isomorphic to $\oplus_{s\in\mathcal{S}}\mathbb{K}(H^s)$. We can 
    also define a *-homomorphism 
    $\Phi\colon\mathcal{B}\otimes\mathcal{B}\to\mathcal{B}$ and equip $\ModB$ 
    with associators, unit constraints and a unitary braiding as in sections 
    \ref{sec:Phi} and \ref{sec:Monoidal} by using the well-known graphical 
    calculus for braided tensor categories (cf.\ e.g.\ \cite{Tur}). Hence, 
    $\ModB$ obtains the structure of a braided C*-tensor category. Finally, we 
    can define a braided monoidal *-functor $F\colon\mathcal{D}\to\ModB^f$ as 
    in section \ref{sec:Functor} and show, as in section \ref{sec:Equiv}, that 
    $F$ is an equivalence of categories. Thus, the initial category 
    $\mathcal{C}$ is equivalent to $\ModB^f$ as a braided C*-tensor category. 

    Let us finally mention some examples of categories to which this 
    generalization of Theorem \ref{thm} applies. Firstly, $\mathcal{C}$ 
    could be the representation category of a 
    compact group. Secondly, and more interestingly for us, $\mathcal{C}$ could 
    be a further example of the
    Verlinde fusion category in conformal field theory e.g.\ arising from the 
    finite-level, positive-energy representation theory of the loop group of a 
    compact, simple, connected, simply-connected Lie group (cf.\ \cite{PS}, 
    \cite{Wa2}). 
    (The 
    Temperley--Lieb--Jones category is the Verlinde fusion category arising 
    from 
    $\mathrm{SU}(2)$.) These latter categories can also be constructed from 
    certain 
    quantum 
    groups at roots of
    unity (cf. \cite{We2}; see also section 6A of \cite{EW}). Thirdly, there 
    are 
    examples 
    arising from the quantum double construction applied to not necessarily 
    braided categories, which yields braided C*-tensor categories. The 
    most prominent of these is the quantum double of the Haagerup subfactor, 
    which has 
    attracted much attention recently due to evidence that this system should 
    arise from a conformal field theory (cf.\ \cite{EG}).
\end{rem}

\section{Concluding remarks and outlook}\label{sec:Conclusion}

In the present paper, we have shown how to realize certain braided C*-tensor categories as categories of (right) Hilbert C*-modules with a natural tensor product structure (see Theorem \ref{thm} and Remark \ref{rem:gen}) or, phrased differently, how certain braided C*-tensor categories act faithfully on certain C*-algebras via Hilbert C*-modules. In light of this, it is natural to ask on which C*-algebras a given C*-tensor category (possibly without a unitary braiding) can act (faithfully) in this sense.
In this context, it may be noted that, starting from $\TLJcat(\delta)$, for example, one can define a variant of the Hilbert C*-bimodule $\mathcal{X}$ of Hartglass and Penneys (cf.\ \cite{HP1}) and use Pimsner's construction from \cite{Pi} to construct from it a Toeplitz type C*-algebra $\mathcal{T}$ that is $K\!K$-equivalent (by Theorem 4.4 of \cite{Pi}) to the C*-algebra $\mathcal{B}$ that appeared in the present paper. Perhaps this allows one to realize $\TLJcat(\delta)$ as a C*-tensor category of Hilbert $\mathcal{T}$-modules.

It is a long standing open problem to rigorously construct a conformal field 
theory (CFT) from a continuum scaling limit of a statistical mechanical model 
at criticality --- or to construct a CFT from a modular tensor category (cf.\ 
e.g.\ \cite{Pa}, \cite{CE}, \cite{PSa}, \cite{EG}, \cite{J14}, \cite{GS}, 
\cite{ILZ}, \cite{BGJST}). One aspect of this is to derive the category of 
representations of the Virasoro algebra from representations of Temperley--Lieb 
algebras $\mathrm{TL}_{N,N}^0(\delta)$ in the $N\to\infty$ limit in a 
mathematically rigorous way. The representation theory of the Virasoro algebra 
at central charge $c = 1-6/(k+2)(k+3)$, where $k=0,1,2,\ldots$, can be realized 
from the diagonal embedding $\mathsf{su}(2)_{k+1}\subset 
\mathsf{su}(2)_k\oplus\mathsf{su}(2)_1$ via a coset construction (cf.\ 
\cite{GKO}). Here, through the Sugawara construction, the affine Lie algebra 
$\mathsf{su}(2)_k$ has central charge $c_k = 3k/(k+2)$. It is then intriguing 
to ask whether there is a parallel coset construction starting from an 
embedding $\mathcal{B}^{(k)}\otimes\mathcal{B}^{(1)}\subset 
\mathcal{B}^{(k+1)}$, where $\mathcal{B}^{(k)}$ is the algebra constructed as 
above from the Temperley--Lieb category with parameter $\delta = 
2\cos(\pi/(k+2))$, that yields the representation category of the Virasoro 
algebra at central charge $c = 1-6/(k+2)(k+3)$.

\subsection*{Acknowledgements}

We would like to thank the following entities for their generous hospitality while research that would eventually lead to the present paper was carried out: The Isaac Newton Institute for Mathematical Sciences in Cambridge, England, during the research program {\it Operator Algebras: Subfactors and their Applications} in the spring of 2017; the Hausdorff Research Institute for Mathematics in Bonn, Germany, during the trimester program {\it von Neumann Algebras} in the summer of 2016; and the Dublin Institute for Advanced Studies in Dublin, Ireland, during a research visit in December 2017.
We would also like to thank the organizers of the conference {\it Young Mathematicians in C*-Algebras 2018} (YMC*A 2018) in Leuven, Belgium; the organizers of the Operator Algebra Seminar at the University of Copenhagen; and the organizers of the Analysis Seminar at Glasgow University (especially Jamie Gabe) for giving the first named author opportunities to present our work.
This research was supported by Engineering and Physical Sciences Research Council (EPSRC) grants EP/K032208/1 and EP/N022432/1.

\vspace{2mm}
{\footnotesize
\noindent {\sc Andreas N\ae s Aaserud} (ORCID: 0000-0001-9685-1640)\\
Email: andreas.naes.aaserud@gmail.com\\[1mm]
{\sc David E.\ Evans}\\
Email: evansde@cardiff.ac.uk\\[1.5mm]
\noindent {\it Both authors are affiliated with}\\[1mm]
{\sc School of Mathematics, Cardiff
University, Senghennydd Road, Cardiff, CF24 4AG, Wales, UK}
}

\begin{thebibliography}{8}
    \bibitem{AE} Aaserud, A.\ N.\ and Evans, D.\ E.: {\it K-theory of 
    AF-algebras from braided C*-tensor categories}, Rev.\ Math.\ Phys.\ (2020), 
    2030005. Online Ready. \url{https://doi.org/10.1142/S0129055X20300058}
    
    \bibitem{BG} Baki\'{c}, D.\ and Gulja\v{s}, B.: {\it Hilbert C*-modules 
    over C*-algebras of compact operators}, Acta Sci.\ Math.\ (Szeged), vol.\ 
    68 (2002), pp.\ 249--269.
    
    \bibitem{B} Bellet\^{e}te, J.: {\it The fusion rules for the 
    Temperley--Lieb algebra and its dilute generalization}, J.\ Phys.\ A: 
    Math.\ Theor., vol.\ 48 (2015), 395205 (67 pp). 
    \url{https://doi.org/10.1088/1751-8113/48/39/395205}

    \bibitem{BGJST} Bellet\^{e}te, J., Gainutdinov, A.\ M., Jacobsen, J.\ L., 
    Saleur, H.\ and Tavares, T.\ S.: {\it Topological defects in lattice models 
    and affine Temperley--Lieb algebra}, preprint (2018), 
    \href{https://arxiv.org/abs/1811.02551}{\texttt{arXiv:1811.02551}} [hep-th].

    \bibitem{BS} Bellet\^{e}te, J., and Saint-Aubin, Y.: {\it The principal 
    indecomposable modules of the dilute Temperley--Lieb algebra}, J.\ Math.\ 
    Phys., vol.\ 55 (2014), 111706 (41 pp). 
    \url{https://doi.org/10.1063/1.4901546}
    
    \bibitem{Bl} Blackadar, B.: ``$K$-Theory for Operator Algebras'', Second 
    Edition, Mathematical Sciences Research Institute Publications, 1998. 
    (First Edition published by Springer, 1986.)

    \bibitem{BHP} Brothier, A., Hartglass, M.\ and Penneys, D.: {\it Rigid 
    C*-tensor categories of bimodules over interpolated free group factors}, 
    J.\ Math.\ Phys., vol.\ 53 (2012), 123525 (43 pp). 
    \url{https://doi.org/10.1063/1.4769178}

    \bibitem{CE} Connes, A.\ and Evans, D.\ E.: {\it Embeddings of U(1)-current 
    algebras in non-commutative algebras of classical statistical mechanics}, 
    Comm.\ Math.\ Phys., vol.\ 121 (1989), pp.\ 507--525. 

    \bibitem{Coo} Cooper, B.: ``Almost Koszul Duality and Rational Conformal 
    Field Theory'', PhD thesis, University of Bath, 2007.
    
    \bibitem{DR} Doplicher, S.\ and Roberts, J.\ E.: {\it A new duality theory 
    for compact groups}, Invent.\ Math., vol.\ 98 (1989), pp.\ 157--218.
    
    \bibitem{EW} Erlijman, J.\ and Wenzl, H.: {\it Subfactors from braided C* 
    tensor categories}, Pacific J.\ Math., vol.\ 231 (2007), pp.\ 361--399.
    \url{http://dx.doi.org/10.2140/pjm.2007.231.361}

    \bibitem{EG} Evans, D.\ E.\ and Gannon, T.: {\it The exoticness and 
    realisability of twisted Haagerup--Izumi modular data}, Comm.\ Math.\ 
    Phys., vol.\ 307 (2011), pp.\ 463--512. 
    \url{https://doi.org/10.1007/s00220-011-1329-3}

    \bibitem{EP12} Evans, D.\ E.\ and Pugh, M.: {\it The Nakayama automorphism 
    of the almost Calabi--Yau algebras associated to SU(3) modular invariants}, 
    Comm.\ Math.\ Phys., vol.\ 312 (2012), pp.\ 179--222.
    \url{https://doi.org/10.1007/s00220-011-1389-4}
    
    \bibitem{F} Frank, M.: {\it Characterizing C*-algebras of compact operators 
    by generic categorical properties of Hilbert C*-modules}, J.\ $K$-Theory, 
    vol.\ 2 (2008), pp.\ 453--462.
    \url{https://doi.org/10.1017/is008001031jkt035}
    
    \bibitem{FHT1} Freed, D., Hopkins, M.\ and Teleman, C.: {\it Loop groups 
    and 
    twisted K-theory I}, J.\ Topology, vol.\ 4 (2011), pp.\ 737--798.
    \url{https://doi.org/10.1112/jtopol/jtr019}
    
	\bibitem{FHT2} Freed, D., Hopkins, M.\ and Teleman, C.: {\it Loop groups 
	and 
	twisted K-theory II}, J.\ Amer.\ Math.\ Soc., vol.\ 26 (2013), pp.\ 
	595--644.
	\url{https://doi.org/10.1090/S0894-0347-2013-00761-4}

	\bibitem{FHT3} Freed, D., Hopkins, M.\ and Teleman, C.: {\it Loop groups 
	and 
	twisted K-theory III}, Ann.\ Math.\ (2), vol.\ 174 (2011), pp.\ 947--1007.
	\url{http://dx.doi.org/10.4007/annals.2011.174.2.5}

    \bibitem{GS} Gainutdinov, A.\ M.\ and Saleur, H.: {\it Fusion and braiding 
    in finite and affine Temperley--Lieb categories}, preprint (2016), 
    \href{https://arxiv.org/abs/1606.04530}{\texttt{arXiv:1606.04530}} 
    [math.QA].
    
	\bibitem{GV} Gainutdinov, A.\ M.\ and Vasseur, R.: {\it Lattice fusion 
	rules 
	and logarithmic operator product expansions}, Nuclear Phys.\ B, vol.\ 868 
	(2013), pp.\ 223--270.
	\url{https://doi.org/10.1016/j.nuclphysb.2012.11.004}

    \bibitem{GLR} Ghez, P., Lima, R.\ and Roberts, J.\ E.: {\it W*-categories}, 
    Pacific J.\ Math., vol.\ 120 (1985), pp.\ 79--109.
    

    \bibitem{GKO} Goddard, P., Kent, A.\ and Olive, D.: {\it Unitary 
    representations of the Virasoro and super-$\!$Virasoro algebras}, Comm.\ 
    Math.\ 
    Phys., vol.\ 103 (1986), pp.\ 105--119.

    \bibitem{GHJ} Goodman, F.\ M., Harpe, P.\ de la and Jones, V.\ F.\ R.: 
    ``Coxeter Graphs and Towers of Algebras'', Mathematical Sciences Research 
    Institute Publications, Vol.\ 14, Springer, New York, 1989.

    \bibitem{Grimm} Grimm, U.: {\it Dilute algebras and solvable lattice 
    models}, in Proceedings of the Satellite Meeting of STATPHYS-19 on 
    Statistical Models, Yang--Baxter Equation and Related Topics, edited by M.\ 
    L.\ Ge and F.\ Y.\ Wu (World Scientific, Singapore, 1996), pp.\ 110--117.

    \bibitem{GJS} Guionnet, A., Jones, V.\ F.\ R.\ and Shlyakhtenko, D.: {\it 
    Random matrices, free probability, planar
    algebras and subfactors}, in Quanta of Maths: Conference in 
    Honor of Alain Connes, edited by E.\ Blanchard, D.\ Ellwood, M.\ 
    Khalkhali, M.\ Marcolli, H.\ Moscovici and S.\ Popa
    (Clay Math.\ Proc., Vol.\ 11, Amer.\ Math.\ Soc., Providence, RI, 2010), 
    pp.\ 201--239.
    
    \bibitem{HP1} Hartglass, M.\ and Penneys, D.: {\it C*-algebras from planar 
    algebras I: Canonical C*-algebras associated to a planar algebra}, Trans.\ 
    Amer.\ Math.\ Soc., vol.\ 369 (2017), pp.\ 3977--4019.
    \url{https://doi.org/10.1090/tran/6781}
    
    \bibitem{He} Henriques, A.\ G.: {\it What Chern--Simons theory assigns to a 
    point}, Proc.\ Natl.\ Acad.\ Sci.\ U.S.A., vol.\ 114 (2017), pp.\ 
    13418--13423.
    \url{https://doi.org/10.1073/pnas.1711591114}
    
    \bibitem{HiR} Higson, N.\ and Roe, J.: ``Analytic $K$-Homology'', Oxford 
    Mathematical Monographs, Oxford University Press, 2000.
    
    \bibitem{HL} Huang, Y.-Z.\ and Lepowsky, J.: {\it Tensor categories and the 
    mathematics of rational and logarithmic conformal field theories}, J.\ 
    Phys.\ A, vol.\ 46 (2013), 494009 (21 pp).
    \url{https://doi.org/10.1088/1751-8113/46/49/494009}

    \bibitem{ILZ} Iohara, K., Lehrer, G.\ I.\ and Zhang, R.\ B.: {\it 
    Temperley--Lieb at roots of unity, a fusion category and the Jones 
    quotient}, preprint (2017), 
    \href{https://arxiv.org/abs/1707.01196}{\texttt{arXiv:1707.01196}} 
    [math.RT].

    \bibitem{J} Jones, V.\ F.\ R.: {\it Index for subfactors}, Invent.\ Math., 
    vol.\ 72 (1983), pp.\ 1--25.
    
    \bibitem{J85} Jones, V.\ F.\ R.: {\it A polynomial invariant for knots via 
    Von Neumann algebras}, Bull.\ Amer.\ Math.\ Soc., vol.\ 12 (1985), pp.\ 
    103--111.
    
	\bibitem{J2} Jones, V.\ F.\ R.: {\it Planar algebras I}, preprint (1999),
	\href{https://arxiv.org/abs/math/9909027}{\texttt{arXiv:math/9909027}}
	[math.QA].

	\bibitem{J14} Jones, V.\ F.\ R.: {\it Some unitary representations of 
	Thompson's groups F and T}, J.\ Comb.\ Alg., vol.\ 1 (2017), pp.\ 1--44.
	\url{https://doi.org/10.4171/JCA/1-1-1} 

	\bibitem{JR} Jones, V.\ F.\ R.\ and Reznikoff, S.: {\it Hilbert space 
	representations of the annular Temperley--Lieb algebra}, Pacific J.\ Math., 
	vol.\ 228 (2006), pp.\ 219--249.
	\url{http://dx.doi.org/10.2140/pjm.2006.228.219}

    \bibitem{Kassel} Kassel, C.: ``Quantum Groups'', Graduate Texts in 
    Mathematics, Vol.\ 155, Springer, New York, 1995.

    \bibitem{Kau} Kauffman, L.\ H.: {\it State models and the Jones 
    polynomial}, Topology, vol.\ 26 (1987), pp.\ 395--407.

    \bibitem{Lance} Lance, E.\ C.: ``Hilbert C*-Modules'', London Mathematical 
    Society Lecture Note Series, Vol.\ 210, Cambridge University Press, 1995.
    
    \bibitem{LR} Longo, R.\ and Roberts, J.\ E.: {\it A theory of dimension}, 
    $K$-Theory, vol.\ 11 (1997), pp.\ 103--159.
    
    \bibitem{MPS} Morrison, S., Peters, E.\ and Snyder, N.: {\it Skein theory 
    for the $D_{2n}$ planar algebras}, J.\ Pure Appl.\ Algebra, vol.\ 214 
    (2010), pp.\ 117--139.
    \url{https://doi.org/10.1016/j.jpaa.2009.04.010}
    
    \bibitem{NT} Neshveyev, S.\ and Tuset, L.: ``Compact Quantum Groups and 
    Their Representation Categories'', Cours Sp\'{e}cialis\'{e}s, Collection 
    SMF, Soci\'{e}t\'{e} Math\'{e}matique de France, 2013.

    \bibitem{Pa} Pasquier, V.: {\it Two-dimensional critical systems labelled 
    by Dynkin diagrams}, Nuclear Phys.\ B, vol.\ 285 (1987), pp.\ 162--172.

    \bibitem{PSa} Pasquier, V.\ and Saleur, H.: {\it Common structures between 
    finite systems and conformal field theories through quantum groups}, 
    Nuclear Phys.\ B, vol.\ 330 (1990), pp.\ 523--556.

    \bibitem{Pi} Pimsner, M.\ V.: {\it A class of C*-algebras generalizing both 
    Cuntz--Krieger algebras and crossed products by $\mathbb{Z}$}, Fields 
    Inst.\ Commun., vol.\ 12 (1997), pp.\ 189-–212.
    
    \bibitem{Po} Popa, S.: {\it An axiomatization of the lattice of higher 
    relative commutants of a subfactor}, Invent.\ Math., vol.\ 120 (1995), pp.\ 
    427--445.
    
    \bibitem{PV} Popa, S.\ and Vaes, S.: {\it Representation theory for 
    subfactors, $\lambda$-lattices and C*-tensor categories}, Comm.\ Math.\ 
    Phys., vol.\ 340 (2015), pp.\ 1239--1280.
    \url{https://doi.org/10.1007/s00220-015-2442-5}

    \bibitem{PS} Pressley, A.\ and Segal, G.: ``Loop Groups'', Revised Edition, 
    Oxford Mathematical Monographs, Oxford Science Publications, Clarendon 
    Press, 1988. (Originally published in 1986.)
    
    \bibitem{RS} Read, N.\ and Saleur, H.: {\it Enlarged symmetry algebras of 
    spin chains, loop models, and S-matrices}, Nucl.\ Phys.\ B, vol.\ 777 
    (2007), pp.\ 263--315.
    \url{https://doi.org/10.1016/j.nuclphysb.2007.03.007}
    
    \bibitem{RS2} Read, N.\ and Saleur, H.: {\it Associative-algebraic approach 
    to logarithmic conformal field theories}, Nucl.\ Phys.\ B, vol.\ 777 
    (2007), pp.\ 316--351.
    \url{https://doi.org/10.1016/j.nuclphysb.2007.03.033}
    
    \bibitem{RLL} R\o rdam, M., Larsen, F.\ and Laustsen, N.\ J.: ``An 
    Introduction to $K$-theory for $C$*-Algebras'', London Mathematical Society 
    Student Texts, Vol.\ 49, Cambridge University Press, Cambridge, 2000.
    
    \bibitem{TL} Temperley, H.\ N.\ V.\ and Lieb, E.\ H.: {\it Relations 
    between 
    the `percolation' and `colouring' problem and other graph-theoretical 
    problems associated with regular planar lattices: Some exact results for 
    the `percolation' problem}, Proc.\ R.\ Soc.\ A, vol.\ 322 (1971), pp.\ 
    251--280.
    
    \bibitem{Tur} Turaev, V.\ G.: ``Quantum Invariants of Knots and 
    3-Manifolds'', de Gruyter Studies in Mathematics, Vol.\ 18, Walter de 
    Gruyter \& Co., Berlin, 1994.
    
    \bibitem{Wa2} Wassermann, A.\ J.: {\it Operator algebras and conformal 
    field theory. III. Fusion of positive energy representations of 
    $L\mathrm{SU}(N)$ using bounded operators}, Invent.\ Math., vol.\ 133 
    (1998), pp.\ 467--538.
    
    \bibitem{We} Wenzl, H.: {\it On sequences of projections}, C.\ R.\ 
    Math.\ Rep.\ Acad.\ Sci.\ Canada, vol.\ 9 (1987), pp.\ 5--9.
        
    \bibitem{We2} Wenzl, H.: {\it C* tensor categories from quantum 
    groups}, J.\ Amer.\ Math.\ Soc., vol.\ 11 (1998), pp.\ 261--282.
        
    \bibitem{Xu} Xu, F.: {\it Standard $\lambda$-lattices from quantum 
    groups}, Invent.\ Math., vol.\ 134 (1998), pp.\ 455--487.
        
    \bibitem{Yam} Yamagami, S.: {\it A categorical and diagrammatical 
    approach to Temperley--Lieb algebras}, preprint (2004),   
    \href{https://arxiv.org/abs/math/0405267}{\texttt{arXiv:math/0405267}} 
    [math.QA].
        
    \bibitem{Yuan} Yuan, W.: {\it Rigid C*-tensor categories and their 
    realizations as Hilbert C*-bimodules}, Proc.\ Edinburgh Math.\ Soc., vol.\ 
    62 (2019), pp.\ 367--393. \url{https://doi.org/10.1017/S0013091518000524}

\end{thebibliography}
\end{document}